\documentclass[pdflatex,sn-mathphys-ay]{sn-jnl}

\usepackage{graphicx}%
\usepackage{multirow}%
\usepackage{amsmath,amssymb,amsfonts}%
\usepackage{amsthm}%
\usepackage{mathrsfs}%
\usepackage[title]{appendix}%
\usepackage{xcolor}%
\usepackage{textcomp}%
\usepackage{manyfoot}%
\usepackage{booktabs}%
\usepackage{algorithm}%
\usepackage{algorithmicx}%
\usepackage{algpseudocode}%
\usepackage{listings}%

\usepackage{lineno}

\usepackage{float}
\usepackage{caption}
\usepackage{subcaption}
\usepackage{tikz}
\usetikzlibrary{arrows.meta,positioning,calc}
\usepackage{pgfplots}
\pgfplotsset{compat=1.18}
\usepgfplotslibrary{groupplots}
\usepackage{comment}

\theoremstyle{thmstyleone}%
\newtheorem{theorem}{Theorem}[section]

\newtheorem{lemma}{Lemma}[section]

\newtheorem{conjecture}{Conjecture}[section]
\newtheorem{remark}{Remark}[section]
\theoremstyle{remark}

\newcommand{\R}{\mathbb{R}}
\newcommand{\one}{\mathbf{1}}
\newcommand{\diag}{\mathrm{diag}}
\begin{document}
%\begin{linenumbers}

\title[Article Title]{Bounds on $R_0$ and final epidemic size when the next-generation matrix $M$ is only partially known}

\author[1,2]{\fnm{Andrea} \sur{Bizzotto}}\email{andreabizzotto.w@gmail.com}
\equalcont{These authors contributed equally to this work.}
\author[3]{\fnm{Frank} \sur{Ball}}\email{frank.ball@nottingham.ac.uk}
\equalcont{These authors contributed equally to this work.}
\author*[4]{\fnm{Tom} \sur{Britton}}\email{tom.britton@math.su.se}
\equalcont{These authors contributed equally to this work.}

\affil[1]{\orgdiv{Health Emergencies Center}, \orgname{Bruno Kessler Foundation}, \city{Trento}, \country{Italy}}
\affil[2]{\orgdiv{Department of Mathematics}, \orgname{University of Trento}, \city{Trento}, \country{Italy}}
\affil[3]{\orgdiv{School of Mathematical Sciences}, \orgname{University of Nottingham}, \city{Nottingham}, \country{United Kingdom}}
\affil[4]{\orgdiv{Department of Mathematics}, \orgname{Stockholm University}, \city{Stockholm}, \country{Sweden}}

\abstract{\unboldmath We study a multitype SIR epidemic model where individuals are categorized into different types, and where infection spread is characterized by a next-generation matrix $M = \{m_{ij}\}$ with community fractions $\{\pi_j\}$ for the different types of individuals. We analyse two key quantities: the basic reproduction number $R_0$ and the 
final epidemic outcome of the different types $\{ \tau_i\}$. We consider the situation where $M$ is only partly known, through the row sums $\{r_i\}$ or the column sums $\{c_j\}$, and treat both a general $M$ and the special but common situation where $M$ is proportional to a contact matrix satisfying detailed balance. For a general $M$, which is partially observed through $\{r_i\}$ or $\{c_j\}$, we obtain sharp upper and lower bounds of $R_0$ and $\{ \tau_i\}$, but for the case where $M$ satisfies detailed balance the problem is harder:
our obtained bounds for $R_0$ are narrower than the general case but still not sharp, and bounds for the final size are only obtained when there are two types of individual. }

\keywords{Next-generation Matrix, Basic reproduction number, Final epidemic size, Contact matrix}

\maketitle

\section{Introduction}\label{sec1}
A common situation in epidemic modelling is to consider different \emph{types} of individuals, where types may differ in infectivity, susceptibility and/or in how they mix with other types (e.g.\ \cite{DHB13}, Ch 7). A central quantity in such cases is the next-generation matrix $M$, with elements $m_{ij}$ denoting the average (expected) number of infectious contacts that an $i$-individual has with $j$-individuals during its infectious period (note that some authors reverse the indices). In Section \ref{Sec-Prel}, we present known results for multitype epidemics, including the basic reproduction number $R_0$ and the vector of final epidemic sizes $\{\tau_i\}$, and how they depend on $M$.

The outcome of such multitype epidemic models hence requires knowledge about $M$. 
One approach to learn about $M$, started already by \ \cite{M08}, has been Social Contact Studies (SCS). In SCS, sampled individuals are followed during e.g.\ 24 hours and all close contacts with other individuals are registered, containing also covariate information about the contactee including their age. 
In \cite{SCD} a large number of SCS are stored and available for further analysis. The most common type-categorization is age groups, 
and the resulting contact matrix from a SCS then denotes the average number of contacts between different age groups during one day. An infectious disease having transmission probability $p$ (between all types) and mean infectious period $\mu$ (for all types) would then have next-generation matrix $M=p\mu C$, where $C$ is the inferred contact matrix. 

There are however situations where the next-generation matrix is only partially known, and then a range of possible epidemic outcomes may be consistent with the partial knowledge of $M$. One recent such application is in  \cite{BB25}, where they show empirically that individuals differ in contact patterns also in other aspects than age, for example in having different social activity levels, thus distinguishing types both according to age but also other covariate information. In this situation the specific type of sampled individuals is often known, but typically not for the contactees. Consequently, $M$ is only partially known  through the row sums of $M$, corresponding to the overall mean number of contacts that individuals of each type have. 

The situation where $M$ is only partially known is the focus of the present paper: specifically, \emph{how much can we infer about epidemic outcomes when only limited information on contact patterns is available?} We focus on the situations where we either know the row sums $\{ r_i\}$, where $r_i=\sum_jm_{ij}$, or the column sums $\{c_j\}$, where $c_j=\sum_im_{ij}$ of $M$.

In our analyses we derive bounds on the basic reproduction number $R_0$, the final size in each type $\{\tau_i\}$, and the total final size $\bar{\tau}=\sum_{i=1}^{k}\pi_i\tau_i$, where $\pi_i$ is the community fraction of type-$i$ individuals.  These bounds are valid for all $M$ having some fixed given row sums $\{ r_i\}$ or fixed given column sums $\{ c_j\}$. This contrasts with studies such as \cite{CLANCY} and \cite{VIGGO}, which explore the effect of heterogeneity on $\bar{\tau}$ for fixed $R_0>1$. 
A related real-data application of next-generation matrix row and column sums was presented in \cite{Ang}, where column sums were used to quantify the transmission potential of specific age groups, and row sums their overall susceptibility to infection.

The rest of the paper is structured as follows.
In Section \ref{Sec-Prel}, we present known results for multitype epidemics. Our main results are given in Section \ref{Sec-Main}, with the proofs deferred to Section \ref{Sec-Proof}. In Section \ref{Sec-Illustr}, we illustrate our results with some theoretical as well as empirical examples.

\section{Preliminaries and notation}\label{Sec-Prel}
We now present well-known results for deterministic homogeneous and multitype SIR epidemic models. More details can be found in e.g.\ \cite{DHB13} (Section 1.3 and 13.2, respectively) or \cite{MA}. We start with the homogeneous case.
Throughout, bold symbols denote vectors, e.g.\ $\boldsymbol{\tau}=(\tau_1,\dots,\tau_k)^{\top}$ or $\boldsymbol{\pi}=(\pi_1,\dots,\pi_k)^{\top}$, where $\top$ denotes transpose. We write $\boldsymbol{1}$ (or $\boldsymbol{0}$) for the vector in $\mathbb{R}^k$ with all entries equal to $1$ (or $0$). For any vector $\boldsymbol{x}=(x_1,\dots,x_k)^{\top}$, we denote by $D_{\boldsymbol{x}}$ the diagonal matrix with diagonal entries $x_1,\dots,x_k$.

\subsection{$R_0$ and final size for homogeneous case}
An SIR epidemic model assumes that individuals are at first Susceptible to the disease. If they get infected they immediately become Infectious (and have infectious contacts with others), and after some time they recover and become immune for the rest of the study period. Let $m$ denote the average or expected number of infectious contacts made by an infectious individual. (For SIR models $m$ is often expressed as $p\beta \mu$, where $p$ is the transmission probability, $\beta$ is the contact rate and $\mu$ is the average duration of the infectious period). Each infectious contact is with an individual chosen independently and uniformly at random from the community, and only results in infection if the contacted person is still susceptible. 
For this model, the quantity $m$ is called the basic reproduction number and denoted $R_0=m$, and the final fraction infected $\tau$ solves the equation
\begin{equation}
\label{eq:finalsize-homogeneous}
    1-\tau =\rm{e}^{-R_0\tau}.
\end{equation}
The solution $\tau=0$ always exists (corresponding to a minor outbreak), but when $R_0>1$ there is a unique positive solution $\tau\in (0,1)$ determining the final size in the event of major outbreak. For later use, we define 
\begin{equation}   
  t_{\alpha} \text{ as the largest solution to the equation } 1-t=\rm{e}^{-\alpha t}.  \label{t-alpha}
\end{equation}
Note that $t_{\alpha}=0$ for $0\le \alpha \le 1$, whereas $0<t_{\alpha}<1$ for $\alpha >1$, and $t_{\alpha}$ is strictly increasing in $\alpha$ on $(1,\infty )$.

\subsection{$R_0$ and final size for multitype epidemics}\label{subsec2.2}
In the multitype setting we have $k$ different \emph{types} of individuals, and still consider an SIR class of epidemic (the current results in fact hold under more general situations, including SEIR and time varying infectivity settings). The mean number of infectious contacts now depends on which type of individuals are considered. The quantity $m$ hence converts to $M=(m_{ij})$, where $m_{ij}\ge 0$ denotes the average number of infectious contacts with $j$-individuals that one infectious $i$-individual has during its infectious period (again some authors reverse the indices). This quantity depends both on the propensity with which an infectious $i$-individual has contact with separate $j$-individuals, but also on the abundance of $j$-individuals. Hence, an alternative way of writing $m_{ij}$ is $m_{ij}=b_{ij}\pi_j$, where $b_{ij}/n$ ($n$ being the total population size) reflects the per-individual contact probability, and $\pi_j$ denotes the community fraction of $j$-individuals.  We assume that $\pi_j>0$ for all $j$.

The matrix $M$ is often referred to as the next-generation matrix (NGM). The expected number of infections after $r$ generations in a large initially susceptible population is proportional to $M^r$. This quantity grows with $r$ if and only if the largest eigenvalue of $M$ exceeds the value 1. Consequently, the basic reproduction number $R_0$ of the multitype epidemic is defined as the largest eigenvalue of $M$. We assume that $M$ is irreducible, meaning that each type can infect all other types (directly or indirectly through other types) -- otherwise we cannot speak of a common community. One situation considered below, when each type only mixes with itself, does not obey this assumption. The results should then be interpreted as a limit when close to all contacts are with the same type, but a tiny fraction of contacts are with all other types.
The final size vector $\boldsymbol{\tau}=(\tau_1,\dots , \tau_k)^{\top}$ solves the $k$ final size equations:
\begin{align}
\label{eq:finalsize}
1-\tau_j &= {\rm e}^{-\sum_{i=1}^k\pi_i\tau_im_{ij}/\pi_j}  ,\quad j=1,\dots ,k.
\end{align}
Equivalently, in vector-matrix form:
\begin{equation}
\label{eq:finalsize-vector}
    \boldsymbol{1}-\boldsymbol{\tau} = \rm{e}^{-D_{\boldsymbol{\pi}}^{-1}M^{\top}D_{\boldsymbol{\pi}}\;\boldsymbol{\tau}}.
\end{equation}
If $R_0\le 1$, then $(\tau_1,\dots , \tau_k)=(0,\dots , 0)$ is the only solution (minor outbreak) but when $R_0>1$ there is a unique positive solution $\boldsymbol{\tau}$ in the event of major outbreak, more details can be found in e.g. \cite{S-T}, \cite{VIGGO} or \cite{MSW18}. Note that the expression $m_{ij}/\pi_j$ in the exponent of the  $j^{\rm th}$ equation may be replaced by $b_{ij}$ as defined above.

\subsection{Special forms of $M$ and notation}\label{Sec-spec-form}
We now present some special forms for $M$. We consider a $k\times k$ next-generation matrix $M=\{m_{ij}\}$ which may also be written as $B D_{\boldsymbol{\pi}}=\{ b_{ij}\pi_j\}$, so $b_{ij}=m_{ij}/ \pi_j$. 
%where $\pi_j$ denotes the community fraction of $j$-individuals. 
The row sums $\{ r_i\}$  and the column sums $\{ c_j\}$ are defined by $r_i=(M\boldsymbol{1})_i=\sum_j m_{ij}$ and $c_j=(M^\top\boldsymbol{1})_j=\sum_i m_{ij}$, and we define $r_{\max} =\max_i r_i$, $r_{\min} =\min_i r_i$, $c_{\max} =\max_j c_j$, and $c_{\min} =\min_jc_j$.
The next-generation matrix $M$ is most often estimated from a social contact survey, which captures how many contacts different types have, and with whom. On top of this contact heterogeneity one may either assume that all individuals are equally susceptible and equally infectious given a contact, or else that individuals differ also in this latter aspect. In the former, more common, situation where only contact heterogeneity is considered, $M$ satisfies \emph{detailed balance}: $\pi_im_{ij} = \pi_jm_{ji}$ for all $i,j$. This follows because contacts are symmetric, so the total number of contacts from $i$-individuals to $j$-individuals must equal the total number of contacts from $j$-individuals to $i$-individuals. If also susceptibilities and infectivities may vary between individuals, detailed balance need not hold, the only requirement on $M$ is then that all elements are non-negative. We refer to this latter situation as ``general $M$'', as opposed to $M$ satisfying detailed balance.

The vector-matrix final-size equation (\ref{eq:finalsize-vector}) admits a convenient reparameterization when only the column sums or the row sums of \(M\) are known. We factorize \(M\) so that the known information is collected in a diagonal matrix and the unknown mixing is captured by a stochastic matrix. If the column sums \(\boldsymbol{c}=(c_1,\dots,c_k)\) are known, we write
\[
M \;=\; P\,D_{\boldsymbol{c}},\qquad P\ge 0,\ \ P^{\!\top}\one=\one,
\]
where \(D_{\boldsymbol{c}}\) collects the susceptibility load of each  type, and the $j^{\rm th}$ column \(p_{\cdot j}=\sum_{i=1}^k p_{ij}\) attributes type-\(j\) infections back to the sources. Substitution into \eqref{eq:finalsize-vector} gives 
\begin{equation*}\label{eq:A-column}
    \boldsymbol{1}-\boldsymbol{\tau} = \rm{e}^{-D_{\boldsymbol{\pi}}^{-1}\,D_{\boldsymbol{c}}\,P^{\!\top} D_{\boldsymbol{\pi}}\;\boldsymbol{\tau}}.
\end{equation*}
If the row totals \(\boldsymbol{r}=(r_1,\dots,r_k)\) are given, we write
\[
M \;=\; D_{\boldsymbol{r}}\,Q,\qquad Q\ge 0,\ \ Q\one=\one,
\]
where \(D_{\boldsymbol{r}}\) contains the infectious output of each donor type and the \(i\)-th row \(q_{i\cdot}=\sum_{j=1}^k q_{ij}\) distributes transmissions caused by type \(i\) across targets. This yields
\begin{equation*}\label{eq:A-row}
    \boldsymbol{1}-\boldsymbol{\tau} = \rm{e}^{-D_{\boldsymbol{\pi}}^{-1}\,Q^{\!\top} D_{\boldsymbol{r}}\, D_{\boldsymbol{\pi}}\;\boldsymbol{\tau}} .
\end{equation*}

\section{Main results}\label{Sec-Main}
In the current section, we present our results on upper and lower bounds for the basic reproduction number $R_0$, the final size vector $\boldsymbol{\tau}=(\tau_1,\dots , \tau_k)$, and the total final size $\bar{\tau}=\sum_{i=1}^{k}\pi_i\tau_i$, where $\{\pi_i\}$ are the known community fractions of the different types, when all that is known about $M$ are the row or column sums. We consider both a general next-generation matrix $M$, but also the case where $M$ is restricted to satisfy detailed balance. 

We begin by deriving bounds for $R_0$, and then turn to bounds for both $\boldsymbol{\tau}$ and $\bar{\tau}=\sum_{i=1}^{k}\pi_i\tau_i$, noting that some cases remain unresolved. In Table \ref{Tab-summary} we describe where results for the different situations appear. 
\begin{table}[h!]
\centering
    \caption{The table describes the parameter bounds under different information scenarios}
    \bigskip
	\begin{tabular}{| c | c | c |}
		\hline
		Parameter & General $M$ & Detailed balance $M$ 
		\\ \hline
$R_0$&Theorem \ref{thm:R0-rowcol} &Theorem \ref{thm:low-cols-db}
\\ \hline
 $(\tau_1,\dots , \tau_k)$
 & $\{c_j\}$: Theorem \ref{thm:fs-col}
 & $\{c_j\}$: Theorem \ref{thm:DBk=2} ($k=2$), $k>2$ open
\\
& $\{r_i\}$: Theorem \ref{thm:fs-row}
& $\{r_i\}$: Theorem \ref{thm:DBk=2} ($k=2$), $k>2$ open
\\
\hline
$\bar\tau =\displaystyle\sum_{i=1}^k\pi_i \tau_i$
& $\{c_j\}$: Theorem \ref{thm:fs-col}
& $\{c_j\}$: Theorem \ref{thm:DBk=2} ($k=2$), $k>2$ open
\\
& $\{r_i\}$: Theorem \ref{thm:tot-fs-row}
& $\{r_i\}$: Theorem \ref{thm:DBk=2} ($k=2$), $k>2$ open
\\
\hline
	\end{tabular}
    \label{Tab-summary}
\end{table}

Throughout, a bound is said to be sharp if there exists an admissible next-generation matrix $M$ that yields a target quantity ($R_0$, $\tau_i$ or $\bar{\tau}$) that comes arbitrary close to the bound. The admissible class may include reducible $M$s.
If the extremizer, $M$ say, for a given bound is irreducible, then the bound is attained at that $M$.  If the extremizer $M$ is reducible, so some entries are necessarily zero, we may replace those entries by a small 
$\varepsilon>0$ (so that the matrix becomes irreducible), and the outcome (e.g.~$\bar{\tau}$) then tends to the bound as $\varepsilon$ tends to $0$; see Lemma \ref{lem:cont-eps} for more details. Finally, we say that bounds for different types are simultaneously sharp if there exists a single matrix $M$ so that the bound is attained for each type, with the above interpretation if $M$ is reducible.

%Throughout, a bound is said to be sharp if there are next-generation admissible matrices $M$ that come arbitrary close to the bound of the target quantity ($R_0$, $\boldsymbol{\tau}$, or $\bar{\tau}$). The admissible class may include reducible $M$s, in fact whenever an extremizer would be reducible because some entries are zero, replace those zero entries by a small $\varepsilon>0$ (so that the matrix becomes irreducible), compute the corresponding outcome and then let $\varepsilon$ tend to $0$, see Lemma \ref{lem:cont-eps} for more details. Finally, when we say that bounds are simultaneously sharp, we mean that the matrices attaining the smallest/largest outcome for one specific type need be the same matrices attaining the smallest/largest outcome for all other types, so the extrema is achieved by a single $M$ at the same time.

\subsection{Bounds for $R_0$}

\subsubsection{General $M$}
The following theorem is based on a well known linear algebra result, see \cite{CW}.
\begin{theorem}\label{thm:R0-rowcol}
Depending on whether row or columns sums of $M$ are known, the sharp bounds on $R_0$ are given by
    \begin{align*}
    &\quad &r_{\min}\;\le\; &R_0 \;\le\; r_{\max},\\
    &\quad &c_{\min} \;\le\; &R_0 \;\le\;  c_{\max}.
    \end{align*}
Further, if both the row and column sums are known, then
\[
\max \{ r_{\min},\, c_{\min} \} \;\le\; R_0 \;\le\; \min \{ r_{\max},\, c_{\max} \}.
\]
These bounds are not necessarily sharp.
\end{theorem}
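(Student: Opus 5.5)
The bounds come from the classical Perron--Frobenius row-sum (Collatz--Wielandt) inequality, and sharpness from one explicit matrix per bound.

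\textbf{Bounds.} Since $M$ is nonnegative and irreducible, Perron--Frobenius gives a strictly positive left eigenvector $\boldsymbol{u}$ with $\boldsymbol{u}^\top M = R_0\boldsymbol{u}^\top$. Multiply on the right by $\one$:
\[
R_0\sum_i u_i=\sum_i u_i r_i .
\]
So $R_0$ is a convex combination of the row sums $r_i$, with weights $u_i/\sum_l u_l$. Hence $r_{\min}\le R_0\le r_{\max}$. Applying the same argument to $M^\top$, which has the same spectral radius, gives $c_{\min}\le R_0\le c_{\max}$. If $M$ is reducible, I would use a nonnegative Perron vector instead; alternatively, approximate $M$ by $M+\varepsilon\one\one^\top$ and use continuity of the spectral radius (this is the spirit of Lemma \ref{lem:cont-eps}). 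When both sets of sums are known, the combined inequality follows by intersecting the two intervals.

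\textbf{Sharpness for row sums.} Take $M=D_{\boldsymbol r}Q$ with $Q$ stochastic and concentrate the mixing. Let $i^*$ attain $r_{\max}$ and take $q_{i^*i^*}=1$. Then $M$ is reducible and has $r_{\max}$ as an eigenvalue; since $R_0\le r_{\max}$ always, $R_0=r_{\max}$. For the lower bound, the simplest choice is $Q=I$. Then $M=D_{\boldsymbol r}$, whose eigenvalues are the $r_i$, so $R_0=r_{\max}$; this is the wrong end. Instead, let $i_*$ attain $r_{\min}$ and set $q_{ii_*}=1$ for all $i$, so that every type infects only type $i_*$. Then $M=\boldsymbol r\, e_{i_*}^\top$ is rank one with eigenvalue $e_{i_*}^\top\boldsymbol r=r_{\min}$, so $R_0=r_{\min}$. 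Both extremizers are reducible. Perturbing the zero entries by $\varepsilon$ while keeping the row sums fixed (renormalise $Q$) gives irreducible admissible matrices whose $R_0$ tends to the bound, by continuity of eigenvalues. The column-sum case is symmetric: take $P$ with all mass in row $j^*$ ($P^\top\one=\one$), i.e.\ $M=e_{j^*}\boldsymbol c^\top$, giving $R_0=c_{j^*}$. Choosing $j^*$ to minimise or maximise $c_j$ gives $c_{\min}$ or $c_{\max}$. Notice also that the rank-one matrix $\boldsymbol r e_{i}^\top$ serves for the upper bound too, by choosing $i=\arg\max r_i$.

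\textbf{Non-sharpness of the combined bound.} A single counterexample suffices. Take $k=2$, $\boldsymbol r=(1,3)$ and $\boldsymbol c=(3,1)$. The combined bound gives $1\le R_0\le 3$. The constraints force $m_{11}+m_{12}=1$, $m_{11}+m_{21}=3$, $m_{21}+m_{22}=3$ and $m_{12}+m_{22}=1$. Writing $m_{11}=a\in[0,1]$, these give $m_{12}=1-a$, $m_{21}=3-a$ and $m_{22}=a$. Then
\[
R_0=a+\sqrt{(1-a)(3-a)}.
\]
This expression decreases from $\sqrt3$ at $a=0$ to $1$ at $a=1$. The upper bound $3$ is never approached, so the combined bound is not sharp.

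The main obstacle is checking that sharpness in the sense defined by the paper holds. The extremizers are reducible, so I must verify that renormalised $\varepsilon$-perturbations stay admissible and that the spectral radius converges. Continuity of eigenvalues in the matrix entries handles the convergence.
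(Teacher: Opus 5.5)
Your proof is correct and follows the paper's route. The paper cites the classical row-sum inequality (Lemma~2.1 of Xing--Zhou) where you derive it from the Perron left eigenvector, and your sharpness examples $\boldsymbol{r}\,e_{i_*}^\top$ and $e_{j^*}\boldsymbol{c}^\top$ are exactly the paper's rank-one extremizers. Your $2\times 2$ example with $\boldsymbol{r}=(1,3)$, $\boldsymbol{c}=(3,1)$, where $R_0=a+\sqrt{(1-a)(3-a)}\le\sqrt{3}<3$, is a useful addition, since the paper asserts non-sharpness of the combined bound without giving any example.
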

It follows immediately from Theorem~\ref{thm:R0-rowcol} that the uncertainty of $R_0$ reduces the closer to each other all column sums are, and similarly for row sums.

\subsubsection{Detailed balance}
\begin{theorem}\label{thm:low-cols-db}
Depending on whether row and/or columns sums of $M$ (satisfying detailed balance) are known, the bounds on $R_0$ are given by
\begin{align*}
    &\quad &\bar{r}\;\;\le\;\; &R_0 \;\le\;\; r_{\max},\\
    &\quad & \tilde{c}  \;\;\le\;\; &R_0 \;\le\;\; c_{\max},\\
    &\quad &\max \{\bar{r}, \tilde{c} \} \;\le\; &R_0 \;\le\; \min \{ r_{\max},\, c_{\max} \},
    \end{align*}
where
\begin{equation*}
    \bar{r}=\sqrt{\sum_{i=1}^k \pi_i r_i^2}\qquad\textit{and}\qquad\tilde{c}=\sqrt{\frac{\sum_{j=1}^k c_j^2/\pi_j}{\sum_{i=1}^k 1/\pi_i}}
\end{equation*}
Only the upper bounds derived from knowing exclusively the row sums or exclusively the column sums are necessarily sharp.
\end{theorem}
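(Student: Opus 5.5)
The plan is to use the symmetrization that detailed balance provides. Let $D_{\boldsymbol{\pi}}^{1/2}$ denote the diagonal matrix with entries $\sqrt{\pi_i}$, and set $S=D_{\boldsymbol{\pi}}^{1/2}MD_{\boldsymbol{\pi}}^{-1/2}$, so that $s_{ij}=\sqrt{\pi_i}\,m_{ij}/\sqrt{\pi_j}$. The condition $\pi_im_{ij}=\pi_jm_{ji}$ gives $s_{ji}=s_{ij}$, so $S$ is symmetric, nonnegative and similar to $M$. Hence $R_0$ is the largest eigenvalue of $S$. For a symmetric nonnegative matrix, the Perron root equals the spectral radius, which equals the operator norm $\|S\|_2$. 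Therefore
\[
R_0 \;\ge\; \frac{\|S\boldsymbol{x}\|}{\|\boldsymbol{x}\|}\qquad\text{for every nonzero } \boldsymbol{x}\in\R^k .
\]
Both lower bounds should come from a suitable choice of $\boldsymbol{x}$.

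\emph{Row sums.} Take $x_j=\sqrt{\pi_j}$, so that $\|\boldsymbol{x}\|^2=\sum_j\pi_j=1$. Then
\[
(S\boldsymbol{x})_i=\sum_j \sqrt{\pi_i}\,m_{ij}=\sqrt{\pi_i}\,r_i ,
\]
so $\|S\boldsymbol{x}\|^2=\sum_i\pi_ir_i^2$ and hence $R_0\ge\bar r$.

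\emph{Column sums.} Use the symmetry of $S$ and take $x_i=1/\sqrt{\pi_i}$. Then
\[
(S\boldsymbol{x})_j=\sum_i s_{ji}x_i=\sum_i s_{ij}x_i=\sum_i m_{ij}/\sqrt{\pi_j}=c_j/\sqrt{\pi_j} .
\]
This gives $\|S\boldsymbol{x}\|^2=\sum_j c_j^2/\pi_j$ and $\|\boldsymbol{x}\|^2=\sum_i 1/\pi_i$, so $R_0\ge\tilde c$.

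\emph{Upper bounds and the combined case.} The upper bounds $R_0\le r_{\max}$ and $R_0\le c_{\max}$ are inherited directly from Theorem \ref{thm:R0-rowcol}, since detailed-balance matrices form a subclass of general $M$. When both row and column sums are known, all four inequalities hold simultaneously, which gives the third line of the statement.

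\emph{Sharpness of the two upper bounds.} Given $\{r_i\}$, take $M=D_{\boldsymbol r}$, i.e.\ each type mixes only with itself. This matrix trivially satisfies detailed balance and has $R_0=r_{\max}$. It is reducible, so I will perturb it: set $m_{ij}=\varepsilon\pi_j$ for $i\ne j$, which keeps $\pi_im_{ij}=\pi_jm_{ji}$, and reduce the diagonal to $m_{ii}=r_i-\varepsilon(1-\pi_i)$ to preserve the row sums. Letting $\varepsilon\to0$, continuity of the Perron root (Lemma \ref{lem:cont-eps}) gives $R_0\to r_{\max}$. For a diagonal $M$ the column sums equal the row sums, so the same construction, with the diagonal chosen to match the given $\{c_j\}$ and the same balanced off-diagonal perturbation, yields sharpness of $c_{\max}$.

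\emph{The remaining bounds.} The statement claims only that the other bounds are \emph{not necessarily} sharp. To justify this I will exhibit one instance, for $k=2$, in which the lower bound is not attained. With $k=2$, given $r_1,r_2$, detailed balance leaves a single free parameter, say $m_{12}$, with $m_{21}=\pi_1m_{12}/\pi_2$. I will then minimise $R_0$ over the admissible interval of $m_{12}$ and check that the minimum exceeds $\bar r$ whenever $r_1\ne r_2$. A convenient test case is $\pi_1=\pi_2=1/2$ with $r_1=1$, $r_2=3$, where $\bar r=\sqrt5$. The analogous check for $\tilde c$, and for the combined bounds, follows the same pattern.

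I expect the main obstacle to be spotting the right test vectors, in particular that the column-sum bound requires exploiting the symmetry of $S$ and using $x_i=1/\sqrt{\pi_i}$. Once these are found, the remaining steps are routine computations.
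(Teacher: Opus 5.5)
Your argument is the paper's own. You symmetrize to $S=D_{\boldsymbol{\pi}}^{1/2}MD_{\boldsymbol{\pi}}^{-1/2}$ and use $R_0\ge\|S\boldsymbol{x}\|/\|\boldsymbol{x}\|$ with $\boldsymbol{x}=D_{\boldsymbol{\pi}}^{1/2}\mathbf{1}$ and $\boldsymbol{x}=D_{\boldsymbol{\pi}}^{-1/2}\mathbf{1}$. The paper derives that inequality from Rayleigh--Ritz for $S^2$, which is equivalent to your $\rho(S)=\|S\|_2$. You then take the diagonal matrices $D_{\boldsymbol r}$ and $D_{\boldsymbol c}$ for the upper bounds; your $\varepsilon$-perturbation is harmless but unnecessary, since the paper's notion of sharpness admits reducible extremizers.

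The paper never proves the non-sharpness claims, so your $k=2$ example is a real addition, and it works. With $\pi_1=\pi_2$ and $\boldsymbol r=(1,3)$ one gets $R_0(\theta)=2-\theta+\sqrt{1+\theta^2}\ge 1+\sqrt2>\sqrt5=\bar r$. There $M$ is symmetric, so $\boldsymbol c=\boldsymbol r$ and $\tilde c=\bar r$, and the same example also covers $\tilde c$ and the combined lower bound.

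Two corrections to that aside:
\begin{itemize}
\item Drop the claim that the minimum exceeds $\bar r$ ``whenever $r_1\neq r_2$''; it is false. If $\pi_1r_1=\pi_2r_2$, the matrix $\begin{bmatrix}0&r_1\\ r_2&0\end{bmatrix}$ satisfies detailed balance and has $R_0=\sqrt{r_1r_2}=\bar r$. For instance, take $\boldsymbol\pi=(2/3,1/3)$ and $\boldsymbol r=(1,2)$.
\item Your symmetric example cannot show that the combined upper bound $\min\{r_{\max},c_{\max}\}$ may fail to be sharp, because $D_{\boldsymbol r}$ attains it there. Instead use, for example, $\boldsymbol\pi=(2/3,1/3)$ and $M=\begin{bmatrix}1/2&1/2\\ 1&2\end{bmatrix}$. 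Its sums $\boldsymbol r=(1,3)$ and $\boldsymbol c=(3/2,5/2)$ determine $M$ uniquely under detailed balance, and $R_0=(5+\sqrt{17})/4<5/2$.
\end{itemize}
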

In Section \ref{recipe} we describe a numerical procedure to compute sharp lower bounds for any number of types $k$. For the special case $k=2$ with known row sums, an explicit formula for the sharp lower bound is given in Theorem \ref{thm:DBk=2}.

\subsection{Bounds for $\{\tau_i\}$ and $\bar\tau$}
We now derive bounds for the final-size vector $\boldsymbol{\tau}$, defined as the solution of~\eqref{eq:finalsize-vector}, and for the corresponding total final size $\bar{\tau}$.

\subsubsection{General $M$}
\begin{theorem}[Known column sums]\label{thm:fs-col}
For each $i$,
\begin{equation}\label{eq:col-component1}
1-\exp\!\Big(-\frac{c_i}{\pi_i}\,y_*\Big)
\;\le\;
\tau_i
\;\le\;
1-\exp\!\Big(-\frac{c_i}{\pi_i}\,y^*\Big).
\end{equation}
where 
\begin{equation*}
\label{eq:ystar}
    y_*=\min_j\{\pi_j\;t_{c_j}\},
\qquad
y^*=\max_j\{\pi_j\;t_{c_j}\},
\end{equation*}
and $t_{\alpha}$ is defined in (\ref{t-alpha}). The bounds are sharp componentwise but also simultaneously. As a direct consequence, the overall fraction infected $\bar\tau=\sum_i\pi_i\tau_i$ has sharp bounds given by
\[
1-\sum_{i=1}^k \pi_i \exp\!\Big(-\frac{c_i}{\pi_i}\,y_*\Big)
\;\le\;
\bar{\tau}
\;\le\;
1-\sum_{i=1}^k \pi_i \exp\!\Big(-\frac{c_i}{\pi_i}\,y^*\Big).
\]
\end{theorem}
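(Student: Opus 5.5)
The plan is to use the column-sum factorization $M=P\,D_{\boldsymbol c}$ from Section~\ref{Sec-spec-form}, with $P\ge 0$ and $P^{\top}\one=\one$. Write $x_i=\pi_i\tau_i$ and $y_j=\sum_i p_{ij}x_i$. The final size equations \eqref{eq:finalsize} then read
\[
\tau_j = 1-\exp\!\Big(-\frac{c_j}{\pi_j}\,y_j\Big), \qquad j=1,\dots,k .
\]
Since each column of $P$ sums to one, every $y_j$ is a convex combination of $x_1,\dots,x_k$, so
\[
\min_i x_i\le y_j\le \max_i x_i .
\]
Because $u\mapsto 1-\exp(-c_i u/\pi_i)$ is increasing, the componentwise bounds \eqref{eq:col-component1} follow once I show $y_*\le \min_i x_i$ and $\max_i x_i\le y^*$.

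For the upper bound, let $i_0$ attain $\max_i x_i=:m$ and set $t=m/\pi_{i_0}=\tau_{i_0}$. Using $y_{i_0}\le m$ and monotonicity,
\[
t=\tau_{i_0}=1-\exp\!\big(-c_{i_0}y_{i_0}/\pi_{i_0}\big)\le 1-\exp(-c_{i_0}t).
\]
The function $g(t)=1-e^{-ct}-t$ is concave with $g(0)=0$, and it is nonnegative exactly on $[0,t_c]$ (only $\{0\}$ when $c\le 1$). Hence $t\le t_{c_{i_0}}$, which gives $m\le \pi_{i_0}t_{c_{i_0}}\le y^*$.

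The lower bound is symmetric. Take $i_1$ attaining $\min_i x_i$ and get $t\ge 1-e^{-c_{i_1}t}$ with $t=\tau_{i_1}$. Here I use that, in the major-outbreak case with $M$ irreducible, the positive solution has all $\tau_i>0$. Since $g(t)\le 0$ with $t>0$ forces $t\ge t_{c_{i_1}}$, this gives $\min_i x_i\ge \pi_{i_1}t_{c_{i_1}}\ge y_*$. If $R_0\le1$ then $\boldsymbol\tau=\boldsymbol 0$; this case needs a remark that then $y_*=0$, which holds because $R_0\ge c_{\min}$ by Theorem~\ref{thm:R0-rowcol}, so $c_{\min}\le 1$ and $t_{c_{\min}}=0$.

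For sharpness I would choose $j^*$ maximizing (respectively minimizing) $\pi_j t_{c_j}$ and take $P$ with every column equal to $e_{j^*}$, i.e.\ all type-$j$ infections come from type $j^*$. Then $y_j=x_{j^*}$ for every $j$. The $j^*$-equation becomes the homogeneous equation $\tau_{j^*}=1-e^{-c_{j^*}\tau_{j^*}}$, so $x_{j^*}=\pi_{j^*}t_{c_{j^*}}$ equals $y^*$ (resp.\ $y_*$). Consequently all $\tau_i$ hit their bounds at once, and since a single $P$ is used this gives simultaneous sharpness.

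This extremizer is reducible. I would perturb it to $P_\varepsilon=(1-\varepsilon)P+\varepsilon\,\one\one^{\top}/k$, which keeps the column sums, and invoke Lemma~\ref{lem:cont-eps} to pass to the limit. The main obstacle is this limit step: one must check that the perturbed major-outbreak solution converges to the intended positive fixed point of the reducible system rather than to the zero solution (relevant when $t_{c_{j^*}}=0$, where the limit is trivially $0$ anyway). The bounds on $\bar\tau$ then follow by multiplying the componentwise bounds by $\pi_i$ and summing, with sharpness inherited from simultaneous sharpness.
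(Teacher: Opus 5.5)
Your proposal is essentially the paper's proof. It uses the same factorization $M=PD_{\boldsymbol c}$, the same convex-combination sandwich $\min_i x_i\le (P^{\top}\boldsymbol x)_j\le \max_i x_i$ followed by comparison with the scalar fixed point $t_c$, and the same extremal choice $p_{ij}=\delta_{ij^*}$ (the paper's $P^{\max},P^{\min}$). Your explicit treatment of $\boldsymbol\tau=\boldsymbol 0$ via $c_{\min}\le R_0$ is slightly more careful than the paper, whose appeal to Lemma~\ref{lemma:fixedpoint} silently needs $\min_i y_i>0$ (that lemma's equivalence fails at $w=0$ when $\beta=0$, $\alpha>1$).

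The limit step you flag is unnecessary, because the paper counts reducible $M$ as admissible and simply evaluates at $P^{\max},P^{\min}$. If you do perturb, note that the proof of Lemma~\ref{lem:cont-eps} assumes $M_\varepsilon\ge M$, which your $P_\varepsilon$ violates. Instead, use that $M_\varepsilon=\boldsymbol v_\varepsilon\boldsymbol c^{\top}$ with $\boldsymbol v_\varepsilon=(1-\varepsilon)\boldsymbol e_{j^*}+\tfrac{\varepsilon}{k}\one$ is rank one. The system then collapses to the scalar concave equation $s=\sum_i v_{\varepsilon,i}\pi_i\bigl(1-{\rm e}^{-c_i s/\pi_i}\bigr)$, whose maximal root tends to $\pi_{j^*}t_{c_{j^*}}$: the limiting equation has a transversal positive root when $c_{j^*}>1$ and only the root $0$ when $c_{j^*}\le 1$.
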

Theorem~\ref{thm:fs-col} says that each type's final size is contained between two extreme situations. The lower extreme is where all infectivity comes from the type with the lowest per capita exposure, and the upper limit arises when all infectivity comes from the type with the highest per capita exposure.
These interpretations follow from the explicit extremal choices of $M$, see Section \ref{subsec:proof_R0_DB}.

\begin{remark}\label{remark:trivial}
Note that if $c_j\le 1$ for some $j$, then $t_{c_j}=0$, whence $y_*=0$ and the lower bound in~\eqref{eq:col-component1} reduces to $0$ %$0\le \tau_i$
for all $i$. In other words, the lower bound becomes trivial as soon as there exists a type whose mean number of infections received from the whole community is at most one.
Moreover, if all types are equally frequent, i.e.\ $\pi_j= 1/k$ for all $j$, then
$y_*=\frac{1}{k}t_{c_{\min}}$ and $y^*=\frac{1}{k}t_{c_{\max}}$ depend only on the extreme column totals. 
\end{remark}

We now consider the situation where instead of column sums, the row sums $\{r_i\}$ are observed. For this we need to introduce $t_{\alpha,\gamma}$, $\kappa(x)$ and $K_i$. For $\alpha\ge 0$ and $\gamma\ge 0$, let 
\begin{equation}\label{eq:t_alpha-gamma}
t_{\alpha,\gamma}\in[0,1]\quad\text{denote the maximal solution of}\quad t=1-\exp\left(-\alpha t - \gamma\right).
\end{equation}
The above equation can be interpreted as the homogeneous final size equation with basic reproduction number $R_0=\alpha$, and an added constant $\gamma$ representing a cumulative external force of infection. Thus $t_{\alpha,\gamma}$ is the attack rate consistent with intrinsic transmission $\alpha$ and external pressure $\gamma$. We note that if there is no external force of infection ($\gamma =0$) then $t_{\alpha,0}=t_\alpha$.
Let
\begin{equation}\label{eq:kappa}
\kappa(x)\;:=\;
\begin{cases}
0, &\quad if\quad x\le 1,\\[2pt]
x-1-\log x, &\quad if\quad x>1,
\end{cases}
\end{equation}
implying that $\kappa(x)\ge 0$, and set 
\begin{equation}\label{eq:K-i}
    K_i \;:=\; \dfrac{1}{\pi_i}\sum_{j\neq i}\pi_j\,\kappa(r_j)\,.
\end{equation}
Given a general next-generation matrix $M$ with known row sums, we have the following result. 
\begin{theorem}[Known row sums]\label{thm:fs-row}
For each $i$,
\begin{equation}
    0 \;\le\; \tau_i \;\le\; t_{r_i,K_i},
\end{equation}
Moreover, the bounds are sharp componentwise but not simultaneously.
\end{theorem}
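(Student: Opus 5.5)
The plan is to reparametrize $M=D_{\boldsymbol r}Q$ with $Q\ge 0$ and $Q\one=\one$, as in Section~\ref{Sec-spec-form}. In these coordinates the final size equations \eqref{eq:finalsize} read
\[
-\pi_j\log(1-\tau_j)\;=\;\sum_{l=1}^k \pi_l\tau_l r_l q_{lj},\qquad j=1,\dots,k.
\]
The lower bound $\tau_i\ge 0$ is trivial. For the upper bound, the key observation is a variational description of $\kappa$. The function $\tau\mapsto r\tau+\log(1-\tau)$ on $[0,1)$ is concave. For $r>1$ it is maximized at $\tau=1-1/r$, with value $r-1-\log r$. For $r\le 1$ its maximum is $0$, attained at $\tau=0$. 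Hence $\kappa(r)=\sup_{\tau\in[0,1)}\{r\tau+\log(1-\tau)\}$.

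Fix $i$ and sum the final size equations over $j\neq i$. For every $l\ne i$ we have $\sum_{j\ne i}q_{lj}=1-q_{li}$. Dropping the nonnegative contribution of $l=i$ therefore gives
\[
-\sum_{j\ne i}\pi_j\log(1-\tau_j)\;\ge\;\sum_{j\ne i}\pi_j\tau_j r_j(1-q_{ji}).
\]
Rearranging, and then applying the variational formula for $\kappa$ term by term, yields
\[
\sum_{j\neq i}\pi_j\tau_jr_jq_{ji}\;\le\;\sum_{j\ne i}\pi_j\big[r_j\tau_j+\log(1-\tau_j)\big]\;\le\;\pi_iK_i .
\]
Next insert this into the $i$-th final size equation and use $q_{ii}\le 1$:
\[
-\log(1-\tau_i)\;\le\; r_i\tau_i+K_i,\qquad\text{that is,}\qquad \tau_i\le 1-e^{-r_i\tau_i-K_i}.
\]
The function $f(t)=1-e^{-r_it-K_i}-t$ is concave on $[0,1]$ and satisfies $f(1)\le 0$. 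So the set $\{t\in[0,1]:f(t)\ge 0\}$ is an interval whose right endpoint is the maximal root $t_{r_i,K_i}$. This gives $\tau_i\le t_{r_i,K_i}$.

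For sharpness of the upper bound, I would build $Q$ that makes every inequality above tight. Take $q_{ii}=1$. For each $j\ne i$ with $r_j>1$, take $q_{jj}=s_j:=\log r_j/(r_j-1)$, which lies in $(0,1)$ because $\log r<r-1$, and $q_{ji}=1-s_j$. For each $j\ne i$ with $r_j\le1$, take $q_{ji}=1$. All other entries are zero. Then each $j\ne i$ receives infection only from itself, so $\tau_j=t_{r_js_j}$. The choice of $s_j$ gives $r_js_j=-\log(1/r_j)/(1-1/r_j)$, which is exactly the condition for $\tau_j=1-1/r_j$. Consequently $\pi_j\tau_jr_jq_{ji}=\pi_j\kappa(r_j)$ for these $j$. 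For $j$ with $r_j\le 1$ we get $\tau_j=0=\kappa(r_j)$. The equation for type $i$ then becomes exactly $\tau_i=1-e^{-r_i\tau_i-K_i}$. Since $K_i>0$ or $r_i>1$ (otherwise the bound is $0$ and sharpness is trivial), $\tau_i$ is the maximal root $t_{r_i,K_i}$.

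For sharpness of the lower bound, take $q_{ji}=0$ for all $j$, so that no one infects type $i$, and let the other rows be arbitrary. Both extremal matrices are reducible, so I would invoke Lemma~\ref{lem:cont-eps}: perturb the zero entries by $\varepsilon$ and let $\varepsilon\to 0$. I expect the main technical care to be needed in this continuity step, because one has to check that the maximal (major-outbreak) root is the one selected in the limit. Non-simultaneity follows from the construction. Attaining the bound for type $i$ forces the types $j\ne i$ with $r_j>1$ to channel a fraction $1-s_j>0$ of their output to $i$. The corresponding construction for another type $i'$ would require the same rows to direct that mass to $i'$ instead, which is incompatible. One can confirm this on a two-type example in which $r_1,r_2>1$.
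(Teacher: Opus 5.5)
\textbf{The upper-bound chain has a sign error, and the intermediate inequality it produces is false.} Summing the equations over $j\neq i$ gives the identity
\[
-\sum_{j\ne i}\pi_j\log(1-\tau_j)=\sum_{j\ne i}\pi_j\tau_jr_j(1-q_{ji})+\pi_i\tau_ir_i(1-q_{ii}).
\]
Dropping the last term therefore yields $\sum_{j\ne i}\pi_j\tau_jr_jq_{ji}\ge\sum_{j\ne i}\pi_j[r_j\tau_j+\log(1-\tau_j)]$. That is a \emph{lower} bound on the import into type $i$, not the upper bound you display.

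Your claim $\sum_{j\ne i}\pi_j\tau_jr_jq_{ji}\le\pi_iK_i$ does fail. Take $k=2$, $i=1$, $\pi_1=\pi_2=\tfrac12$, $r_1=2$, $r_2=\tfrac12$ and $M=\begin{bmatrix}1&1\\ 1/2&0\end{bmatrix}$, which is irreducible with $R_0=(1+\sqrt3)/2$. Here $K_1=\kappa(1/2)=0$. Yet $\tau_2=1-{\rm e}^{-\tau_1}>0$, so type~1 imports $\pi_2\tau_2r_2q_{21}=\tau_2/4>0$. The discarded term is the pressure type $i$ pushes into the other types, which they can relay back to $i$. It cannot be dropped when bounding the import from above, and using $q_{ii}\le1$ afterwards does not compensate for it.

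\textbf{The fix.} Keep that term. Since $\pi_i\tau_ir_iq_{ii}+\pi_i\tau_ir_i(1-q_{ii})=\pi_ir_i\tau_i$, substitute the identity into the $i$-th equation (equivalently, sum all $k$ equations, which is the balance \eqref{eq:mass-balance}). This gives
\[
-\pi_i\log(1-\tau_i)=\pi_ir_i\tau_i+\sum_{j\ne i}\pi_j\bigl[r_j\tau_j+\log(1-\tau_j)\bigr]\le\pi_i\,(r_i\tau_i+K_i),
\]
with no need for $q_{ii}\le1$. Your fixed-point step then goes through. So does your sharpness construction, which is the paper's $Q^{(i)}$ up to the harmless routing of types with $r_j\le1$.

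\textbf{Comparison with the paper.} Once repaired, your argument is a genuinely different and more complete route. The paper sets $q_{ii}=1$, restricts to matrices in which each $j\ne i$ infects only itself and $i$, and optimizes $(1-q_{jj})t_{q_{jj}r_j}$ via Lemma~\ref{lem:opt-qjj}. It never shows that a general $Q$ cannot export more into type $i$. Your balance-plus-variational argument covers every $Q$ and every solution of the final-size equations.

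\textbf{Non-simultaneity.} The equality case of the repaired chain also gives a real proof here, replacing your heuristic about redirecting the same rows. Equality does not in fact pin down the routing, so that heuristic is not accurate as stated. What equality for type $i$ does force is $\tau_j=1-1/r_j$ for every $j\ne i$ with $r_j>1$. But $1-1/r_j<t_{r_j}\le t_{r_j,K_j}$; this is $g(r_j)>0$ from the proof of Theorem~\ref{thm:DBk=2}. Hence two types with row sums above one can never attain their bounds together. By compactness of the set of row-stochastic matrices, they cannot even approach them together.
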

The value $\kappa(r_j)$ represents an upper bound on the imported force of infection of the type $j$ that can send to others after keeping the minimum needed to sustain its own transmission. 
\begin{remark}
The lower bound is simply obtained with a next-generation matrix $M$ that has the $i^{th}$ column null. The upper bound is obtained when every other type $j \neq i$ directs all of its exportable infections to $i$, while keeping only the minimum needed to sustain its within-type final size.
\end{remark}
Since the bounds in Theorem \ref{thm:fs-row} are not simultaneously sharp, the weighted sum of the separate lower and upper terms do not yield sharp lower and upper bounds for the overall fraction infected, i.e.~the total final size. Instead, we have the following result. 
\begin{theorem}[Known row sums]\label{thm:tot-fs-row}
Let $\bar{\tau}=\sum_{i=1}^k \pi_i\,\tau_i$. Then,
\begin{equation}
\min_{i}\, \bigl\{\pi_i\, t_{r_i}\bigr\}
\;\le\; 
\bar{\tau}
\;\le\; 
\tau^\star,
\end{equation}
where 
\[
\tau^\star=
\begin{cases}
    0& if\quad r_i\leq 1\quad \forall\,i\,\in\{1,...,k\}\\
    \displaystyle\sum_{i=1}^k \pi_i\Bigl[\ 1-\frac{\lambda^\star}{1+\lambda^\star r_i}\ \Bigr]_+& if\quad \exists\,i\,\in\{1,...,k\}\quad s.t.\quad r_i>1
\end{cases}
\]
and $\lambda^\star>0$ is the unique positive solution of
\begin{equation}
\label{eq:lambdastarequ}
\sum_{i=1}^k \pi_i\Bigl[\log\!\frac{1+\lambda r_i}{\lambda}\ \Bigr]_+
\;=\;
\sum_{i=1}^k r_i\,\pi_i \Bigl[\ 1-\frac{\lambda}{1+\lambda r_i}\ \Bigr]_+,
\end{equation}
where $[\cdot]_+$ is the usual positive part function (i.e. $[x]_+=x$ if $x\geq 0$ and $[x]_+=0$ otherwise).
Moreover, the bounds are sharp. 
\end{theorem}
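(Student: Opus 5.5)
The plan is to reduce both bounds to a finite-dimensional optimisation over the final-size vector $\boldsymbol{\tau}$ alone. The matrix $M$ is eliminated using one aggregated identity. Write $M=D_{\boldsymbol r}Q$ and $\ell(x):=-\log(1-x)$. Multiplying the $j$-th equation of \eqref{eq:finalsize} by $\pi_j$ and summing over $j$, the row sums $\sum_j m_{ij}=r_i$ give the necessary condition
\[
G(\boldsymbol\tau):=\sum_{i=1}^k \pi_i\bigl(r_i\tau_i-\ell(\tau_i)\bigr)=0 .
\]
The first step is to show that this condition is also essentially sufficient. Take any $\boldsymbol\tau\in[0,1)^k$ with $G(\boldsymbol\tau)=0$ and $\boldsymbol\tau\neq\boldsymbol 0$. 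We must find a nonnegative flow matrix $F_{ij}=\pi_i\tau_i m_{ij}$ with row sums $\pi_i\tau_i r_i$ and column sums $\pi_j\ell(\tau_j)$. This is a transportation problem whose two marginals have equal totals, so the product coupling works. We then set $m_{ij}=F_{ij}/(\pi_i\tau_i)$ when $\tau_i>0$; when $\tau_i=0$, row $i$ is any nonnegative row with sum $r_i$ sent to types $j$ with $\tau_j=0$ (or to nowhere in the limit).

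The resulting $M$ may be reducible, or $\boldsymbol\tau$ may fail to be the maximal solution. Lemma \ref{lem:cont-eps} handles this: perturb by $\varepsilon$ and pass to the limit. This only affects sharpness statements, not the bounds themselves. So $\bar\tau$ ranges (up to closure) over $\{\sum_i\pi_i\tau_i : G(\boldsymbol\tau)=0,\ \boldsymbol\tau\ne\boldsymbol0\}$.

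\emph{Lower bound.} Put $h_i(x)=\pi_i(r_ix-\ell(x))$. Each $h_i$ is concave, vanishes at $0$ and at $t_{r_i}$, and is $\ge 0$ exactly on $[0,t_{r_i}]$. Let $m=\min_i \pi_i t_{r_i}$ and suppose $\bar\tau<m$ for some feasible nonzero $\boldsymbol\tau$. Then $\tau_i<m/\pi_i\le t_{r_i}$ for every $i$, so each $h_i(\tau_i)\ge0$. Since $\sum_i h_i(\tau_i)=0$, every term vanishes, which forces $\tau_i\in\{0,t_{r_i}\}$ and hence $\tau_i=0$ for all $i$, a contradiction. (If some $r_i\le1$, then $m=0$ and the bound is trivial.) For sharpness, take $\boldsymbol\tau$ with a single nonzero entry $\tau_{i_0}=t_{r_{i_0}}$ at the minimising index, i.e.\ the matrix in which every type infects only type $i_0$.

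\emph{Upper bound.} If all $r_i\le1$, then $R_0\le r_{\max}\le1$ by Theorem \ref{thm:R0-rowcol}, so $\bar\tau=0$. Otherwise we maximise the linear functional $\sum_i\pi_i\tau_i$ over $\{G\ge0\}$. This set is convex because $\ell$ is convex. A linear maximiser lies on the boundary $G=0$, and it is nonzero because some $r_i>1$ makes $\{G>0\}$ nonempty. Slater's condition holds, so the KKT conditions are necessary and sufficient. With multiplier $\lambda>0$ and the constraints $\tau_i\ge0$, they give $1+\lambda r_i-\lambda/(1-\tau_i)=0$ whenever $\tau_i>0$, and hence
\[
\tau_i=\Bigl[1-\tfrac{\lambda}{1+\lambda r_i}\Bigr]_+ .
\]
Substituting into $G=0$ gives exactly \eqref{eq:lambdastarequ}, and the optimal value is $\tau^\star$.

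The remaining work, which I expect to be the main technical obstacle, is showing that \eqref{eq:lambdastarequ} has a unique positive root. This uses the positive-part bookkeeping: type $i$ is active iff $\lambda<1/(1-r_i)$ when $r_i<1$, and always when $r_i\ge 1$. I would define $\phi(\lambda)=G(\boldsymbol\tau(\lambda))$ and show the following.
\begin{itemize}
\item[(i)] As $\lambda\to\infty$, the active set is exactly the types with $r_i\ge1$, and $\tau_i(\lambda)\to 1-1/r_i$, which is $0$ when $r_i=1$. At these values $h_i(1-1/r_i)=\pi_i\kappa(r_i)\ge0$, and this is strictly positive for each $r_i>1$; such an $i$ exists by assumption. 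Hence $\phi(\lambda)>0$ for large $\lambda$.
\item[(ii)] As $\lambda\downarrow0$, all types are active and $\tau_i(\lambda)\to1$, so $\ell(\tau_i)\to\infty$ while $r_i\tau_i$ stays bounded. Hence $\phi(\lambda)\to-\infty$.
\item[(iii)] $\phi$ is strictly monotone in between. Each $\tau_i(\lambda)$ is nonincreasing in $\lambda$, and $h_i$ is increasing on the relevant range $\tau_i\le 1-1/r_i$ (on this range $h_i'=\pi_i(r_i-1/(1-\tau_i))\ge0$). So every term of $\phi$ increases as $\lambda$ increases, strictly for the active types.
\end{itemize}
Finally, the maximising $\boldsymbol\tau$ is realised by the transportation construction above, which gives sharpness.
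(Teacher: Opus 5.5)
Your argument is correct in substance and reaches the bounds by a different, in places cleaner, route than the paper.

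\textbf{Comparison with the paper.} The paper works in the incoming-pressure variables $u_i=\pi_i\ell(\tau_i)$. There the objective $\sum_i\zeta_i(u_i)$ is concave and the closure condition is $\sum_i u_i=\sum_i r_i\zeta_i(u_i)$. The lower bound comes from minimising over the fixed-$U$ simplex (Lemma~\ref{lemma:bounds}, Bauer's principle) and then imposing closure. The upper bound comes from KKT conditions in $\boldsymbol u$ (Lemma~\ref{lem:kkt}). You stay in $\boldsymbol\tau$, where the same closure identity reads $G(\boldsymbol\tau)=0$ with $G$ concave and a linear objective. This buys two things.
\begin{itemize}
\item Relaxing to $\{G\ge 0\}$ makes the upper-bound problem genuinely convex, so Slater's condition and KKT sufficiency are legitimately available. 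The paper's constraint function $h(\boldsymbol u)=\sum_i u_i-\sum_i r_i\zeta_i(u_i)$ is convex, not affine, so $\{h=0\}$ is not a convex set. Its convex-programming step really needs your relaxation, which is $\{h\le 0\}$ in $\boldsymbol u$.
\item Your sign argument for $h_i$ on $[0,t_{r_i}]$ gives a short, self-contained lower-bound proof. It uses $\boldsymbol\tau\neq\boldsymbol 0$ (from $R_0\ge r_{\min}>1$) explicitly. The paper's route through $\mathcal U^+$ also has to exclude $\boldsymbol u=\boldsymbol 0$, which belongs to $\mathcal U^+$.
\end{itemize}
Your product-coupling realisation is the paper's rank-one $Q^\star=\mathbf 1(\boldsymbol q^\star)^{\top}$. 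Your uniqueness argument for $\lambda^\star$ is the paper's $F'(\lambda)<0$ computation with the sign reversed.

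\textbf{Two local repairs.}

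First, the monotonicity justification in (iii) has a sign error. For an active type, $1/(1-\tau_i)=r_i+1/\lambda$, so $\tau_i(\lambda)>1-1/r_i$. Thus $\tau_i$ lies where $h_i$ is \emph{decreasing}, with $h_i'(\tau_i)=-\pi_i/\lambda$. As written, ``$\tau_i$ nonincreasing and $h_i$ increasing'' would make $\phi$ nonincreasing, contradicting your (i) and (ii). The correct computation is $\frac{d}{d\lambda}h_i(\tau_i(\lambda))=\pi_i/\bigl(\lambda(1+\lambda r_i)^2\bigr)>0$.

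Second, Lemma~\ref{lem:cont-eps} only gives continuity of the \emph{maximal} fixed point as $M_\varepsilon\to M$ from above. It cannot rescue a constructed $\boldsymbol\tau$ that is a non-maximal fixed point, so maximality must be checked directly. Use $m_{ij}=r_iq_j$ in \emph{every} row. Then any fixed point satisfies $\pi_j\ell(\tau_j')=q_js$, where $s=\sum_i\pi_ir_i\tau_i'$ solves $s=\sum_j\pi_jr_j\bigl(1-e^{-q_js/\pi_j}\bigr)$. The right side is concave in $s$, so there is only one positive root, and $\boldsymbol\tau$ is maximal. For the same reason, rows with $\tau_i=0$ should be sent into the support of $\boldsymbol\tau$, not to zero types. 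If some zero type had $r_i>1$, your prescription could create a supercritical block among the zero types.

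Neither point damages your sharpness claims. At the optimum every zero type has $r_i<1$, because $\lambda^\star(1-r_i)\ge 1$. Your lower-bound matrix $\boldsymbol r\,\boldsymbol e_{i_0}^{\top}$ already sends everything into the support.
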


The upper bound is achieved by a rank-one (separable) mixing matrix
(see the proof in Section \ref{subsubsectio.proof.thm:tot-fs-row}). In this case
all infectious individuals send their contacts to types according to the same
probability vector $\boldsymbol{q}=\boldsymbol{u}^\star/\sum_{i=1}^k{u_i}^\star\propto \boldsymbol{u}^\star$, where
\[
u^\star_i=\pi_i\Bigl[\log\!\frac{1+\lambda^\star r_i}{\lambda^\star}\ \Bigr]_+
\]
can be interpreted as the incoming infectious pressure on type $i$, i.e. the cumulative infectious hazard received by type $i$ over the course of the epidemic. It means that most infectious contacts go to the
types where they are most likely to generate new infections. Final sizes,
however, depend only on the received infection pressures $\boldsymbol{u}^\star$, not on which
types send them: any mixing pattern that produces the same $\boldsymbol{u}^\star$ yields the
same value of $\bar{\tau}$, see Lemma \ref{lem:familyofQ}. 
\begin{remark}
If all types have the same row sum $r_i=R_0>1$, then $u_i^\star$ is proportional to the type sizes $\pi_i$.
Consequently, the one of the optimal targeting distribution satisfies $\boldsymbol{q}\propto\boldsymbol{u}^\star\propto\boldsymbol{\pi}$, hence $\boldsymbol{q}=\boldsymbol{\pi}$. In this case the upper bound is attained by the rank-one mixing matrix $M=R_0\,\boldsymbol{1}\boldsymbol{\pi}^\top$, meaning that each infectious individual generates $R_0$ secondary infections and allocates them across types proportionally to their population shares.
The upper bound then simplifies to a common final size
\[
\bar\tau^\star = 1-\frac{\lambda^\star}{1+\lambda^\star R_0}.
\]
Moreover, \eqref{eq:lambdastarequ} implies
\[
\log\!\frac{1+\lambda^\star R_0}{\lambda^\star}= 1-\frac{\lambda^\star}{1+\lambda^\star R_0}.
\]
Substituting $\frac{\lambda^\star}{1+\lambda^\star R_0}=1-\bar{\tau}^{\star}$ and rearranging yields $=1-\bar{\tau}^{\star}={\rm e}^{-R_0\bar{\tau}^{\star}}$, whence, $\bar{\tau}^{\star}=t_{R_0}$.
\end{remark}

\begin{remark}
Note that here each term in the sum $\pi_i\!\left[1-\frac{\lambda^\star}{1+\lambda^\star r_i}\right]_+$ can be interpreted as the maximum contribution of type $i$ to the total final size under the mixing pattern constructed in the proof. Therefore, if there is a rare type with a large $r_i$, its contribution
\[
\pi_i\!\left[1-\frac{\lambda^\star}{1+\lambda^\star r_i}\right]
\]
can be close to $\pi_i$ (corresponding to an almost complete outbreak within that type), even when $\pi_i$ is small.
\end{remark}

\subsubsection{Detailed balance}
Finally, we consider bounds on $\{\tau_i\}$ and $\bar\tau$ given the row (or column) sums when the next-generation matrix $M$ satisfies detailed balance. 
Of course, the corresponding bounds for a general $M$ in Theorems~\ref{thm:fs-col}-\ref{thm:tot-fs-row} above continue to hold, but typically they are no longer sharp under the extra constraint imposed by detailed balance.

Substituting the detailed-balance condition $\pi_{i}m_{ij}=\pi_j m_{ji}$ into the final-size  equations~\eqref{eq:finalsize} yields
\begin{align}
\label{eq:finalsizeDB}
1-\tau_j &= {\rm e}^{-\sum_{i=1}^k \tau_i m_{ji}}  ,\quad j=1,\dots ,k.
\end{align}
\begin{remark}
Suppose that all types have the same row sum $r_i=R_0$.  Then it is easily seen that $\tau_j=t_{R_0}$, $j=1,\dots,k$, satisfies~\eqref{eq:finalsizeDB}.  Thus, under detailed balance, the final size is fully determined and $\bar{\tau}=t_{R_0}$ if the row sums all equal $R_0$. 
\end{remark}

Under detailed balance with $k$ types $M$ has $k(k+1)/2$ free parameters, and if the $k$ row sums are given, there remain $k(k-1)/2$ free parameters. Hence, the case $k=2$ has one free parameter, making analytic progress easier, and we give a fairly complete analysis of this case with fixed row sums in Theorem~\ref{thm:DBk=2} below.  (A broadly similar analysis is available for the case $k=2$ with fixed column sums, though the details are more complicated.  We omit this case as it is of less practical relevance.)  

For $k>2$ the number of free parameters is $k(k-1)/2 \ge 3$ so the analysis is much harder, for this situation we only have a partial result and a general conjecture.  
\label{subsec:DBM}

\medskip
\textbf{The case} $\boldsymbol{k=2}.$\\
Suppose the row sums $r_1$ and $r_2$ are fixed, $r_1<r_2$, $\pi_1 \in (0,1)$, and let $\pi_2=1-\pi_1$.  The next-generation matrix takes the form
\[
M=M(\theta)=\begin{bmatrix}r_1-\theta & \theta \\ \varphi\,\theta  & r_2-\varphi\,\theta\end{bmatrix}, \qquad 0 \le \theta \le \theta_{\max}=\min(r_1,\varphi^{-1}r_2),
\]
where $\varphi =\frac{\pi_1}{\pi_2}\in(0, \infty)$. For $\theta \in [0, \theta_{\max}]$, define $R_0(\theta), \tau_1(\theta), \tau_2(\theta)$ and $\bar{\tau}(\theta)$ in the obvious fashion.  We show in Theorem \ref{thm:DBk=2}(a) below that $R_0(\theta)$
is strictly decreasing.  If $R_0(0)=r_2 \le 1$, then $\tau_1(\theta) =\tau_2(\theta)=\bar{\tau}(\theta)=0$ for all $\theta \in [0, \theta_{\max}]$.  In the following we assume that $r_2>1$.  If $R_0(\theta_{\max}) < 1$, let $\theta_*$ be the unique solution of $R_0(\theta)=1$, otherwise let $\theta_*=\theta_{\max}$.  Note that if $R_0(\theta_{\max}) \le 1$ then $\tau_1(\theta) =\tau_2(\theta)=\bar{\tau}(\theta)=0$ for all $\theta \in [\theta_*, \theta_{\max}]$.

\medskip
Before stating the main result, some more notation is required. Let
\begin{equation}
\label{equ:fidef}
f_i(\theta)=\frac{1}{1-\tau_i(\theta)}-r_i,\qquad\theta \in [0, \theta_{\max}],\quad  i=1,2.
\end{equation}
Let
\[
g(r)=\frac{1}{1-t_r}-r, \qquad r \in (0, \infty).
\]
Thus, $g(r)=1-r$ for $r \in (0,1]$.  We show in Section \ref{subsec:thm:DBk=2:proof} that $g$ is strictly increasing on $[1, \infty)$ and that for $r \in (0,1)$, there exists a unique $\hat{r}_2(r) \in (1, \infty)$ such that $g(\hat{r}_2(r))=g(r)$.  Let $A=\{(r_1, r_2) \in \mathbb{R}_+^2:0<r_1<1, 1 <r_2 \le \hat{r}_2(r_1)\}$.  We show in Section \ref{subsec:thm:DBk=2:proof} that if $(r_1, r_2) \in A^c$ and $f_1(\theta_*)>f_2(\theta_*)$, then there is a unique $\theta_1 \in (0, \theta_*)$ such that $f_1(\theta_1)=f_2(\theta_1)$.
\begin{theorem}
\label{thm:DBk=2}
Assume detailed balance, $k=2$, and $r_2>1$ for parts (b) and (c). 
\begin{itemize} 
\item[(a)] $R_0(\theta)$ is strictly decreasing on $[0, \theta_{\max}]$.  Further,
\begin{equation}
\label{equ:DBk=2R0boundsa}
\frac{1}{2}\left[r_2-\varphi r_1+\sqrt{(r_2-\varphi r_1)^2+4 \varphi r_1^2}\right] \le R_0 \le r_2 \quad \text{if} \quad\varphi r_1 \le r_2,
\end{equation}
and
\begin{equation}
\label{equ:DBk=2R0boundsb}
\frac{1}{2}\varphi^{-1}\left[\varphi r_1 -r_2 +\sqrt{(\varphi r_1-r_2)^2+4 \varphi r_2^2}\right] \le R_0 \le r_2 \quad \text{if} \quad\varphi r_1 \ge r_2.
\end{equation}
Moreover, these bounds are sharp.
\item[(b)]
\begin{itemize}
\item[(i)] $\tau_2(\theta)$ is strictly decreasing on $[0,\theta_*]$. 
\item[(ii)] $\tau_1(\theta)$ is strictly increasing on $[0,\theta_{\max}]$ if $f_2(\theta_{\max}) \ge 0$.  Otherwise, there exists a unique $\theta_0 \in (0, \theta_*)$ such that $f_2(\theta_0)=0$, and $\tau_1(\theta)$ has a unique turning point in $(0, \theta_*)$, which is a maximum, at $\theta_0$.
\end{itemize}
\item[(c)]
\begin{itemize}
\item[(i)] Suppose $(r_1, r_2) \in A$.  Then $\bar{\tau}(\theta)$ is strictly decreasing on $[0,\theta_*]$. 
\item[(ii)] Suppose $(r_1, r_2) \in A^c$.  Then $\bar{\tau}(\theta)$ is strictly increasing on $[0,\theta_{\max}]$ if $f_1(\theta_*) \le  f_2(\theta_*)$.  Otherwise, $\bar{\tau}(\theta)$ has a unique turning point in $(0, \theta_*)$, which is a maximum, at $\theta_1$.
\item[(iii)]  Let $\hat{\theta} = argmax(\bar{\tau}(\theta))$.  If $\hat{\theta} \in [0, \theta_*)$, or $\hat{\theta} =  
\theta_*$ and $f_1(\theta_*)=f_2(\theta_*)$, then $\bar{\tau}(\hat{\theta})$ attains the upper bound for $\bar{\tau}$ given in Theorem \ref{thm:tot-fs-row}.
\end{itemize}
\end{itemize}
\end{theorem}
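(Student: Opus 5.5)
Part (a) comes from the explicit $2\times 2$ eigenvalue formula. Write $T(\theta)=\operatorname{tr}M(\theta)=r_1+r_2-(1+\varphi)\theta$ and $\Delta(\theta)=\det M(\theta)=r_1r_2-\theta(r_2+\varphi r_1)$, so that $R_0(\theta)=\tfrac12\bigl[T+\sqrt{T^2-4\Delta}\bigr]$. I would not differentiate this formula directly. Instead, I would use the characteristic equation $\chi(\lambda,\theta)=\lambda^2-T\lambda+\Delta=0$ and implicit differentiation:
\[
R_0'(\theta)=-\frac{\partial_\theta\chi}{\partial_\lambda\chi}=\frac{-(1+\varphi)\lambda+(r_2+\varphi r_1)}{2\lambda-T}\Big|_{\lambda=R_0}.
\]
The denominator is positive, since $2R_0-T=\sqrt{T^2-4\Delta}>0$ by irreducibility for $\theta>0$. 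For the numerator, note that $R_0$ is a weighted average of the rows through the Perron vector. Moreover, $R_0\ge\bar r$ by Theorem \ref{thm:low-cols-db}. I would then check that $(1+\varphi)R_0>r_2+\varphi r_1$. This is equivalent to $\chi\bigl((r_2+\varphi r_1)/(1+\varphi),\theta\bigr)<0$, and a short computation should reduce that to $-\varphi(r_2-r_1)^2/(1+\varphi)^2<0$, which holds; the $\theta$-terms should cancel. So $R_0$ is strictly decreasing. The bounds in (a) then follow: the upper bound is $R_0(0)=r_2$, and the lower bound is $R_0(\theta_{\max})$ with $\theta_{\max}=r_1$ or $\varphi^{-1}r_2$, evaluated by the quadratic formula. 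Sharpness holds because both endpoints are admissible matrices, using Lemma \ref{lem:cont-eps} at $\theta=0$.

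For (b), I would rewrite the detailed-balance final-size equations (\ref{eq:finalsizeDB}) as $-\log(1-\tau_1)=(r_1-\theta)\tau_1+\theta\tau_2$ and $-\log(1-\tau_2)=\varphi\theta\tau_1+(r_2-\varphi\theta)\tau_2$. Differentiating in $\theta$ on $(0,\theta_*)$, where $\tau>0$ is smooth by the implicit function theorem, gives the linear system
\[
\bigl(D_{1/(1-\boldsymbol\tau)}-M\bigr)\boldsymbol\tau'=(\tau_2-\tau_1)\begin{pmatrix}1\\ -\varphi\end{pmatrix}.
\]
The diagonal entries of $D_{1/(1-\boldsymbol\tau)}-M$ are $f_i+\theta$ or $f_2+\varphi\theta$, which explains the role of $f_i$. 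This matrix is a nonsingular M-matrix at the positive fixed point, because the Jacobian of the fixed-point map has spectral radius less than $1$ there. Hence its inverse is entrywise nonnegative, its determinant is positive, and Cramer's rule gives explicit signs. I would also show that $\tau_1<\tau_2$, since $r_1<r_2$: comparing the two equations, the type with the larger row sum has the larger exponent. Then
\[
\tau_2'\propto(\tau_2-\tau_1)\bigl[-\varphi(f_1+\theta)-\varphi\theta\bigr]\cdot(\ldots),
\]
and the sign should reduce to $\tau_2'<0$ using $f_1>0$ or a companion inequality. Likewise $\tau_1'\propto(\tau_2-\tau_1)f_2$, so the sign of $\tau_1'$ is the sign of $f_2$. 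To complete (b)(ii), I would show that $f_2(\theta)=1/(1-\tau_2)-r_2$ is strictly decreasing, because $\tau_2$ decreases, so $f_2$ has at most one zero $\theta_0$.

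For (c), I would compute $\bar\tau'=\pi_1\tau_1'+\pi_2\tau_2'$ from the same Cramer solution. Using $\pi_1=\varphi\pi_2$, I expect the factor $f_1-f_2$ to appear, so that $\operatorname{sign}\bar\tau'=\operatorname{sign}(f_1-f_2)$. The monotonicity of $f_1-f_2$ in $\theta$ then gives the unique crossing $\theta_1$. At $\theta=0$ we have $f_i(0)=g(r_i)$, so the definition of $A$ decides the sign there, which gives (c)(i) versus (c)(ii).

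For (c)(iii), at an interior maximum we have $f_1=f_2=:-1/\lambda$ with $\lambda>0$. Hence $1-\tau_i=\lambda/(1+\lambda r_i)$, which matches the form of $\tau^\star$ in Theorem \ref{thm:tot-fs-row}. I would verify that (\ref{eq:lambdastarequ}) holds by summing $\pi_i\cdot(-\log(1-\tau_i))$ and using detailed balance, which gives $\sum_i\pi_i r_i\tau_i$. Uniqueness of $\lambda^\star$ then identifies $\bar\tau(\hat\theta)=\tau^\star$.

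I expect the main obstacle to be the sign bookkeeping in (b)(i) and the monotonicity of $f_1-f_2$, which is needed for uniqueness of $\theta_1$. My first attempt there would be to show that $f_1$ is increasing, because $\tau_1$ increases wherever $f_2>0$, while $f_2$ is decreasing, and to handle the region $f_2<0$ separately.
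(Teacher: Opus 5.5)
Your part (a) is correct and tidier than the paper's, which differentiates the explicit root formula and then squares an inequality. With implicit differentiation everything reduces to $\chi(x,\theta)=(x-r_1)(x-r_2)<0$ at $x=(r_2+\varphi r_1)/(1+\varphi)=\pi_1r_1+\pi_2r_2$, and the $\theta$-terms do cancel. It is even quicker to note $R_0\ge\bar r>\pi_1r_1+\pi_2r_2$, by Theorem~\ref{thm:low-cols-db} and strict Jensen. Solving the differentiated system by Cramer's rule is also a legitimate alternative to the paper's route (the balance identity $\varphi f_1\tau_1'+f_2\tau_2'=0$, signs at $0$, and continuity). Here $\det J>0$ because $J=D_{1/(1-\boldsymbol{\tau})}(I-D_{1-\boldsymbol{\tau}}M)$ and $D_{1-\boldsymbol{\tau}}M\boldsymbol{\tau}<\boldsymbol{\tau}$ componentwise at the positive fixed point. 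Your route also justifies differentiability, which the paper takes for granted.

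The gap is in (b)(i). Take $J=\begin{bmatrix} f_1+\theta & -\theta\\ -\varphi\theta & f_2+\varphi\theta\end{bmatrix}$, so $\det J=f_1f_2+\varphi\theta f_1+\theta f_2$. Cramer's rule gives
\[
\tau_1'=\frac{(\tau_2-\tau_1)f_2}{\det J},\qquad
\tau_2'=\frac{(\tau_2-\tau_1)\bigl[-\varphi(f_1+\theta)+\varphi\theta\bigr]}{\det J}=-\frac{\varphi(\tau_2-\tau_1)f_1}{\det J}.
\]
Your bracket has the wrong sign on $\varphi\theta$, and in fact $\tau_2'<0$ is \emph{equivalent} to $f_1>0$. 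The M-matrix property only yields $f_1+\theta>0$, which is not enough; your sign slip made it look sufficient. You never prove $f_1>0$ on $(0,\theta_*)$. The same fact is also what rules out a crossing of $f_1$ and $f_2$ where $f_2\le 0$ in (c)(ii), and what makes $\lambda=1/f_1(\theta_1)$ positive in (c)(iii). The missing step, as in the paper, goes as follows. Since $\tau_2>\tau_1$ and $\theta>0$, we have $-\log(1-\tau_1)=r_1\tau_1+\theta(\tau_2-\tau_1)>r_1\tau_1$, so $\tau_1>t_{r_1}$ by Lemma~\ref{lemma:fixedpoint}. Hence $f_1>g(r_1)\ge 0$ when $r_1\ge 1$ (using ${\rm e}^{r-1}\ge r$), while $f_1\ge 1-r_1>0$ trivially when $r_1<1$.

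Several further errors would derail (c) if taken literally.

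Using $\pi_2\varphi=\pi_1$, $\bar\tau'=\pi_1(\tau_2-\tau_1)(f_2-f_1)/\det J$. So $\bar\tau'$ has the sign of $f_2-f_1$, not $f_1-f_2$. With your sign, region $A$ would give an increasing $\bar\tau$, contradicting (c)(i).

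In (c)(iii) you need $f_1=f_2=1/\lambda$, not $-1/\lambda$, to get $1-\tau_i=\lambda/(1+\lambda r_i)$.

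You also assume without proof that $g>0$ and that $g$ is strictly increasing on $(1,\infty)$. These are needed for $f_2(0)>0$ (existence of $\theta_0$), and to read the sign of $f_1(0)-f_2(0)=g(r_1)-g(r_2)$ off membership in $A$. The paper proves them via $t_r'=t_r(1-t_r)/(1-r(1-t_r))$.

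Finally, in region $A$ the maximiser is $\hat\theta=0$, possibly with $f_1(0)>f_2(0)$. An interior-maximum argument, yours or the paper's, does not cover this case. There you can argue as in Theorem~\ref{thm:DBk>2} with active set $\{2\}$ and $c=f_2(0)$, noting $c+r_1\le f_1(0)+r_1=1$.
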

\begin{remark}
Sharp bounds for $\tau_1, \tau_2$ and $\bar{\tau}$ are obtained easily from Theorem \ref{thm:DBk=2}.  In particular, $\min(\bar{\tau}(0), \bar{\tau}(\theta_{\max}))$ is a sharp lower bound for $\bar{\tau}$; and  $\max(\bar{\tau}(0), \bar{\tau}(\theta_{\max}))$ is a sharp upper bound for $\bar{\tau}$ unless $(r_1, r_2)$ belongs to the region where $\bar{\tau}(\theta)$ has an internal maximum, in which the upper bound in Theorem \ref{thm:tot-fs-row} is also sharp under detailed balance ($k=2$).
\end{remark}

\begin{figure}[H]
    \centering
    \includegraphics[scale=0.65]{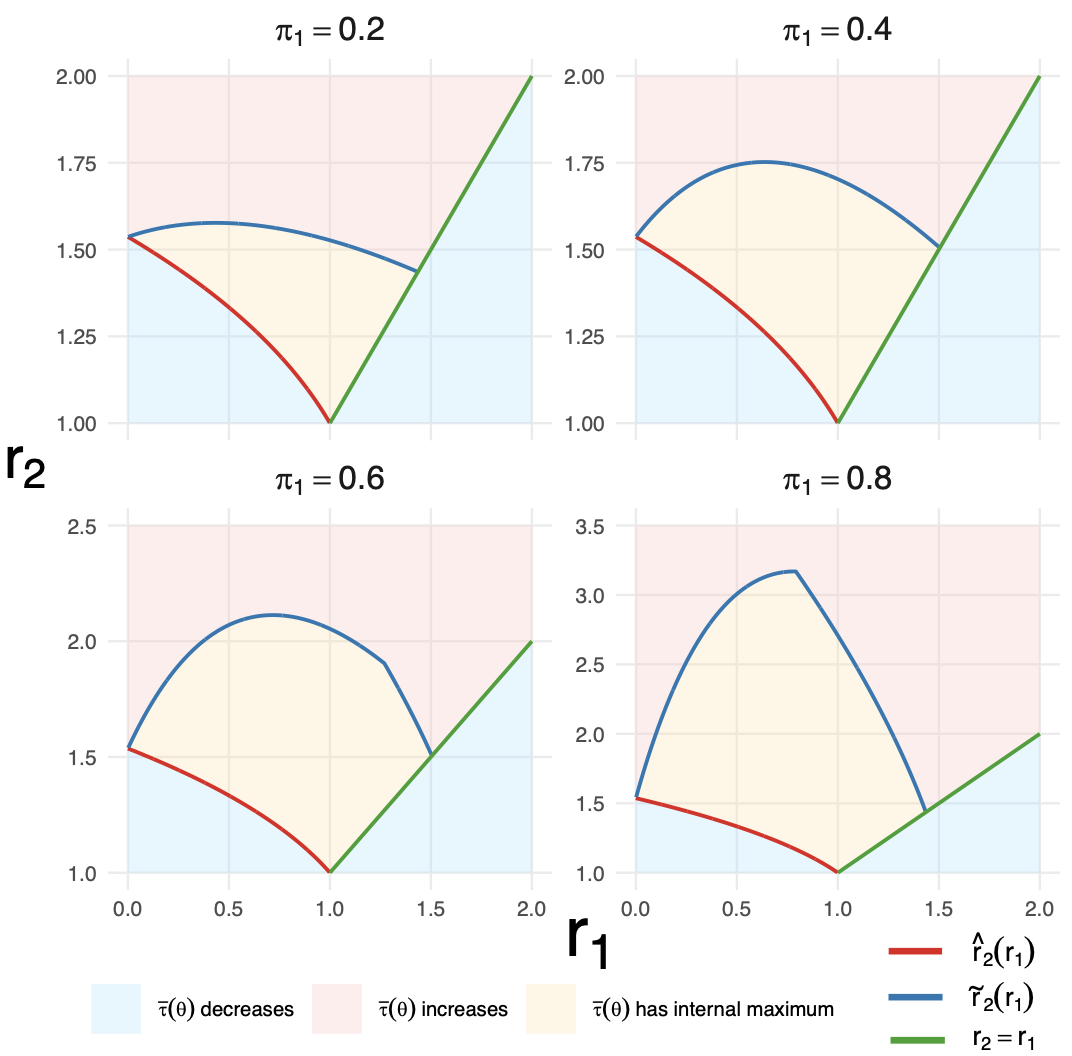}
    \caption{Plots showing regions in the $(r_1,r_2)$ plane of different behaviours of $\bar{\tau}(\theta)$ for different values of $\pi_1$.  The solid red curve is $\hat{r}_2(r_1)$ and the blue curve is $\tilde{r}_2(r_1)$.  The green line is $r_2=r_1$.  Recall that $\bar{\tau}(\theta)$ is identically zero if $r_2 \le 1$ and the analysis assumes $r_2>r_1$.}
\label{fig:2}
\end{figure}

A shortcoming in Theorem \ref{thm:DBk=2} of (c)(ii) is that it does not indicate for which $(r_1, r_2)$ we have $f_1(\theta_*) \le f_2(\theta_*)$.  Numerical investigation shows that there exists $r_1^*=r_1^*(\varphi)$ such that if $r_1 \ge r_1^*$, then $f_1(\theta_*) \le f_2(\theta_*)$ for all $r_2>r_1$.  Further, if $r_1 < r_1^*$, then there exists $\tilde{r}_2(r_1)$, such that $f_1(\theta_*) > f_2(\theta_*)$ if $r_2 \in (\hat{r}_2(r_1), \tilde{r}_2(r_1))$ and $f_1(\theta_*) \le f_2(\theta_*)$ if $r_2 \ge \tilde{r}_2(r_1)$.

Plots showing regions in the $(r_1,r_2)$ plane of different behaviors of $\bar{\tau}(\theta)$ for different values of $\pi_1$ are shown in Figure \ref{fig:2}.  (Note that the scale on the $y-$axis differs between some subplots.)  The kink in the graphs of $\tilde{r}_2(r_1)$ when $\pi_1>1/2$ is when $\theta_{\max}=\min(r_1, \varphi^{-1}r_2)$ changes from $r_1$ to $\varphi^{-1}r_2$.

\medskip
\textbf{The case} $\boldsymbol{k>2}.$\\
\label{subsubsec:DBMk>2}
Let $\mathcal{M}_{\boldsymbol{\pi},\boldsymbol{r}}^{\rm DB}$ be the set of all next-generation matrices $M$ that satisfy detailed balance and are consistent with $\boldsymbol{\pi}$ and $\boldsymbol{r}$. We will suppress the dependence of $\boldsymbol{\tau}$, $\bar{\tau}$, and $f_1,f_2,\dots,f_k$ on $M$ to ease the notation.
\begin{theorem}
\label{thm:DBk>2}
Suppose that there exists $M \in \mathcal{M}_{\boldsymbol{\pi},\boldsymbol{r}}^{\rm DB}$, $A \subseteq \{1,2,\dots,k\}$ and $c \in(0,\infty)$ such that
\begin{itemize}
 \item[(i)]    $\tau_i >0$ and $f_i=c$ for all $i \in A$;
 \item[(ii)]  $\tau_i=0$ and $c+r_i\le 1$ for all $i \notin A$.
 \end{itemize}
Then $\bar{\tau}$ attains the upper bound for $\bar{\tau}$ given in Theorem \ref{thm:tot-fs-row}.
\end{theorem}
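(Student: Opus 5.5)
The plan is to recast the upper bound $\tau^\star$ of Theorem~\ref{thm:tot-fs-row} as the value of a concave optimisation problem. Then (i) and (ii) turn out to be exactly the KKT conditions of that problem at the final-size vector of $M$. Since the problem is concave, KKT conditions are sufficient, so this $\boldsymbol{\tau}$ is a maximiser and $\bar\tau=\tau^\star$.

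\emph{Step 1: the relaxation.} For any $M$ with row sums $\boldsymbol{r}$, take logarithms in \eqref{eq:finalsize}, multiply the $j$th equation by $\pi_j$ and sum over $j$. This gives
\[
\sum_{j}\pi_j\bigl(-\log(1-\tau_j)\bigr)=\sum_i \pi_i\tau_i\sum_j m_{ij}=\sum_i\pi_i r_i\tau_i .
\]
So every achievable $\boldsymbol{\tau}$ lies in the set
\[
\mathcal{F}=\Bigl\{\boldsymbol{\tau}\in[0,1)^k:\ \Phi(\boldsymbol{\tau}):=\sum_i\pi_i\bigl(r_i\tau_i+\log(1-\tau_i)\bigr)\ge 0\Bigr\}.
\]
I will take from (or re-derive along the lines of) the proof of Theorem~\ref{thm:tot-fs-row} that $\tau^\star=\max_{\mathcal{F}}\sum_i\pi_i\tau_i$. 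The argument is that any point of $\mathcal{F}$ with equality is realised by a rank-one $Q$ with targeting vector proportional to $u_j=-\pi_j\log(1-\tau_j)$; this is the role of Lemma~\ref{lem:familyofQ}. I would also check that the maximiser is characterised by $1-\tau_i=\lambda^\star/(1+\lambda^\star r_i)$ on $\{i:\tau_i>0\}$ and $\tau_i=0$ otherwise, where $\lambda^\star$ solves \eqref{eq:lambdastarequ}.

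\emph{Step 2: KKT sufficiency.} The objective is linear and $\Phi$ is concave, since $\log(1-\tau_i)$ is concave. So $\mathcal{F}$ is convex, and a feasible point satisfying KKT with a multiplier $\mu>0$ is a global maximiser. Form the Lagrangian $\sum_i\pi_i\tau_i+\mu\Phi(\boldsymbol{\tau})$ with the sign constraints $\tau_i\ge0$. Its stationarity conditions read
\[
1+\mu r_i-\frac{\mu}{1-\tau_i}=0 \ \text{ if } \tau_i>0,\qquad 1+\mu r_i-\mu\le 0 \ \text{ if } \tau_i=0 .
\]
Setting $\mu=1/c$, the first becomes $\frac{1}{1-\tau_i}-r_i=c$, that is $f_i=c$. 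The second becomes $c+r_i\le 1$. These are precisely hypotheses (i) and (ii).

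\emph{Step 3: checking the hypotheses give a KKT point.} The final-size vector $\boldsymbol{\tau}$ of the given $M\in\mathcal{M}^{\rm DB}_{\boldsymbol{\pi},\boldsymbol{r}}$ satisfies $\Phi(\boldsymbol{\tau})=0$ by Step~1, so it is feasible with the constraint active. Complementary slackness therefore holds, and $\mu=1/c>0$. Hence $\boldsymbol{\tau}$ maximises $\sum_i\pi_i\tau_i$ over $\mathcal{F}$, and $\bar\tau=\tau^\star$. For consistency with the explicit formula, put $\lambda=1/c$. Then $1-\tau_i=\lambda/(1+\lambda r_i)$ on $A$, while on $A^c$ we have $(1+\lambda r_i)/\lambda\le1$, so the positive parts vanish. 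The identity $\Phi(\boldsymbol{\tau})=0$ is then exactly \eqref{eq:lambdastarequ}, and its uniqueness gives $\lambda=\lambda^\star$.

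Detailed balance is only used to guarantee that such an $M$ is admissible; the optimisation argument itself does not need it. The main obstacle is Step~1's identification of $\tau^\star$ with the relaxed maximum $\max_{\mathcal{F}}\sum_i\pi_i\tau_i$. If the proof of Theorem~\ref{thm:tot-fs-row} does not already establish this, I would prove it directly: on the set $\{\Phi\ge0\}$ the maximiser makes the constraint active, and such a point is realised by the rank-one construction. Since I only need the inequality $\bar\tau(M')\le\max_{\mathcal{F}}$ for every admissible $M'$, together with attainment at our $M$, Step~1's inclusion plus Step~2 in fact suffice.
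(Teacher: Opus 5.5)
Your proof is correct. Its main line differs from the paper's, although your closing ``consistency'' paragraph in Step~3 is in fact the paper's entire proof. The paper sets $\lambda^*=1/c$ and uses (i) and (ii) to evaluate the positive parts in \eqref{eq:lambdastarequ} as $-\pi_i\log(1-\tau_i)$ and $\pi_i\tau_i$ for every $i$. It then invokes the mass balance $-\sum_j\pi_j\log(1-\tau_j)=\sum_i\pi_i r_i\tau_i$, which is your $\Phi(\boldsymbol{\tau})=0$, obtained as you do from \eqref{eq:finalsize} without detailed balance. It concludes from uniqueness of the root of \eqref{eq:lambdastarequ} that $\tau^\star=\bar\tau$. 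So it never argues optimality itself; it takes that from Theorem~\ref{thm:tot-fs-row}.

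Your primary argument proves optimality directly. In $\boldsymbol{\tau}$-coordinates the relaxed set $\{\Phi\ge 0\}$ is convex, and (i)--(ii) are exactly the KKT conditions with multiplier $\mu=1/c$. KKT sufficiency then gives $\bar\tau(M)\ge\bar\tau(M')$ for every $M'$ with row sums $\boldsymbol{r}$, whether or not $M'$ satisfies detailed balance. To identify this value with the closed form $\tau^\star$, the $\lambda=1/c$ substitution is the self-contained route, since it needs only uniqueness of $\lambda^\star$. So the identity $\tau^\star=\max_{\mathcal{F}}$ that you planned to import in Step~1 is not needed. The alternative via sharpness also works, but uses Theorem~\ref{thm:tot-fs-row} as a black box.

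The paper's route buys brevity. Yours explains why (i)--(ii) are the natural hypotheses. It is also independent of the optimality argument in Lemma~\ref{lem:kkt}, which treats $h(\boldsymbol{u})=0$ as an affine constraint although $h$ is nonlinear. With $u_i=-\pi_i\log(1-\tau_i)$, one has $h(\boldsymbol{u})=-\Phi(\boldsymbol{\tau})$, so your set $\{\Phi\ge 0\}$ is exactly $\{h\le 0\}$. That is the genuinely convex relaxation on which KKT sufficiency is valid.
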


The following conjecture, a converse to Theorem~\ref{thm:DBk>2}, is based on numerical evidence.  It holds when $k=2$ (see (c)(ii) in Theorem \ref{thm:DBk=2}) and a similar argument shows that it holds when 
$k>2$ and $A$ has cardinality 2.

\begin{conjecture}\label{conj:DBk>2}
For $A \subseteq \{1,2,\dots,k\}$, let $\mathcal{M}_{\boldsymbol{\pi},\boldsymbol{r}}^{\rm DB}(A)$ be the set of $M \in \mathcal{M}_{\boldsymbol{\pi},\boldsymbol{r}}^{\rm DB}$ satisfying $m_{ij}=0$ for all $i, j \notin A$. Suppose that $M={\rm argmax}_{\mathcal{M}_{\boldsymbol{\pi},\boldsymbol{r}}^{\rm DB}}\bar{\tau}$ 
belongs to the interior of $\mathcal{M}_{\boldsymbol{\pi},\boldsymbol{r}}^{\rm DB}(A)$ for some $A  \subseteq \{1,2,\dots,k\}$.  
Then (i) and (ii) in Theorem~\ref{thm:DBk>2} hold and $\bar{\tau}$ attains the upper bound for $\bar{\tau}$ given in Theorem \ref{thm:tot-fs-row}.
\end{conjecture}

\section{Numerical Illustrations}\label{Sec-Illustr}
In the current section we illustrate our findings by showing the range of possible $R_0$ and final sizes $\bar \tau$ for given row sums in a simple 2-type example, and for a Belgian social contact study where individuals are classified according to age as well as social activity, and where the row sums are known but where the distinct elements of $M$ are unknown.

\subsection{An example of bounds for $R_0$ and final size $\bar\tau$ when $k=2$}\label{Bounds-k-2}
We start by presenting the bounds of $R_0$ as well as the overall final size $\bar\tau$ numerically, when there are only two types of individuals with population frequencies $\pi_1$ and $\pi_2=1-\pi_1$. We assume that we only observe the row sums of $M$: $r_1$ and $r_2$, the average number of infectious contacts that each type has, respectively. In Figure \ref{fig:tauboundr} we show the bounds for $R_0$ and $¶\bar \tau$ when $\pi_1=0.2,\ \pi_2=0.8,\ r_1=1.6$ and varying $r_2$. We consider both the case where the next-generation matrix $M$ is of arbitrary form, and when it satisfies detailed balance. 

For $R_0$ we give the bounds valid for general $k$ (the upper being sharp but the lower not), but we also include the sharp lower bound from Theorem \ref{thm:DBk=2} valid only when $k=2$. As for the overall final size $\bar \tau$, results are only available for $k=2$ so these are the ones plotted.

\begin{figure}[H]
\resizebox{13.25cm}{!}{\includegraphics{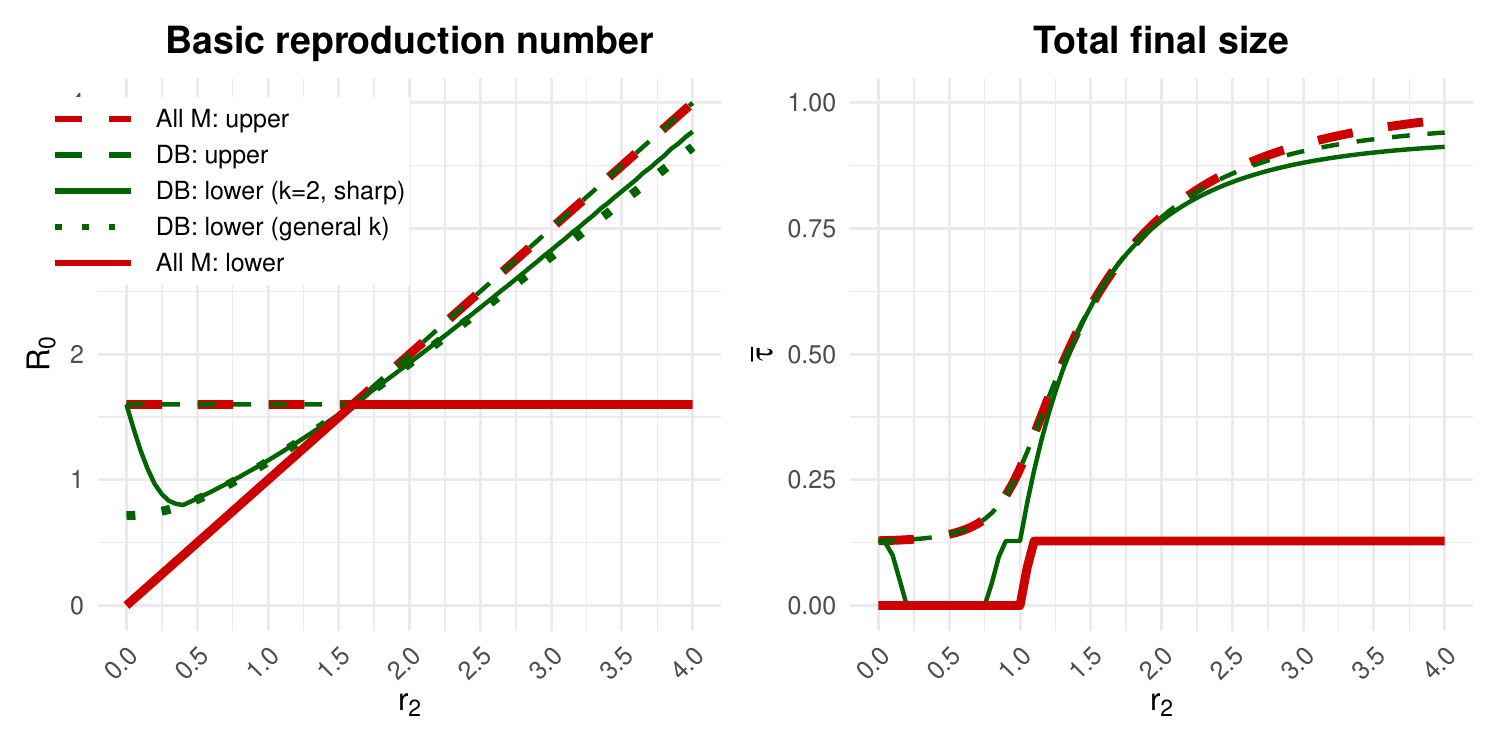}}
\caption{Plots showing lower and upper bounds for $R_0$ and $\bar{\tau}$, both under generic $M$ and under detailed balance restriction, knowing only the row sums, when there are $k=2$ types of individuals, $\pi_1=0.2$ and $r_1=1.6$.}
\label{fig:tauboundr}
\end{figure}
We note that, for large $r_2$ (highly super-critical case) the upper and lower bounds are very similar if $M$ satisfies detailed balance (for both $R_0$ and $\bar\tau$), suggesting that under detailed balance the row sums alone provide fairly accurate estimates of these quantities. However, if detailed balance does not hold, the lower bound is much lower than the upper (the lower bound is achieved when both types infect only type 1 individuals). 

Perhaps more surprising is the fact that the lower bound for both $R_0$ and $\bar\tau$ under detailed balance is actually \emph{decreasing} in $r_2$ for small $r_2$: if $r_2=0.5$ the lower bound is actually \emph{smaller} than if $r_2=0$. An explanation of this is that when $r_2=0$, all contacts that 1-individuals make must be with other 1-individuals, making type 1 super critical on its own. But when $r_2=0.5$, some of the contacts 1-individuals make can be with 2-individuals making type 1 sub-critical, with type 2 remaining sub-critical (note that type 2 makes up 80\% of the community).

The green dotted curve in the left panel represents the general lower bound under the detailed balance assumption given in Theorem~\ref{thm:low-cols-db}. It is obtained as a simple square root of the weighted average of the square of the row sums (with weights given by the population fractions), hence it is available in closed form and can be computed for any $k$. Although this bound is not sharp in general, it seems to provide a surprisingly good approximation of the sharp lower bound, except for small values of $r_2$.

\subsection{Bounds for $R_0$ and final size $\bar\tau$ for a Belgian Social contact study}\label{Bounds-empirical}

In \cite{BB25}, a Belgian social contact study (\cite{WIL}) is reanalysed. It is shown that social activity plays a greater role for the potential of epidemic outbreaks than the age-classes do. To simplify our illustration we here merge age classes to only two: up to 18 years old and 19+ years, comprising 21.50\% and 78.50\% of the population, respectively. The age-specific contact matrix (obeying detailed balance) is given by 
\begin{equation}
\label{equ:C2by2}
C=\begin{bmatrix} 10.6560& 9.0423\\2.4766 & 15.4571\end{bmatrix},
\end{equation}
where the elements reflect mean number of contacts per day between the two types. It was shown in \cite{BB25} that the variation in number of contacts was substantial within age-groups. To capture this variation each age-group is divided into two halves: the socially active and the socially less active, thus creating four types of individuals: $CS, CL, AS, AL$ children and adults being either social or less social. A problem with this improved model is however that we no longer observe all elements in the extended contact matrix $C$, which is now $4\times 4$. In each age-group we know who are the socially active and who are less socially active, and we know the mean number of contacts with the two age-groups each of them have. But, we do not know if the socially active group have most contacts with other socially active in the corresponding age-group (referred to as assortative mixing with respect to social activity), or if they have many of their contacts with less socially active people (referred to as disassortative). Consequently, we are in the situation that we know all four row sums of $C$, but not the separate elements of $C$. 
Assuming a mean infectious period of one day, the next-generation matrix is therefore written as $M=pC$, where $p$ is the transmission probability (for other mean infectious periods the scale should be multiplied accordingly). 

In Figure \ref{fig:Belgian-bound} we have plotted the lower and upper bounds of $R_0$ and the overall final epidemic size $\bar\tau$ (using the results of Section \ref{Sec-Main}) as functions of the transmission probability $p$, for fixed row sums making no assumption on $M$ (red) and assuming $M$ satisfies detailed balance (green). The third (yellow) bounds are obtained by taking into account that we also know the sums of some pairs of elements of $M$: the mean number of contacts made by any given type with any given age group.
Further details concerning numerical calculation of the bounds may be found in Appendix~\ref{secA2}.
%we here also know pairs of elements of $M$: the number contacts with socially as well as less active individuals of between each age classes.

\begin{figure}[H]
\resizebox{14cm}{!}{\includegraphics{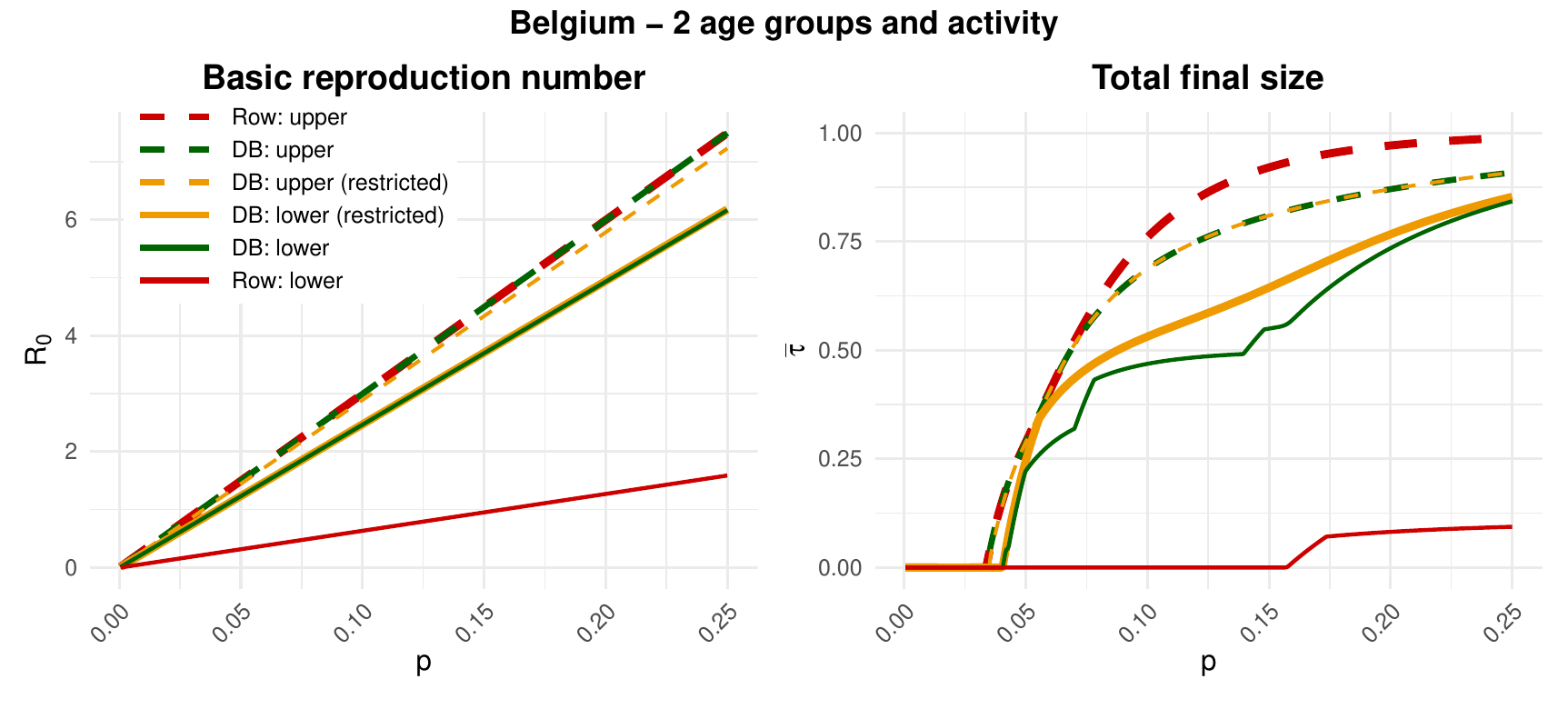}}
\caption{Different bounds for $R_0$ and the final size $\bar\tau$ in the Belgian study taking heterogeneity within age-groups into account.}
\label{fig:Belgian-bound}
\end{figure}

In the figure it is seen that the range of possible outcomes is quite large, suggesting that obtaining knowledge about degree of assortativity with respect to social activity will improve precision in conclusions. In \cite{BB25} it was shown that neglecting heterogeneity in social activity reduces the dimension of $M$ with all elements of $M$ known, this giving exact expressions for $R_0$ and $\boldsymbol{\tau}$, but instead the curves are erroneous when in fact there is heterogeneity within age-groups.

%Figure \ref{fig:Belgian-bound} shows plots of the bounds for $R_0$ and $\bar\tau$ as a function of the transmission probability $p$. The bounds labelled `Row: upper' and `Row: lower' are obtained using Theorem \ref{thm:R0-rowcol} for $R_0$ and Theorem~\ref{thm:tot-fs-row} for $\bar{\tau}$, given the four row sums of $M$.  Those labelled `DB: upper' and `DB: lower' assume also that $M$ satisfies detailed balance.  Those labelled `DB: upper (restricted)' and `DB: upper (restricted)' assume that for each of the four types of individuals, the average number of contacts made with children and with adults are both known.

We observe also that the bounds tighten as more information is incorporated, since additional constraints shrink the set of admissible $M$; this effect is particularly notable when imposing the detailed balance condition. Moreover, since $M=pC$, the spectral radius ($R_0$) is linear in $p$, as shown in the left panel. By contrast, for the final epidemic size, the detailed balance lower bound exhibits visible kinks (i.e., changes in slope) as a function of transmission probability. This behaviour may be explained by the fact that depending on $p$, the minimising matrix $M$ within the admissible set may switch from one extremal configuration to another, corresponding to a different allocation of transmission (i.e. the row sums) across types while respecting the constraints.

\section{Proofs}\label{Sec-Proof}

\subsection{$R_0$ bounds for general $M$}
\textbf{Proof of Theorem \ref{thm:R0-rowcol}}
\begin{proof}
By Lemma~2.1 in \cite{CW},
\[
r_{\min}\ \le\ \rho(M)\ \le\ r_{\max}.
\]
Since $\rho(M)=\rho(M^\top)=R_0$, the same argument applied to $M^\top$ gives
\[
c_{\min}\ \le\ \rho(M)\ \le\ c_{\max}.
\]
Hence
\[
\max\{r_{\min},c_{\min}\}\ \le\ \rho(M)\ \le\ \min\{r_{\max},c_{\max}\},
\]
which proves the bounds.

To see that they are sharp, fix the row sums $\boldsymbol{r}$. Let $h\in\displaystyle\arg\max_{i} \{r_i\}$ and $l\in\displaystyle\arg\min_{i} \{r_i\}$, and set
\[
M^{(r,\max)}:=\boldsymbol{r}\,\boldsymbol{e}_h^\top,
\qquad
M^{(r,\min)}:=\boldsymbol{r}\,\boldsymbol{e}_l^\top,
\]
where $\boldsymbol{e}_j$ is the $j$-th canonical vector. Both matrices have row sums $\boldsymbol{r}$ and are rank-one, with unique nonzero eigenvalues
\[
\rho\!\left(M^{(r,\max)}\right)=r_{\max},
\qquad
\rho\!\left(M^{(r,\min)}\right)=r_{\min}.
\]
For fixed column sums $\boldsymbol{c}$, the lower and upper bounds are respectively attained by rank-one matrices 
\[
M^{(c,\max)}:=\boldsymbol{e}_h\,\boldsymbol{c}^\top,
\qquad
M^{(c,\min)}:=\boldsymbol{e}_l\,\boldsymbol{c}^\top.
\]
where we choose $h\in\displaystyle\arg\max_{j} \{c_j\}$ and $l\in\displaystyle\arg\min_{j}\{c_j\}$.
\end{proof}

\subsection{$R_0$ bounds for Detailed Balance}
\label{subsec:proof_R0_DB}

\subsubsection*{Proof of Theorem \ref{thm:low-cols-db}}
\begin{proof}
Let \(S=D_{\boldsymbol{\pi}}^{1/2} M D_{\boldsymbol{\pi}}^{-1/2}\). By \ref{lemma:db1}, the matrix \(S\) is symmetric; hence, $S^2$ is also symmetric. We define the spectral radius of $S^2$ by applying the Rayleigh-Ritz formula in Theorem 4.2.2 of \cite{RR}
\begin{equation}\label{eq.rr}
\rho(S^2)=\max_{\boldsymbol{x}\neq \boldsymbol{0}}\frac{\boldsymbol{x}^\top S^2 \boldsymbol{x}}{\boldsymbol{x}^\top \boldsymbol{x}}.
\end{equation}
Since \(\rho(S^2)=\rho(M^2)=R_0^2\) and because $S$ is symmetric, \eqref{eq.rr} implies that for any $\boldsymbol{x}\neq \boldsymbol{0}$ 
\begin{equation}
\label{eq_R0}
R_0\ge \sqrt{\displaystyle\frac{\boldsymbol{\boldsymbol{x}}^\top S^2 \boldsymbol{x}}{\boldsymbol{x}^\top \boldsymbol{x}}}=\frac{\|S\boldsymbol{x}\|_2}{\|\boldsymbol{x}\|_2}.
\end{equation}
Choosing \(\boldsymbol{x}=D_{\boldsymbol{\pi}}^{-1/2}\mathbf{1}\),
\[
S\boldsymbol{x}
=
D_{\boldsymbol{\pi}}^{1/2} M D_{\boldsymbol{\pi}}^{-1/2}
D_{\boldsymbol{\pi}}^{-1/2}\mathbf{1}
=
D_{\boldsymbol{\pi}}^{1/2} M D_{\boldsymbol{\pi}}^{-1}\mathbf{1}
=
D_{\boldsymbol{\pi}}^{-1/2} M^\top \mathbf{1}
=
D_{\boldsymbol{\pi}}^{-1/2}\boldsymbol{c}.
\]
Thus
\[
\|S\boldsymbol{x}\|_2
=
\|D_{\boldsymbol{\pi}}^{-1/2}\boldsymbol{c}\|_2
=
\sqrt{\sum_{i=1}^k \frac{c_i^2}{\pi_i}},
\qquad
\|\boldsymbol{x}\|_2
=
\|D_{\boldsymbol{\pi}}^{-1/2}\mathbf{1}\|_2
=
\sqrt{\sum_{i=1}^k \frac{1}{\pi_i}}.
\]
Substituting into \eqref{eq_R0} inequality gives
\[
R_0
\ge
\sqrt{
\frac{\sum_{i=1}^k c_i^2/\pi_i}{\sum_{i=1}^k 1/\pi_i}
}
=:
\tilde c.
\]
Choosing instead $\boldsymbol{x}=D_{\boldsymbol{\pi}}^{1/2}\mathbf{1},$\
\[
S\boldsymbol{x}
=
D_{\boldsymbol{\pi}}^{1/2} M D_{\boldsymbol{\pi}}^{-1/2}
D_{\boldsymbol{\pi}}^{1/2}\mathbf{1}
=
D_{\boldsymbol{\pi}}^{1/2} M \mathbf{1}
=
D_{\boldsymbol{\pi}}^{1/2}\boldsymbol{r}.
\]
Hence
\[
\|S\boldsymbol{x}\|_2
=
\|D_{\boldsymbol{\pi}}^{1/2}\boldsymbol{r}\|_2
=
\sqrt{\sum_{i=1}^k \pi_i r_i^2},
\qquad
\|\boldsymbol{x}\|_2
=
\|D_{\boldsymbol{\pi}}^{1/2}\mathbf{1}\|_2
=
\sqrt{\sum_{i=1}^k \pi_i}
=
1.
\]
Therefore substituting into \eqref{eq_R0},
\[
R_0
\ge
\sqrt{\sum_{i=1}^k \pi_i r_i^2}=:
\bar r.
\]
Combining the two estimates gives
\[
R_0 \;\ge\; \max\{\tilde{c},\ \bar{r}\}.
\]
Finally, the upper bounds corresponding to the cases where only the column sums or only the row sums are known are sharp, and are attained by the diagonal matrices $D_{\boldsymbol{c}}$ and $D_{\boldsymbol{r}}$, respectively.
\end{proof}

\subsection{{Bounds for $\{\tau_i\}$ and $\bar\tau$}}

\begin{lemma}\label{lemma:fixedpoint}
For $\alpha>0$ and $\beta\geq 0$ let $\phi_{\alpha,\beta}(t):=1-\rm{e}^{-\alpha t-\beta}$ on $[0,1]$, and let $t_{\alpha,\beta}\in[0,1]$ be the maximal solution of
$t=\phi_{\alpha,\beta}(t)$.
Then, if $\beta>0$ or $\alpha >1$ there exists an unique positive solution $t_{\alpha,\beta}\in(0,1]$. Moreover, for every $w\in[0,1]$,
\[
w\ \ge\ \phi_{\alpha,\beta}(w)\ \text{if and only if }\ w\ \ge\ t_{\alpha,\beta},
\quad\text{and}\quad
w\ \le\ \phi_{\alpha,\beta}(w)\ \text{if and only if }\ w\ \le\ t_{\alpha,\beta}.
\]
\end{lemma}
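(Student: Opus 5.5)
We work with the auxiliary function $h(t):=t-\phi_{\alpha,\beta}(t)=t-1+{\rm e}^{-\alpha t-\beta}$ on $[0,1]$. The plan is to show that $h$ has exactly one zero in $(0,1]$ under the stated hypothesis, and that $h$ is negative to the left of that zero and nonnegative to the right. Both equivalences in the lemma then follow at once.

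\emph{Convexity and boundary values.} We have $h''(t)=\alpha^2{\rm e}^{-\alpha t-\beta}>0$, so $h$ is strictly convex on $[0,1]$. At the endpoints,
\[
h(0)={\rm e}^{-\beta}-1\le 0,\qquad h(1)={\rm e}^{-\alpha-\beta}>0 .
\]
We split into two cases.

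If $\beta>0$, then $h(0)<0<h(1)$. A strictly convex function can cross zero from negative to positive at most once, so there is a unique zero $t_{\alpha,\beta}\in(0,1)$. Moreover $h<0$ on $[0,t_{\alpha,\beta})$ and $h>0$ on $(t_{\alpha,\beta},1]$.

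If $\beta=0$ and $\alpha>1$, then $h(0)=0$ and $h'(0)=1-\alpha<0$. So $h$ is negative just to the right of $0$. Since $h(1)>0$, $h$ has a zero in $(0,1)$. Strict convexity allows at most two zeros, and $0$ is already one of them, so this positive zero is unique. We take it as $t_{\alpha,\beta}$, which is indeed the maximal solution. Convexity then gives $h<0$ on $(0,t_{\alpha,\beta})$ and $h>0$ on $(t_{\alpha,\beta},1]$.

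\emph{The equivalences.} The sign pattern of $h$ yields $w\ge\phi_{\alpha,\beta}(w)\iff h(w)\ge 0\iff w\ge t_{\alpha,\beta}$.

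There is one exception: in the case $\beta=0$, $\alpha>1$ we have $h(0)=0$, so $w=0$ satisfies $w\ge\phi(w)$ even though $0<t_{\alpha,\beta}$. We handle this either by restricting to $w\in(0,1]$ or by noting that this is the degenerate trivial fixed point. In the same way, $w\le\phi(w)\iff w\le t_{\alpha,\beta}$ holds everywhere on $[0,1]$.

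In the remaining case $\beta=0$, $\alpha\le1$, we have $h'(0)\ge0$, so $h\ge0$ on $[0,1]$ and $t_{\alpha,0}=0$. The equivalences then hold trivially.

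The main obstacle is only this boundary bookkeeping at $w=0$ when $\beta=0$. The convexity argument is otherwise routine.
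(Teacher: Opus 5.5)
Your proof is correct and is essentially the paper's argument with the sign flipped: the paper uses the strictly concave $g=\phi_{\alpha,\beta}(t)-t=-h$, the endpoint values $g(0)\ge 0$ and $g(1)<0$, the slope $g'(0)=\alpha{\rm e}^{-\beta}-1$, and the same case split. Your caveat at $w=0$ when $\beta=0$, $\alpha>1$ is genuine: there $0\ge\phi_{\alpha,0}(0)$ holds while $0<t_{\alpha,0}$, so the first equivalence as stated fails at that one point, and the paper's proof misses this because it asserts $g>0$ on $[0,t_{\alpha,\beta})$, which is false at $t=0$ in that case.
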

\begin{proof}
Consider $g_{\alpha,\beta}(t):=\phi_{\alpha,\beta}(t)-t
=1-\rm{e}^{-\alpha t-\beta}-t$ on $[0,1]$.
Since $g''_{\alpha,\beta}(t)=-\alpha^2 \rm{e}^{-\alpha t-\beta}<0$,
$g_{\alpha,\beta}$ is strictly concave.
Moreover,
\[
g_{\alpha,\beta}(0)=1-\rm{e}^{-\beta}\ge0,\qquad
g_{\alpha,\beta}(1)=-\rm{e}^{-\alpha-\beta}<0,\qquad
g'_{\alpha,\beta}(0)=\alpha \rm{e}^{-\beta}-1.
\]
\noindent
If $\beta>0$ or $\alpha>1$, we have $g_{\alpha,\beta}(0)>0$ or $g'_{\alpha,\beta}(0)>0$.
By continuity and strict concavity, $g_{\alpha,\beta}$ has a unique zero
$t_{\alpha,\beta}\in(0,1)$ and
\[
g_{\alpha,\beta}(t)
\begin{cases}
> 0, & t\in[0,t_{\alpha,\beta}),\\[2pt]
< 0, & t\in(t_{\alpha,\beta},1].
\end{cases}
\]
\noindent
If $\beta=0$ and $\alpha\le1$, then
$g_{\alpha,\beta}(0)=0$ and $g'_{\alpha,\beta}(0)=\alpha-1\le0$.
By concavity, $g_{\alpha,\beta}(t)\le0$ for all $t\in[0,1]$, hence
$t_{\alpha,\beta}=0$.
\noindent
Finally, since $w\le\phi_{\alpha,\beta}(w)$ is equivalent to $g_{\alpha,\beta}(w)\ge0$,
we obtain $w\le t_{\alpha,\beta}$, and similarly
$w\ge\phi_{\alpha,\beta}(w)$ if and only if $w\ge t_{\alpha,\beta}$.
\end{proof}
For the following results we fix $k\ge 1$ and, for each $i=1,\dots,k$, we define
\[
y_i := \pi_i\,\tau_i\in[0,\pi_i],
\qquad
\boldsymbol{y}:=D_{\boldsymbol{\pi}} \boldsymbol{\tau}.
\]
When a row-stochastic matrix $Q=\{q_{ij}\}\in\mathbb{R}^{k\times k}$ and the row-sum vector
$\boldsymbol{r}=(r_1,\dots,r_k)^\top$ are given, we define
\[
u_i := \sum_{j=1}^k q_{ji}\, r_j\, \pi_j\, \tau_j,
\qquad
\boldsymbol{u}:=Q^{\!\top} D_{\boldsymbol{r}} D_{\boldsymbol{\pi}} \boldsymbol{\tau}.
\]
Note that \(u_i\) is the incoming infectious pressure to type \(i\). We will use the scalar function
\[
\phi_\alpha(t):=1-\rm{e}^{-\alpha t},\qquad \alpha>0,\ t\in[0,1],
\]
and, when the column sums $\boldsymbol{c}=(c_1,\dots,c_k)^\top$ of the
next-generation matrix are fixed, we set for each $i$
\[
\psi_i(s):=\pi_i\,\phi_{c_i/\pi_i}(s)
          =\pi_i\!\left[1-\exp\!\Bigl(-\frac{c_i}{\pi_i}s\Bigr)\right],
\]
and
\[
\zeta_i(s):=\pi_i\,\phi_{1/\pi_i}(s)
           =\pi_i\!\left[1-\exp\!\Bigl(-\frac{s}{\pi_i}\Bigr)\right].
\]
Note that $\phi_{c_i/\pi_i}(s)=\phi_{1/\pi_i}(c_i\,s)=\phi_{c_i}(s/\pi_i)$ therefore $\psi_i(s/c_i)=\zeta_i(s)$ or, equivalently, $\psi_i(s)=\zeta_i(c_i s)$ are both functions that are increasing and strictly concave. Moreover, whenever $(A_Q\boldsymbol{\tau})_i=u_i/\pi_i$, we have
\[
y_i=\pi_i\tau_i=\zeta_i(u_i).
\]

\subsubsection*{Proof of Theorem \ref{thm:fs-col}}
\begin{proof}
Recall $M=P D_{\boldsymbol{c}}$ with $P$ column-stochastic and
\[
A_P=D_{\boldsymbol{\pi}}^{-1}D_{\boldsymbol{c}}P^{\!\top}D_{\boldsymbol{\pi}},\qquad
\boldsymbol{\tau}=\one-\exp(-A_P\boldsymbol{\tau}).
\]
By the above notation we have $\boldsymbol{y}=D_{\boldsymbol{\pi}}\boldsymbol{\tau}$ and
\[
(A_P\boldsymbol{\tau})_i
=(D_{\boldsymbol{\pi}}^{-1}D_{\boldsymbol{c}}P^{\!\top}D_{\boldsymbol{\pi}}\boldsymbol{\tau})_i
=\frac{c_i}{\pi_i}\,(P^{\!\top}\boldsymbol{y})_i.
\]
Hence, for each $i$,
\[
y_i
=\pi_i\!\left[1-\exp\!\Bigl(-\frac{c_i}{\pi_i}\,(P^{\!\top}\boldsymbol{y})_i\Bigr)\right]
=\psi_i\!\big((P^{\!\top}\boldsymbol{y})_i\big),
\]
where $\psi_i$ is increasing by definition. Let
\[
\underline y:=\min_i y_i,
\qquad
\overline y:=\max_i y_i.
\]
Because $P$ is column-stochastic, each $(P^{\!\top}\boldsymbol{y})_i$ is a convex
combination of $\{y_j\}$, hence
\[
\underline y \le (P^{\!\top}\boldsymbol{y})_i \le \overline y
\qquad \text{for all } i=1,\dots,k.
\]
Applying the increasing map $\psi_i$ and using
$y_i=\psi_i((P^{\!\top}\boldsymbol{y})_i)$ yields
\[
\psi_i(\underline y)\le y_i \le \psi_i(\overline y)
\qquad \text{for all } i=1,\dots,k.
\]
Taking the minimum and the maximum in $i$ gives
\[
\underline y \ge \min_i \psi_i(\underline y),\qquad
\overline y \le \max_i \psi_i(\overline y).
\]
Let $\ell\in\displaystyle\arg\min_i\psi_i(\underline y)$ and
$h\in\displaystyle\arg\max_i\psi_i(\overline y)$. Using the relation
$\psi_i(s)=\pi_i\,\phi_{c_i}(s/\pi_i)$ we obtain
\[
\underline y \ge \pi_\ell\,\phi_{c_\ell}(\underline y/\pi_\ell),
\qquad
\overline y \le \pi_h\,\phi_{c_h}(\overline y/\pi_h),
\]
and therefore, by Lemma \ref{lemma:fixedpoint},
\[
\underline y \ge \pi_\ell t_{c_\ell}\ \ge\ y_*,
\qquad
\overline y \le \pi_h t_{c_h}\ \le\ y^*,
\]
where
\[
y_*:=\min_i\{\pi_i t_{c_i}\},
\qquad
y^*:=\max_i\{\pi_i t_{c_i}\}.
\]
Consequently, for every $i$,
\[
y_* \le \underline y \le (P^{\!\top}\boldsymbol{y})_i
          \le \overline {y} \le y^*.
\]
By monotonicity of $\phi_{\alpha}$ for each positive value $\alpha$,
\[
1-\exp\!\Bigl(-\frac{c_i}{\pi_i}y_*\Bigr)
\ \le\
1-\exp\!\Bigl(-\frac{c_i}{\pi_i}(P^{\!\top}\boldsymbol{y})_i\Bigr)
\ \le\
1-\exp\!\Bigl(-\frac{c_i}{\pi_i}y^*\Bigr),
\]
and since $\tau_i=1-\exp\!\left(-\displaystyle\frac{c_i}{\pi_i}(P^{\!\top}\boldsymbol{y})_i\right)$ we obtain
\[
1-\exp\!\left(\displaystyle\frac{c_i}{\pi_i}y_*\right)
\ \le\ \tau_i\ \le\
1-\exp\!\left(-\displaystyle\frac{c_i}{\pi_i}y^*\right)
\qquad \text{for all } i=1,\dots,k.
\]
Let $a\in\displaystyle\arg\min_i\{\pi_i t_{c_i}\}$ and $b\in\displaystyle\arg\max_i\{\pi_i t_{c_i}\}$ and define
\[
P^{\min}_{ij}:=\delta_{ia},\qquad
P^{\max}_{ij}:=\delta_{ib}.
\]
Where $\delta_{ij}$ is the Kronecker delta function.
For $P^{\max}$ we have $(P^{\max})^{\!\top}\boldsymbol{y}=y_b\one$, so
\[
y_i=\pi_i\!\left[1-\exp\!\Bigl(-\frac{c_i}{\pi_i}y_b\Bigr)\right].
\]
For $i=b$ this implies $\tau_b=1-\exp(-c_b\tau_b)$, hence $\tau_b=t_{c_b}$ and
$y_b=\pi_b t_{c_b}=y^*$. Thus with $P=P^{\max}$ we have
\[
\tau_i=1-\exp\!\Bigl(-\frac{c_i}{\pi_i}y^*\Bigr),
\qquad i=1,\dots,k.
\]
Analogously, imposing $P=P^{\min}$ yields
\[
\tau_i=1-\exp\!\Bigl(-\frac{c_i}{\pi_i}y_*\Bigr),
\qquad i=1,\dots,k.
\]
Therefore, the bounds are attained and hence sharp. Multiplying by $\pi_i$ and summing over $i$ gives
\[
\sum_{i=1}^k \pi_i - \sum_{i=1}^k \pi_i \exp\!\Bigl(-\frac{c_i}{\pi_i}\,y_*\Bigr)
\;\le\;
\sum_{i=1}^k \pi_i \tau_i
\;\le\;
\sum_{i=1}^k \pi_i - \sum_{i=1}^k \pi_i \exp\!\Bigl(-\frac{c_i}{\pi_i}\,y^*\Bigr),
\]
and since $\displaystyle\sum_{i=1}^k \pi_i=1$ this yields the stated bounds for
$\bar{\tau}=\displaystyle\sum_{i=1}^k \pi_i \tau_i$.

For sharpness, for all $i$ we have
\[
\tau_i=
\begin{cases}
1-\exp\!\Bigl(-\dfrac{c_i}{\pi_i}\,y_*\Bigr),
  & \text{if } P=P^{\min},\\[4pt]
1-\exp\!\Bigl(-\dfrac{c_i}{\pi_i}\,y^*\Bigr),
  & \text{if } P=P^{\max}.
\end{cases}
\]
Again, multiplying by $\pi_i$ and summing over $i$ produces the equalities for
$\bar{\tau}$, proving that both bounds are attained.
\end{proof}

\subsubsection*{Proof of Theorem \ref{thm:fs-row}}
\begin{proof}
Since $Q$ is row-stochastic and $M=D_{\boldsymbol{r}} Q$,
\[
A_Q \;=\; D_{\boldsymbol{\pi}}^{-1} Q^{\!\top} D_{\boldsymbol{r}} D_{\boldsymbol{\pi}},
\qquad
(A_Q\boldsymbol{\tau})_i \;=\; \frac{1}{\pi_i}\sum_{j=1}^k q_{ji}\,r_j\,\pi_j\,\tau_j .
\]
Hence, the final-size equations read
\begin{equation*}
\tau_i \;=\; 1-\exp\left( -q_{ii}r_i\tau_i-\displaystyle\frac{1}{\pi_i}\displaystyle\sum_{j\neq i}q_{ji}r_j \pi_j \tau_j\right),\qquad i=1,\dots,k.
\end{equation*}
Since $q_{ii}\leq 1$, we have
\begin{equation*}
\tau_i \;\leq\; 1-\exp\left( -r_i\tau_i-\displaystyle\frac{1}{\pi_i}\displaystyle\sum_{j\neq i}q_{ji}r_j \pi_j \tau_j\right),\qquad i=1,\dots,k,
\end{equation*}
where the right-hand side provides an upper bound for $\tau_i$, and this bound is attained if and only if $\tau_i$ is the solution of 
\begin{equation*}
\tau_i \;=\;1-\exp\left( -r_i\tau_i-\displaystyle\frac{1}{\pi_i}\displaystyle\sum_{j\neq i}q_{ji}r_j \pi_j \tau_j\right),\qquad i=1,\dots,k.
\end{equation*}
Equivalently, we can say that to each row-stochastic matrix $Q$ is associated a $\tau_i$, which is bounded above by $\tau^{(i)}_i$, associated with a row-stochastic matrix $Q^{(i)}$ where
\begin{equation*}
q_{ii}^{(i)}=1,\quad q_{ij}^{(i)}=0,\quad 
\text{for all}\,j\neq i,
\end{equation*}
and it is a solution of
\begin{equation*}
\tau_i^{(i)} \;=\; 1-\exp\left( -r_i\tau_i^{(i)}-\displaystyle\frac{1}{\pi_i}\displaystyle\sum_{j\neq i}q_{ji}^{(i)}r_j \pi_j \tau_j^{(i)}\right),\qquad i=1,\dots,k.
\end{equation*}
For simplicity we choose 
\begin{equation*}
q_{ji}^{(i)}= 1-q_{jj},\quad
q_{jj}^{(i)}=q_{jj},\quad
q_{jh}^{(i)}=0\quad
\text{for all}\quad j\neq i,\quad h\,\not\in \{i,\,j \}.
\end{equation*}
With this choice, for $j\neq i$,
\begin{equation*}
\tau_j^{(i)} \;=\; 1-\exp\left\{ -q_{jj}r_j \tau_j^{(i)}\right\},
\end{equation*}
so $\tau_j^{(i)}=t_{q_{jj}r_j}$ and 
\begin{equation}
\tau_i^{(i)} \;=\; 1-\exp\left( -r_i\tau_i^{(i)}-\displaystyle\frac{1}{\pi_i}\displaystyle\sum_{j\neq i}(1-q_{jj})r_j \pi_j t_{q_{jj}r_j}\right).
\end{equation}
We now have to choose $q_{jj}$ in order to maximize the function
\begin{equation*}
f(q_{jj})=(1-q_{jj})\,t_{q_{jj}r_j},
\end{equation*}
where $t_{q_{jj}r_j}$ solves
\begin{equation}
t_{q_{jj}r_j}=1-\exp({-t_{q_{jj}r_j}q_{jj}r_j})
\label{eq:xxx}
\end{equation}
By Lemma \ref{lem:opt-qjj}, if $q_{jj}r_j\leq 1$, we can choose $q_{jj}=1$ without loss of generality. If instead $q_{jj}r_j> 1$, we have $q_{jj}=\displaystyle\frac{\log(r_j)}{r_j-1}.
$
Note that with this choice $q_{jj} r_j>1$ if and only if $r_j>1$ and therefore 
\begin{align*}
\tau_i^{(i)} &
= 1-\exp\left( -r_i\tau_i^{(i)}-\displaystyle\frac{1}{\pi_i}\displaystyle\sum_{j\neq i}(r_j-1-\log(r_j)) \pi_j \delta_{r_j>1}\right)\\
&=1-\exp\left( -r_i\tau_i^{(i)}-\displaystyle\frac{1}{\pi_i}\displaystyle\sum_{j\neq i}\kappa(r_j) \pi_j \right)   \\
&=1-\exp\left( -r_i\tau_i^{(i)}-K_i \right)=t_{r_i, K_i}
\end{align*}
so $\tau_i^{(i)}= t_{r_i, K_i}$, where $\delta_{r_j>1}=1$ if $r_j>1$ and $\delta_{r_j>1}=0$ otherwise. Thus, for Lemma \ref{lemma:fixedpoint}
\begin{equation}
\tau_i\leq \tau_i^{(i)}=t_{r_i, K_i}
\end{equation}
and the bound is attained with the matrix $Q^{(i)}$ defined above.
\end{proof}

\subsubsection*{Proof of Theorem \ref{thm:tot-fs-row}.}
\label{subsubsectio.proof.thm:tot-fs-row}
\begin{proof}
Since \(A_Q = D_{\boldsymbol{\pi}}^{-1} Q^{\!\top} D_{\boldsymbol{r}} D_{\boldsymbol{\pi}}\), we have
\[
(A_Q \boldsymbol{\tau})_i \;=\; \frac{u_i}{\pi_i},
\]
and hence
\[
\tau_i \;=\; 1 - \exp\!\Bigl(-\frac{u_i}{\pi_i}\Bigr), \qquad
y_i=\zeta_i(u_i), \qquad
\bar{\tau} \;=\; \sum_{i=1}^k \zeta_i(u_i).
\]
Since $Q$ is a row-stochastic matrix,
\begin{equation}\label{eq:mass-balance}
U \;=\sum_{i=1}^k u_i
=\sum_{j=1}^k r_j\, \pi_j\, \tau_j
=\sum_{j=1}^k r_j\, y_j
=\sum_{j=1}^k r_j\, \zeta_j(u_j).
\end{equation}
The last identity means that the total produced pressure equals the total received
pressure.

\medskip\noindent
Define
\[
\begin{aligned}
&\mathcal U^+
:=\Bigl\{\,\boldsymbol{u}\ge0:\ \exists\,U\ge0 \text{ with }\displaystyle\sum_{i=1}^{k} u_i=U \text{ and }U=\displaystyle\sum_{i=1}^{k} r_i\,\zeta_i(u_i)\Bigr\},\\
&\mathcal U
:=\Bigl\{\,\boldsymbol{u}\ge0:\ \exists\,Q\text{ row-stochastic},\ \boldsymbol{\tau}=\mathbf 1-\exp\left(-A_Q\boldsymbol{\tau}\right),\
u_i=\displaystyle\sum_{j=1}^k q_{ji} r_j\pi_j\tau_j,\ y_i=\zeta_i(u_i)\Bigr\},\\
&\mathcal U^-
:=\Bigl\{\,\boldsymbol{u}\in\mathcal U:\ Q^{(w)}=\mathbf 1\,\boldsymbol{e}_w^{\!\top}\ \text{    for some   }w,\quad 1\leq w\leq k\Bigr\}.
\end{aligned}
\]
Clearly \(\mathcal U^-\subseteq\mathcal U\subseteq\mathcal U^+\), hence
\begin{equation}\label{eq:chain-U-sets}
\min_{u\in\mathcal U^-}\sum_{i=1}^k \zeta_i(u_i)
\ \ge\
\min_{u\in\mathcal U}\sum_{i=1}^k \zeta_i(u_i)
\ \ge\
\min_{u\in\mathcal U^+}\sum_{i=1}^k \zeta_i(u_i).
\end{equation}
Fix \(U\ge0\). On the simplex \(\{u\ge0:\sum_i u_i=U\}\), Lemma \ref{lemma:bounds} gives
\begin{equation}\label{eq:zeta-simplex-min}
\sum_{i=1}^k \zeta_i(u_i)\ \ge\ \min_{1\le i\le k}\zeta_i(U),
\quad\text{with equality if and only if  }u=U\boldsymbol{e}_w\text{ for some }w.
\end{equation}
Thus any minimizer in \(\mathcal U^+\) must be of the form \(u=U\boldsymbol{e}_w\). Imposing the
closure condition \eqref{eq:mass-balance},
\[
U\;=\; r_w\,\zeta_w(U)\;=\;r_w\,\pi_w\left[1-\exp\left(-\displaystyle\frac{U}{\pi_w}\right)\right].
\]
Let \(x:=U/\pi_w\). Then \(x=r_w(1-e^{-x})\), whose maximal solution in \([0,1]\) is
\(x=t_{r_w}\). Hence
\[
U=\pi_w r_w t_{r_w}
\quad\text{and}\quad
\zeta_w(U)=\pi_w t_{r_w}.
\]
Now, minimizing over \(w\) yields
\begin{equation}\label{eq:Uplus-min}
\min_{u\in\mathcal U^+}\sum_{i=1}^k \zeta_i(u_i)
= \min_{1\le w\le k}\ \pi_w\, t_{r_w}.
\end{equation}
For \(Q^{(w)}=\mathbf 1\,\boldsymbol{e}_w^{\!\top}\) (row-stochastic), each row equals \(\boldsymbol{e}_w^{\!\top}\), hence
\[
u_i=\sum_{j=1}^k Q^{(w)}_{ji} r_j \pi_j \tau_j
=\delta_{iw}\sum_{j=1}^k r_j \pi_j \tau_j
=\delta_{iw}\,U,
\]
so \(\boldsymbol{u}=U\boldsymbol{e}_w\). The same closure relation gives \(U=\pi_w r_w t_{r_w}\) and
\[
\sum_{i=1}^k \zeta_i(u_i)=\zeta_w(U)=\pi_w t_{r_w}.
\]
Therefore
\begin{equation}\label{eq:Uminus-min}
\min_{u\in\mathcal U^-}\sum_{i=1}^k \zeta_i(u_i)
= \min_{1\le w\le k}\ \pi_w\, t_{r_w}.
\end{equation}
From \eqref{eq:chain-U-sets}, \eqref{eq:Uplus-min} and \eqref{eq:Uminus-min} we obtain
\[
\min_{u\in\mathcal U}\sum_{i=1}^k \zeta_i(u_i)
\;=\;
\min_{1\le i\le k}\ \pi_i\, t_{r_i}.
\]
Since for any admissible \(Q\) we have $\bar{\tau}=\displaystyle\sum_{i=1}^k \zeta_i(u_i)$, it follows that
\[
 \min_{1\le i\le k}\ \pi_i\, t_{r_i}\;\leq\;   \bar{\tau},
\]
for all row-stochastic \(Q\), with equality achieved by
$Q^{(w)}=\mathbf 1 \boldsymbol{e}_w^{\!\top}$ for the minimizing index \(w\).

\medskip
For the upper bound we note that every row-stochastic $Q$ generates some vector
$\boldsymbol{u}\in\mathcal U$, and conversely any $\boldsymbol{u}\in\mathcal U$
is induced by (at least) one row-stochastic $Q$: if $\sum_i u_i>0$, set
$\boldsymbol{q}:=\boldsymbol{u}/\sum_i u_i=\boldsymbol{u}/ U$ and let $Q=\mathbf 1\,\boldsymbol{q}^{\!\top}$;
if $\sum_i u_i=0$, any row-stochastic $Q$ suffices. Therefore
\begin{equation}\label{eq:Q-sup}
\sup_{Q\ \text{row-stochastic}}\bar{\tau}
\;=\;
\max_{\boldsymbol{u}\in\mathcal U}\ \sum_{i=1}^k \zeta_i(u_i).
\end{equation}
By the definition of $\mathcal U$, this is equivalent to the constrained optimization
problem
\[
\max_{\boldsymbol{u}\ge 0}\ \Psi(\boldsymbol{u})
\quad\text{s.t.}\quad h(\boldsymbol{u})=0,
\]
where
\[
\Psi(\boldsymbol{u})=\sum_{i=1}^{k}\zeta_i(u_i),
\qquad
h(\boldsymbol{u})=\sum_{i=1}^{k}u_i-\sum_{i=1}^{k}r_i\,\zeta_i(u_i).
\]
Therefore, by \ref{lem:kkt}, if \(r_i\le 1\) for all \(i\), the unique solution is
$\boldsymbol{u}=\boldsymbol{0}$ and hence $\bar{\tau}=0$ for any choice of $Q$.
If there exists \(i\) with \(r_i>1\), there exists a unique $\lambda^\star>0$ and the
maximizer is given componentwise by
\[
u_i^\star=\pi_i\left[\ \log\left(\displaystyle\frac{1+\lambda^\star r_i}{\lambda^\star} \right)\ \right]_+,
\qquad i=1,\dots,k,
\]
where $\lambda^\star$ is the unique solution of
\[
\sum_{i=1}^k\!\left[\ \pi_i\,\log\!\left(r_i+\frac{1}{\lambda}\right)
\;-\;r_i\,\pi_i\Bigl(1-\frac{\lambda}{1+\lambda r_i}\Bigr)\right]_+=0.
\]
For this choice we have, using the definition of $\zeta_i$,
\[
\bar{\tau}
\le
\sum_{i=1}^k
\zeta_i(u_i^\star)
=
\sum_{i=1}^k\pi_i\!\left[1-\exp\!\Bigl(-\frac{u_i^\star}{\pi_i}\Bigr)\right]
=
\sum_{i=1}^k \pi_i\Bigl[\ 1-\frac{\lambda^\star}{1+\lambda^\star r_i}\ \Bigr]_+.
\]
To show that this upper bound is sharp it remains to construct a row-stochastic
$Q^\star$ such that the corresponding vector of pressures is exactly
$\boldsymbol{u}^\star$. Let
\[
U^\star=\sum_{i=1}^k u_i^\star
\qquad\text{and}\qquad
\boldsymbol{q}^\star=\boldsymbol{u}^\star/U^\star.
\]
Define $Q^\star=\mathbf 1\,(\boldsymbol{q}^\star)^{\!\top}$, which is row-stochastic.
Set $y_i^\star:=\zeta_i(u_i^\star)$ and $s_i^\star:=r_i\,y_i^\star$ for $i=1,\dots,k$,
so that
\[
\sum_{i=1}^k s_i^\star
=\sum_{i=1}^k r_i\,\zeta_i(u_i^\star)
=\sum_{i=1}^k u_i^\star
=U^\star,
\]
where we used the constraint $h(\boldsymbol{u}^\star)=0$.
Then
\[
Q^{\star\top}\boldsymbol{s}^\star
=(\boldsymbol{q}^\star\,\mathbf 1^\top)\boldsymbol{s}^\star
= \boldsymbol{q}^\star\,(\mathbf 1^\top \boldsymbol{s}^\star)
= \boldsymbol{q}^\star\,U^\star
= \boldsymbol{u}^\star,
\]
so $\boldsymbol{u}^\star$ is realized by $Q^\star$.
Finally, define $\tau_i^\star:=1-\exp(-u_i^\star/\pi_i)$, so that
$y_i^\star=\pi_i\tau_i^\star$ and
\[
(A_{Q^\star}\boldsymbol{\tau}^\star)_i
=\frac{1}{\pi_i}\sum_{j=1}^k q^{\star}_{ji}r_j\pi_j\tau_j^\star
=\frac{u_i^\star}{\pi_i},
\]
showing that $\boldsymbol{\tau}^\star$ solves the final-size equation
$\boldsymbol{\tau}=\mathbf 1-\exp\left(-A_{Q^\star}\boldsymbol{\tau}\right)$.
Therefore the corresponding total final size is
\[
\bar{\tau}
=\sum_{i=1}^k \zeta_i(u_i^\star)
=\sum_{i=1}^k \pi_i\Bigl[\ 1-\frac{\lambda^\star}{1+\lambda^\star r_i}\ \Bigr]_+,
\]
and the upper bound is attained.
\end{proof}

\subsubsection{Proof of Theorem \ref{thm:DBk=2} }
\label{subsec:thm:DBk=2:proof}
\begin{proof}
(a) Recall that $\varphi =\frac{\pi_1}{\pi_2}$.  The eigenvalues of $M$ satisfy
\[
(r_1-\theta-\lambda)(r_2-\varphi\theta-\lambda)-\varphi\theta^2=0,
\]
which, on rearranging yields,
\[
\lambda^2-[r_1+r_2-(\varphi+1)\theta]\lambda+r_1r_2-(\varphi r_1+r_2)\theta=0.
\]
Hence, $R_0=R_0(\theta)$, where
\begin{equation}
\label{equ:DBk=2R0}
R_0(\theta)=\frac{1}{2}\left[r_1+r_2-(\varphi+1)\theta+\sqrt{[r_1+r_2-(\varphi+1)\theta]^2-4r_1r_2+4(\varphi r_1+r_2)\theta}\right].
\end{equation}
Differentiating,
\begin{align*}
R_0^{\prime}(\theta)&=\frac{1}{2}\left[-(\varphi+1)+\frac{(\varphi+1)^2\theta-(\varphi+1)(r_1+r_2)+2(\varphi r_1+r_2)}{\sqrt{[r_1+r_2-(\varphi+1)\theta]^2-4r_1r_2+4(\varphi r_1+r_2)\theta}}\right]\\
&=\frac{1}{2}\left[-(\varphi+1)+\frac{(\varphi+1)^2\theta-(\varphi-1)(r_2-r_1)}{\sqrt{[r_1+r_2-(\varphi+1)\theta]^2-4r_1r_2+4(\varphi r_1+r_2)\theta}}\right],
\end{align*}
so $R_0^{\prime}(\theta)<0$ if and only if
\begin{equation}
\label{equ:R_0<0}    
(\varphi+1)\theta -\left(\frac{\varphi-1}{\varphi+1}\right)(r_2-r_1)<\sqrt{[r_1+r_2-(\varphi+1)\theta]^2-4r_1r_2+4(\varphi r_1+r_2)\theta}.
\end{equation}
The inequality in \eqref{equ:R_0<0} clearly holds if $(\varphi+1)\theta <\left(\frac{\varphi-1}{\varphi+1}\right)(r_2-r_1)$, otherwise squaring both sides of (\ref{equ:R_0<0}) and rearranging shows that (\ref{equ:R_0<0})  holds if and only if $\varphi>0$, which is clearly true. 
Hence, $R_0(\theta)$ is strictly decreasing on $[0, \theta_{\max}]$, and sharp lower and upper bounds for $R_0$ are given by $R_0(\theta_{\max})$ and $R_0(0)=r_2$, respectively. The lower bounds in~\eqref{equ:DBk=2R0boundsa} and~\eqref{equ:DBk=2R0boundsb} are obtained by substituting $\theta=\theta_{\max}$ in ~\eqref{equ:DBk=2R0}, noting that $\theta_{\max}=r_1$ if $\varphi r_1 \le r_2$ and $\theta_{\max}=\varphi^{-1}r_2$ if $\varphi r_1 \ge r_2$. 

\medskip
(b) It follows from~\eqref{eq:finalsizeDB} that $(\tau_1(\theta), \tau_2(\theta))$ solves
\begin{equation}
\label{equ:tau1r}
1-\tau_1=\exp\left(-[\tau_1(r_1-\theta)+\tau_2\theta]\right)
\end{equation}
and
\begin{equation}
\label{equ:tau2r}
1-\tau_2=\exp\left(-[\tau_1\varphi\theta+\tau_2(r_2-\varphi\theta)]\right).
\end{equation}
Suppose that there exists $\theta \in(0, \theta_{\max}]$  such that $\tau_1(\theta)=\tau_2(\theta)$ ($=\tau$ say).  Then equating the right-hand sides of (\ref{equ:tau1r}) and \eqref{equ:tau2r} implies $\tau r_1=\tau r_2$, whence either $\tau=0$ or $r_1=r_2$. The latter is impossible, so, noting that $\tau_1(0)<\tau_2(0)$ as $r_1<r_2$ and $r_2>1$, if $R_0(\theta_{\max})>1$ then 
$\tau_1(\theta)<\tau_2(\theta)$ for all $\theta \in [0,\theta_{\max}]$.  Alternatively, if $R_0(\theta_{\max}) \le 1$, then $\tau_1(\theta)<\tau_2(\theta)$ for all $\theta \in [0,\theta_*)$ and $\tau_1(\theta)=\tau_2(\theta)=0$ for all $\theta \in [\theta_*, \theta_{\max}]$.  

\medskip 
Differentiating (\ref{equ:tau1r}) and (\ref{equ:tau2r}) with respect to $\theta$ yields
\begin{equation}
\label{equ:tau1dashr}
\tau_1^{\prime}=(1-\tau_1)[\tau_2-\tau_1+(r_1-\theta)\tau_1^{\prime}+\theta \tau_2^{\prime}]
\end{equation}
and
\begin{equation}
\label{equ:tau2dashr}
\tau_2^{\prime}=(1-\tau_2)[\varphi(\tau_1-\tau_2)+\varphi\theta\tau_1^{\prime}+(r_2-\varphi\theta) \tau_2^{\prime}].
\end{equation}
Dividing \eqref{equ:tau1dashr} by $(1-\tau_1)/\varphi$ and \eqref{equ:tau2dashr} by $1-\tau_2$, then adding yields, after rearrangement, 
\begin{equation}
\label{equ:tauprimebalr}
\varphi\tau_1^{\prime}(\theta) f_1(\theta)+\tau_2^{\prime}(\theta) f_2(\theta)=0,
\end{equation}
where $f_i(\theta)$ is defined at (\ref{equ:fidef}).

\medskip
Note that, if $r_1<1$, then $f_1(\theta)>0$ for all $\theta \in [0, \theta_{\max}]$.  Alternatively, suppose $r_1 \ge 1$.  Then, since $\tau_1(\theta)(r_1-\theta)+\tau_2(\theta)\theta > \tau_1(\theta)r_1$ for all $\theta \in (0, \theta_*)$, it follows from (\ref{equ:tau1r}) that $\tau_1(\theta) > \tau_1(0)$ for all $\theta \in (0, \theta_*)$ (cf.~Lemma~\ref{lemma:fixedpoint}).  Hence, $f_1(\theta)>0$ for all $\theta \in (0, \theta_*)$, since $f_1(\theta)$ is strictly increasing in $\tau_1(\theta)$.

\medskip
Setting $\theta=0$ in (\ref{equ:tau1dashr}) and \eqref{equ:tau2dashr} and rearranging
yields
\begin{equation}
\label{equ:tau1'0r}
\tau_1^{\prime}(0)=\frac{\tau_2(0)-\tau_1(0)}{f_1(0)}>0
\end{equation}
and
\begin{equation}
\label{equ:tau2'0r}
\tau_2^{\prime}(0)=\varphi\left(\frac{\tau_1(0)-\tau_2(0)}{f_2(0)}\right)<0.
\end{equation}
Observe from (\ref{equ:tau1dashr}) that a necessary condition for $\tau_1^{\prime}(\theta)=\tau_2^{\prime}(\theta)=0$ is $\tau_1(\theta)=\tau_2(\theta)$, which is impossible for $\theta \in [0, \theta_*)$. Thus, since $f_1(\theta)>0$ for all $\theta \in (0, \theta_*)$, it follows from (\ref{equ:tauprimebalr}) and (\ref{equ:tau2'0r}) that $\tau_2^{\prime}(\theta)<0$ for all $\theta \in (0, \theta_*)$, proving part (b)(i).

\medskip
Recalling that $f_1(\theta)>0$ and $\tau_2^{\prime}(\theta)<0$ for all $\theta \in (0, \theta_*)$, it follows from \eqref{equ:tauprimebalr} that, for such $\theta$,
\begin{equation}
\label{equ:tau1'=0r}
\tau_1^{\prime}(\theta)=0 \text{  if and only if  }f_2(\theta)=0.
\end{equation}
Hence, if $f_2(\theta_*) \ge 0$ then, in view of \eqref{equ:tau1'0r}, $\tau_1^{\prime}(\theta)>0$ for all $\theta \in [0, \theta_*)$.  Alternatively, if $f_2(\theta_*) < 0$, there exists a unique $\theta_0 \in (0, \theta_*)$ such that $f_2(\theta_0)=0$.  It follows that $\tau_1(\theta)$ has a unique turning point in $(0, \theta_*)$ at $\theta_0$, which in view of (\ref{equ:tau1'0r}) and (\ref{equ:tauprimebalr}) must be a maximum, proving part (b)(ii).  

\medskip
(c)
We prove first the assertions concerning the functions $g(r)$ and $\hat{r}_2(r)$ made just before the statement of Theorem~\ref{thm:DBk=2}.
For $r>1$, 
\begin{align*}
 g(r)>0 \qquad&\text{if and only if}\qquad t_r>1-\frac{1}{r}\\
 &\text{if and only if}\qquad 1-\frac{1}{r}<1-{\rm e}^{-r\left(1-\frac{1}{r}\right)}\qquad(\text{using Lemma~\ref{lemma:fixedpoint}})\\
 &\text{if and only if}\qquad {\rm e}^{r-1}>r,
\end{align*}
which clearly holds, since ${\rm e}^{r-1}>1+(r-1)$ as $r>1$. Thus, $g(r)>0$ for $r>1$.

\medskip
Also, for $r>1$, writing $t^{\prime}_r$ for the derivative of $t_r$ with respect to $r$,
\[
g^{\prime}(r)=\frac{t^{\prime}_r}{(1-t_r)^2}-1.
\]
Differentiating both sides of $1-t_r={\rm e}^{-rt_r}$ with respect to $r$ yields, after a little algebra, 
\[
t^{\prime}_r=\frac{t_r(1-t_r)}{1-r(1-t_r)}.
\]
(Note $g(r)>0$ implies $1-r(1-t_r)>0$.)  Thus,
\[
g^{\prime}(r)=\frac{t_r}{(1-t_r)(1-r(1-t_r))}-1
\]
and, letting $z(=z(r))=1-t_r$ $(\in (0,1))$,
\[
g^{\prime}(r)>0 \qquad\text{if and only if}\qquad 1-z >z(1-rz) \qquad\text{if and only if}\qquad 1-2z+rz^2>0,
\]
which clearly holds as $r>1$.  Thus, $g$ is strictly increasing on $[1, \infty)$. 

\medskip
Turning to $\hat{r}_2(r)$, note that, since $g$ is strictly increasing on $[1, \infty)$, $\hat{r}_2(r)$ is well-defined for $r \in (0,1)$ if there exists $r>1$ such that  $g(r) \ge 1$. Now $g(r) \ge 1 \,\text{if and only if}\, t_r \ge r/(1+r)$.  By Lemma~\ref{lemma:fixedpoint}, 
\[
t_r
\ge \frac{r}{1+r} \qquad\text{if and only if}\qquad {\rm e}^{-\frac{r^2}{1+r}} \le \frac{1}{1+r} \qquad\text{if and only if}\qquad {\rm e}^{\frac{r^2}{1+r}} \ge 1+r.
\]
Now
\[ 
{\rm e}^{\frac{r^2}{1+r}}\ge 1+\frac{r^4}{2(1+r)^2}>1+\frac{r^4}{8r^2}\ge 1+r \quad\text{for } r \ge 8,
\]
as required.

\medskip
Suppose that $\theta \in (0, \theta_*)$.  Then $f_1(\theta) > 0$ and (\ref{equ:tauprimebalr}) implies $\tau_1^{\prime}(\theta)=-\frac{1}{\varphi}(f_2(\theta)/f_1(\theta))\tau_2^{\prime}(\theta)$, whence
\[
\bar{\tau}^{\prime}(\theta)=\pi_1\tau_1^{\prime}(\theta)+\pi_2\tau_2^{\prime}(\theta)
=-\frac{\pi_1}{\varphi}\frac{f_2(\theta)}{f_1(\theta)}\tau_2^{\prime}(\theta)+\pi_2 \tau_2^{\prime}(\theta)
=\pi_2 \left(1-\frac{f_2(\theta)}{f_1(\theta)}\right)\tau_2^{\prime}(\theta).
\]
Now $\tau_2^{\prime}(\theta)<0$, so
\begin{equation}
\label{equ:signtau'r}
\bar{\tau}^{\prime}(\theta)
\begin{cases}
>0& \text{ if   } f_1(\theta)<f_2(\theta),\\
=0& \text{ if   } f_1(\theta)=f_2(\theta),\\
<0 & \text{ if   } f_1(\theta)>f_2(\theta) .
\end{cases}
\end{equation}
Suppose that $r_1 \in (0,1)$ and $r_2 \in (1, \hat{r}_2(r_1)]$.  Then $f_1(0) \ge f_2(0)$.  Note from \eqref{equ:fidef} that $f_i(\theta)$ is increasing with $\tau_i(\theta)$ $(i=1,2)$. Thus, using (\ref{equ:tau1'0r}) and \eqref{equ:tau1'=0r}, as $\theta$ increases from $0$, $f_1(\theta)$ is increasing while $f_2(\theta)>0$.  Moreover, since $f_2(\theta)$ is strictly decreasing in $\theta$ and $f_1(\theta)>0$ for all $\theta \in [0, \theta_*]$, we have that $f_1(\theta)>f_2(\theta)$ for all $\theta \in [0, \theta_*]$ and, using (\ref{equ:signtau'r}), $\bar{\tau}(\theta)$ is strictly decreasing on $[0, \theta_*]$, proving part (c)(i).

\medskip
Suppose that $r_1 \in (0,1)$ and $r_2 >\hat{r}_2(r_1)$, or $r_2 \ge 1$.  Then $f_1(0)< f_2(0)$, so either (A) $f_1(\theta) < f_2(\theta)$ for all $\theta \in (0, \theta_*)$ or (B) there exists $\theta \in (0, \theta_*)$ such that $f_1(\theta)=f_2(\theta)$.  In case (A), \eqref{equ:signtau'r} implies that $\bar{\tau}(\theta)$ is strictly increasing on $[0, \theta_*]$.  Moreover, this case can occur only if $R_0(\theta_{\max})>1$, so $\theta_*=\theta_{\max}$, since $\bar{\tau}(\theta_*)=0$ if $R_0(\theta_{\max})\le 1$.  In case (B), let $\theta_1=\min(\theta \in (0, \theta_*):f_1(\theta)=f_2(\theta))$.  Recall from the proof of part (b) that $\tau^{\prime}_1(\theta)$, and hence also $f^{\prime}_1(\theta)$, is strictly positive for all $\theta \in (0, \theta_*)$ if $f_2(\theta_*)>0$, and also that $f_1(\theta)>0$ for all $\theta \in (0, \theta_*)$.  It follows that $f_1(\theta)>f_2(\theta)$ for all $\theta \in (\theta_1, \theta_*)$ and $\theta_1$ is the only root of $f_1(\theta)-f_2(\theta)$ in $(0, \theta_*)$.  Thus, $f_1(\theta)<f_2(\theta)$ for $\theta \in [0, \theta_1)$ and $f_1(\theta)>f_2(\theta)$ for $\theta \in (\theta_1, \theta_*)$, and it follows from \eqref{equ:signtau'r} that $\bar{\tau}(\theta)$ is strictly increasing on $[0, \theta_1]$ and strictly decreasing on $[\theta_1, \theta_*]$. Moreover, since $\theta_1$ is unique, case (A) occurs if $f_1(\theta_*) \le f_2(\theta_*)$ and case (B) occurs if  $f_1(\theta_*) > f_2(\theta_*)$, thus completing the proof of part (c)(ii).

\medskip
Turning to part (c)(iii), suppose that $f_1(\theta_1)=f_2(\theta_1)$ for some $\theta_1 \in [0, \theta_*]$.  We show that $\bar{\tau}(\theta_1)$ attains the upper bound for $\bar{\tau}$ assuming only that the row sums of $M$ are $r_1$ and $r_2$.  Let
\[
\lambda^*=\frac{1-\tau_1(\theta_1)}{1-r_1(1-\tau_1(\theta_1))}=\frac{1}{f_1(\theta_1)}=\frac{1}{f_2(\theta_1)}=\frac{1-\tau_2(\theta_1)}{1-r_2(1-\tau_2(\theta_1))}.
\]
Note that $\lambda^*\in (0, \infty)$, since $f_1(\theta) \ge 0$ for all $\theta \in [0, \theta_*]$, with strict inequality unless $r_1=1$ and $\theta=0$.  However, if $r_1=1$, then $\theta_1 \ne 0$, since $f_2(\theta_1)=f_1(\theta_1)$ and $f_2(0) >f_1(0)$.
Now, 
\[
y_i(\lambda^*)=\pi_i\left[1-\frac{\lambda^*}{1+\lambda^*r_i}\right]_+=\pi_i\left(1-\frac{\lambda^*}{1+\lambda^*r_i}\right) =\pi_i \tau_i(\theta_1)  \qquad(i=1,2).
\]
Hence,
\[
u_i(\lambda^*)=\pi_i\left[\log\left(\frac{1+\lambda^*r_i}{\lambda^*}\right)\right]_+=-\pi_i\log(1-\tau_i(\theta_1)) \qquad(i=1,2).
\]
Thus,
\begin{align*}
 F(\lambda^*)&=\sum_{i=1}^2 u_i(\lambda^*)-\sum_{i=1}^2 r_i y_i(\lambda^*)\\
 &=-\pi_1\log(1-\tau_1(\theta_1))-\pi_2\log(1-\tau_2(\theta_1))-\pi_1r_1\tau_1(\theta_1)-\pi_2 r_2 \tau_2(\theta_1).
 \end{align*}
Using \eqref{equ:tau1r} and \eqref{equ:tau2r},
\begin{align*}
&-\pi_1\log(1-\tau_1(\theta_1))-\pi_2\log(1-\tau_2(\theta_1))\\
&\qquad=\pi_1\left[\tau_1(\theta_1)(r_1-\theta_1)+\tau_2(\theta_1)\theta_1\right]+\pi_2\left[\tau_1(\theta_1)\varphi\theta_1+\tau_2(\theta_1)(r_2-\varphi\theta_1)\right]\\
&\qquad=\pi_1\tau_1(\theta_1)r_1+\pi_2\tau_2(\theta_1)r_2+\pi_1\theta_1[\tau_2(\theta_1)-\tau_1(\theta_1)]-\varphi\pi_2\theta_1[\tau_2(\theta_1)-\tau_1(\theta_1)]\\
&\qquad=\pi_1\tau_1(\theta_1)r_1+\pi_2\tau_2(\theta_1)r_2,
\end{align*}
since $\pi_1=\varphi\pi_2$.
Therefore, $F(\lambda^*)=0$, so $\bar{\tau}(\theta_1)$ attains the upper bound for $\bar{\tau}$ given in Theorem~\ref{thm:tot-fs-row}.
\end{proof}

\subsubsection{Proof Theorem \ref{thm:DBk>2} }
\label{subsec:thm:DBk>2:proof}
\begin{proof}
Let $\lambda^*=c^{-1}$.  Then, for $i \in A$, 
\[
\frac{1+\lambda^*r_i}{\lambda^*}=\frac{1+f_i^{-1}r_i}{f_i^{-1}}=f_i+r_i=\frac{1}{1-\tau_i}>1
\]
and for $i \notin A$,
\[
\frac{1+\lambda^*r_i}{\lambda^*}=\frac{1+c^{-1}r_i}{c^{-1}}=c+r_i \le 1.
\]
Thus,
\[
\pi_i\left[\log\frac{1+\lambda^*r_i}{\lambda^*}\right]_+=\begin{cases}
    -\pi_i \log(1-\tau_i) &\quad \text{if }  i \in A,\\
   0\;\;(=-\pi_i \log(1-\tau_i))& \quad \text{if } i \notin A , 
\end{cases}
\]
and
\[
\pi_i\left[1-\frac{\lambda^*}{1+\lambda^* r_i}\right]_+=\begin{cases}
    \pi_i \tau_i &\quad \text{if }  i \in A,\\
   0 \;\;(=\pi_i \tau_i)& \quad \text{if } i \notin A . 
\end{cases}
\]
Hence,
\begin{align*}
\sum_{j=1}^k \pi_j\left[\log\frac{1+\lambda^*r_j}{\lambda^*}\right]_+&=-\sum_{j=1}^k \pi_j \log(1-\tau_j)\\
&=\sum_{j=1}^k \sum_{i=1}^k \pi_i \tau_i m_{ij}\qquad(\text{using }\eqref{eq:finalsize})\\
&=\sum_{i=1}^k \pi_i \tau_i r_i\\
&=\sum_{i=1}^k r_i \pi_i\left[1-\frac{\lambda^*}{1+\lambda^* r_i}\right]_+.
\end{align*}
Therefore, $\lambda^*$ satisfies (and hence is the unique solution) of~\eqref{eq:lambdastarequ} and $\tau^*=\sum_{i=1}^k \pi_i \tau_i=\bar{\tau}$. Thus, $\bar{\tau}$ attains the upper bound for $\bar{\tau}$ given in Theorem \ref{thm:tot-fs-row}.
\end{proof}

\section{Discussion}

In this paper we derived bounds on the basic reproduction number $R_0$, the final size vector
$\boldsymbol{\tau}=(\tau_1,\dots,\tau_k)^{\top}$ and the total final size
$\bar{\tau}=\sum_{i=1}^k \pi_i \tau_i$ when the next-generation matrix $M$ is only partly known through its
row or column sums. We consider both the case of having a general next generation matrix $M$ and the special case where $M$ satisfies detailed balance: $\pi_i m_{ij}=\pi_jm_{ji}$ for all $i$ and $j$, which is satisfied when the only difference between types lies in how they mix.

The situation where $M$ is unrestricted turned out to be simpler and sharp bounds were obtained. The upper and lower bounds were however quite far from each other, and further apart the wider the spread of the row (column) sums were. The lower bound was trivial (=0) whenever one type has row (column) sum smaller than 1.

The detailed balance situation was harder and our bounds were in general not sharp. Still the upper and lower bounds were narrower than compared with the general $M$. Some surprising results were also observed for this situation: in a 2-type situation fixing one row sum $r_1$, the lower bound for $R_0$ as well as $\bar\tau$ was in fact \emph{decreasing} in the second row sum $r_2$ (for small $r_2$). It may even happen that the model is supercritical for very small $r_2$ and subcritical if the second type has more contacts.

When restricting to NGMs derived from contact studies, so that $M$ satisfies detailed
balance, the general bounds remain valid but can often be improved.  Detailed balance allows a
symmetrization and hence brings spectral tools for the $R_0$ analysis. For the final size, however, the detailed-balance constraint must interact with the nonlinearity of the
fixed-point equation, making the problem considerably more complicated. In the case
$k=2$ the admissible set of the NGMs is one dimensional and a better analysis is possible; for $k>2$, the geometry is higher dimensional and the conjecture in Section~\ref{subsubsec:DBMk>2} yields further theoretical aspects that could be explored.

The paper admittedly considers a simplified model in which the only heterogeneity is between different types of individual, and where no additional structure or randomness is considered. 

The paper considers a data situation in which the next-generation matrix $M$ is only partly known through its row or column sums. It would be interesting to analyse the situation where both row \emph{and} column sums are observed for each type. Of course the intersection of the rows-only and columns-only bounds could be used, but for most situations these combined bounds will not be sharp. 

In Section \ref{Bounds-empirical}, we considered an empirical  situation where not only row sums are observed, but also some partial information about  row sums is available. It would hence be of interest to derive bounds for this and similar situations. 

A natural extension is when additional data provide
information such as homogeneous or assortative mixing constraints,  or if information beyond one-step transmission, for example, constraints consistent with two generations (i.e.\
with $M^2$), as may arise from contact tracing summaries, the admissible set of NGMs shrinks and the bounds may tighten; in principle the same idea extends to $n$-step constraints.

\newpage
\bibliography{sn-bibliography}

@article{BB25,
  author  = "Britton, T. and Ball, F.",
  title   = "Improving the use of social contact studies in epidemic modelling",
  journal = "Epidemiology",
  volume  = "36",
  pages   = "660--667",
  year    = "2025",
  doi     = "10.1097/EDE.0000000000001876"
}

@article{DHB13,
  author    = "Diekmann, Odo and Heesterbeek, Johan A. P. and Britton, Tom",
  title     = "Mathematical {T}ools for {U}nderstanding {I}nfectious {D}isease {D}ynamics",
  journal = "Princeton University Press",
  year      = "2013",
  doi       = "10.23943/princeton/9780691155395.001.0001"
}

@article{M08,
  author  = "Mossong, J. and Hens, N. and Jit, M. and Beutels, P. and Auranen, K. and others",
  title   = "Social contacts and mixing patterns relevant to the spread of infectious diseases",
  journal = "PLoS Medicine",
  volume  = "5",
  number  = "3",
  pages   = "e74",
  year    = "2008",
  doi     = "10.1371/journal.pmed.0050074"
}

@article{MSW18,
  author  = "Magal, Pierre and Seydi, Ousmane and Webb, Glenn",
  title   = "Final size of a multi-group {S}{I}{R} epidemic model: Irreducible and non-irreducible modes of transmission",
  journal = "Mathematical Biosciences",
  volume  = "301",
  pages   = "59--67",
  year    = "2018",
  doi     = "10.1016/j.mbs.2018.03.020"
}

@misc{SCD,
  author       = "{Social Contact Data Repository}",
  title        = "SocialContactData.org",
  howpublished = "\url{http://www.socialcontactdata.org/data/}",
  note         = "Accessed 7 Nov 2025"
}

@article{CW,
  author  = "Xing, R. and Zhou, B.",
  title   = "Sharp bounds for the spectral radius of nonnegative matrices",
  journal = "Linear Algebra and its Applications",
  volume  = "449",
  pages   = "194--209",
  year    = "2014",
  doi     = "10.1016/j.laa.2014.02.031"
}

@article{RR,
  author    = "Horn, Roger A. and Johnson, Charles R.",
  title     = "Matrix {A}nalysis",
  publisher = "Cambridge University Press",
  year      = "2013",
  doi       = "10.1017/CBO9781139020411"
}

@article{KKT,
  author    = "Boyd, Stephen and Vandenberghe, Lieven",
  title     = "Convex Optimization",
  publisher = "Cambridge University Press",
  year      = "2004",
  doi       = "10.1017/CBO9780511804441"
}

@article{Bauer1958,
  author  = "Bauer, H.",
  title   = "Minimalstellen von Funktionen und Extremalpunkte",
  journal = "Archiv der Mathematik",
  volume  = "9",
  pages   = "389--393",
  year    = "1958",
  doi     = "10.1007/BF01898615"
}

@article{SDP,
    author = "Vandenberghe, Lieven and Boyd, Stephen",
    title = "Semidefinite Programming",
    journal = "SIAM Review",
    volume = "38",
    pages = "49-95",
    year = "1996",
    doi = "10.1137/1038003"
}

@article{WIL,
  author  = {Willem, L. and  Van Kerckhove, K. and  Chao, D.L. and Hens, N. and Beutels, P.},
  title   = {A Nice Day for an Infection? Weather Conditions and Social Contact Patterns Relevant to Influenza Transmission},
  journal = {PLOS ONE},
  year    = {2012},
  volume  = {7},
  number  = {11},
  pages   = {e48695},
  doi     = {10.1371/journal.pone.0048695}
}

@article{VIGGO,
  author  = {Andreasen, V.},
  title   = {The final size of an epidemic and its relation to the basic reproduction number},
  journal = {Bull. Math. Biol.},
  year    = {2011},
  volume  = {73},
  number  = {73},
  pages   = {2305-21},
  doi     = {10.1007/s11538-010-9623-3}
}

@article{CLANCY,
  author  = {Clancy, D. and Pearce, C.J.},
  title   = {The effect of population heterogeneities upon spread of infection.},
  journal = {J. Math. Biol.},
  year    = {2013},
  volume  = {67},
  number  = {},
  pages   = {963–987},
  doi     = {10.1007/s00285-012-0578-x}
}

@article{MA,
  author  = {Ma, J. and Earn, D.J.D.},
  title   = {Generality of the Final Size Formula for an Epidemic of a Newly Invading Infectious Disease.},
  journal = {Bull. Math. Biol.},
  year    = {2006},
  volume  = {68},
  number  = {},
  pages   = {679–702},
  doi     = {10.1007/s11538-005-9047-7}
}

@article{Boyd2,
  author  = {Boyd, Stephen and Diaconis, Persi and Xiao, Lin},
  title   = {Fastest Mixing Markov Chain on a Graph.},
  journal = {SIAM Review},
  year    = {2004},
  volume  = {46},
  number  = {},
  pages   = {667–689},
  doi     = {10.1137/s0036144503423264}
}

@article{S-T,
  author  = {Scalia-Tomba, G},
  title   = {Asymptotic final size distribution of the multitype Reed-Frost process.},
  journal = {J. Appl. Prob.},
  year    = {1986},
  volume  = {23},
  number  = {},
  pages   = {563–584},
  doi     = {10.2307/3213998}
}

@article{Ang,
  author  = {Angeli, L. and Caetano, C.P. and Franco, N. and et al.},
  title   = {Assessing the role of children in the COVID-19 pandemic in Belgium using perturbation analysis.},
  journal = {Nature Communications},
  year    = {2025},
  volume  = {16},
  number  = {},
  pages   = {},
  doi     = {10.1038/s41467-025-57087-z}
}

\section*{Declarations}

\bmhead{Acknowledgements}
T.B.\ is grateful to the Swedish Research Council (grant 2020-0474) for financial support. Most of this work was done while A.B.\ was visiting Stockholm University. A.B.\ is grateful for the hospitality of Stockholm University.

\bmhead{Conflict of interest}
The authors declare no conflict of interest.

\bmhead{Ethics approval}
The paper uses no sensitive data and requires no ethical approval.

\bmhead{Data availability}
The paper uses only publicly available data which is referred to in the article.

\bmhead{Code availability}
Code is available upon request.

\bmhead{Author contribution}
All authors contributed to the study conception and design. The methodological analysis was performed by all authors. The numerical investigations were performed by A.B. and F.B. The first draft of the manuscript was written jointly by all authors and all authors commented on previous versions of the manuscript. All authors read and approved the final manuscript.
\noindent

\begin{appendices}

\section{Proofs of minor results}\label{secA1}
\begin{lemma}\label{lemma:db1}
Let $M\in\R^{k\times k}$ and let $\boldsymbol{\pi}\in(0,1)^k$ with $D_{\boldsymbol{\pi}}:=\diag(\pi_1,\dots,\pi_k)$.
The following are equivalent:
\begin{enumerate}
\item[(i)] $\ \pi_i m_{ij}=\pi_j m_{ji}$ for all $i,j$.
\item[(ii)] $D_{\boldsymbol{\pi}} M$ is symmetric.
\end{enumerate}
Under either condition, the matrix
\[
S\ :=\ D_{\boldsymbol{\pi}}^{1/2}\, M\, D_{\boldsymbol{\pi}}^{-1/2}
\]
is symmetric and similar to $M$, hence $\rho(M)=\rho(S)$.
\end{lemma}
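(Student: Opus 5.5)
The equivalence of (i) and (ii) is a direct entrywise computation. The plan is to compute $(D_{\boldsymbol{\pi}}M)_{ij}=\pi_i m_{ij}$ and $(D_{\boldsymbol{\pi}}M)^\top_{ij}=(D_{\boldsymbol{\pi}}M)_{ji}=\pi_j m_{ji}$. Symmetry of $D_{\boldsymbol{\pi}}M$ means these agree for all $i,j$, which is exactly condition (i). Hence (i) $\Leftrightarrow$ (ii) with no further work.

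For the statement about $S$, I will use that $\pi_i>0$, so $D_{\boldsymbol{\pi}}^{1/2}$ and $D_{\boldsymbol{\pi}}^{-1/2}$ are well-defined, invertible and diagonal, hence symmetric. I will then write
\[
S=D_{\boldsymbol{\pi}}^{-1/2}\,\bigl(D_{\boldsymbol{\pi}}M\bigr)\,D_{\boldsymbol{\pi}}^{-1/2}.
\]
This is valid because $D_{\boldsymbol{\pi}}^{1/2}=D_{\boldsymbol{\pi}}^{-1/2}D_{\boldsymbol{\pi}}$, as diagonal matrices commute. Transposing gives
\[
S^\top=D_{\boldsymbol{\pi}}^{-1/2}(D_{\boldsymbol{\pi}}M)^\top D_{\boldsymbol{\pi}}^{-1/2}=S,
\]
using (ii). Alternatively, entrywise $s_{ij}=\pi_i^{1/2}m_{ij}\pi_j^{-1/2}=\pi_i m_{ij}/\sqrt{\pi_i\pi_j}$, which is symmetric in $i,j$ by (i).

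Finally, $S=TMT^{-1}$ with $T=D_{\boldsymbol{\pi}}^{1/2}$ invertible, so $S$ is similar to $M$. Similar matrices share their characteristic polynomial and hence their spectrum, so $\rho(M)=\rho(S)$.

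No step is a real obstacle. The only points to flag are the need for $\pi_i>0$, which is assumed via $\boldsymbol{\pi}\in(0,1)^k$, and the commutation of diagonal matrices used in rewriting $S$.
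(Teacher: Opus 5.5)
Your proof is correct and follows essentially the same route as the paper. You prove (i) $\Leftrightarrow$ (ii) entrywise, obtain symmetry of $S$ from symmetry of $D_{\boldsymbol{\pi}}M$, and get similarity by conjugating with $D_{\boldsymbol{\pi}}^{1/2}$. The only cosmetic difference is that you write $S$ directly as the congruence $D_{\boldsymbol{\pi}}^{-1/2}(D_{\boldsymbol{\pi}}M)D_{\boldsymbol{\pi}}^{-1/2}$, whereas the paper first derives $M^\top=D_{\boldsymbol{\pi}}MD_{\boldsymbol{\pi}}^{-1}$ and substitutes it into $S^\top$.
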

The following proof is based on known results in \cite{Boyd2} where the authors focused on a stochastic matrix instead the next generation matrix.
\begin{proof}
$(i)\Rightarrow(ii)$: $(D_{\boldsymbol{\pi}} M)_{ij}=\pi_i m_{ij}=\pi_j m_{ji}=(D_{\boldsymbol{\pi}} M)_{ji}$.
$(ii)\Rightarrow(i)$ is the same identity read backwards.

If $D_{\boldsymbol{\pi}} M$ is symmetric, then $M^\top D_{\boldsymbol{\pi}}=D_{\boldsymbol{\pi}} M$, i.e.\ $M^\top=D_{\boldsymbol{\pi}} M D_{\boldsymbol{\pi}}^{-1}$.
Thus
\[
S^\top=(D_{\boldsymbol{\pi}}^{1/2} M D_{\boldsymbol{\pi}}^{-1/2})^\top
=D_{\boldsymbol{\pi}}^{-1/2} M^\top D_{\boldsymbol{\pi}}^{1/2}
=D_{\boldsymbol{\pi}}^{-1/2}(D_{\boldsymbol{\pi}} M D_{\boldsymbol{\pi}}^{-1})D_{\boldsymbol{\pi}}^{1/2}
=D_{\boldsymbol{\pi}}^{1/2} M D_{\boldsymbol{\pi}}^{-1/2}=S,
\]
so $S$ is symmetric. Finally, $M=D_{\boldsymbol{\pi}}^{-1/2} S D_{\boldsymbol{\pi}}^{1/2}$ shows that $M$ and $S$ are similar and therefore share the same spectrum, then  $\rho(M)=\rho(S)=\lambda_{max}(S)$, where the last inequality is satisfied because $S$ is symmetric and then for the Spectral Theorem the eigenvalues are real. 
\end{proof}
The symmetric representative $S$ can be interpreted as the \textit{balanced transmission operator}; minimizing $\rho(M)$
becomes minimizing $\rho(S)$ under linear constraints.
\begin{lemma}\label{lemma:bounds}
Define for each $i=1,...,k$
\[
\zeta_i(w)\ :=\ \pi_i\left[1-\exp\left(-\,\displaystyle\frac{w}{\pi_i}\right)\right],\qquad w\ge 0.
\]
For $\boldsymbol{u}\in\mathbb R_{\ge 0}^k$ with $\sum_{i=1}^k u_i=U$, set
\[
F(\boldsymbol{u})\ :=\ \sum_{i=1}^k \zeta_i(u_i).
\]
Then:
\begin{enumerate}
\item[\emph{(i)}] $F(\boldsymbol{u})\ \ge\ \displaystyle\min_{1\le i\le k}\ \zeta_i(U)$.
\item[\emph{(ii)}] Equality in \emph{(i)} holds if and only if $\boldsymbol{y}=U\boldsymbol{e_j}$ for some $j\in\{1,\dots,k\}$.
\end{enumerate}
\end{lemma}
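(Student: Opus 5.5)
The plan is to use that each $\zeta_i$ is concave with $\zeta_i(0)=0$, so that $F$ is a concave function on the simplex $\Delta_U:=\{\boldsymbol{u}\ge 0:\sum_i u_i=U\}$. The case $U=0$ is trivial: then $\boldsymbol{u}=\boldsymbol{0}$ and both sides vanish. So assume $U>0$.

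For part (i), a concave continuous function on a compact convex polytope attains its minimum at an extreme point. The extreme points of $\Delta_U$ are the vectors $U\boldsymbol{e}_j$. At these vertices $F(U\boldsymbol{e}_j)=\zeta_j(U)+\sum_{i\ne j}\zeta_i(0)=\zeta_j(U)$. Hence $F(\boldsymbol{u})\ge\min_j\zeta_j(U)$. To keep the argument self-contained I would make this explicit rather than cite Bauer's principle. Write $\boldsymbol{u}=\sum_j (u_j/U)\,U\boldsymbol{e}_j$. For each $i$, use the chord inequality for a concave function vanishing at $0$:
\[
\zeta_i(u_i)\ \ge\ \frac{u_i}{U}\,\zeta_i(U).
\]
Summing over $i$ gives
\[
F(\boldsymbol{u})\ \ge\ \sum_i \frac{u_i}{U}\zeta_i(U)\ \ge\ \min_i \zeta_i(U),
\]
since the weights $u_i/U$ sum to one.

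For part (ii), I read the statement as concerning $\boldsymbol{u}$; the $\boldsymbol{y}$ in the statement is presumably a typo. The "if" direction is the vertex computation above. For "only if", I use strict concavity: $\zeta_i''(w)=-\pi_i^{-1}e^{-w/\pi_i}<0$. So the chord inequality $\zeta_i(u_i)\ge (u_i/U)\zeta_i(U)$ is strict whenever $0<u_i<U$. If $\boldsymbol{u}$ is not a vertex, then at least two coordinates are positive, so every positive coordinate satisfies $0<u_i<U$. Each such coordinate then contributes a strict inequality, and the first inequality in the chain is strict. Therefore $F(\boldsymbol{u})>\min_i\zeta_i(U)$, and equality forces $\boldsymbol{u}=U\boldsymbol{e}_j$ for some $j$.

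There is one subtlety to flag. At a vertex $U\boldsymbol{e}_j$ we get $F=\zeta_j(U)$, which equals the minimum only when $j$ is a minimizing index. The correct formulation of (ii) is therefore: equality holds iff $\boldsymbol{u}=U\boldsymbol{e}_j$ with $j\in\arg\min_i\zeta_i(U)$. The main obstacle is just this careful handling of the strictness and of the index. Otherwise the argument is a routine concavity estimate.
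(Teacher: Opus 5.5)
Your proof is correct and takes essentially the paper's route: concavity of $F$ on the simplex puts the minimum at the vertices $U\boldsymbol{e}_j$, where $F=\zeta_j(U)$. The paper cites Bauer's principle for this step, whereas you make it explicit with the chord inequality $\zeta_i(u_i)\ge (u_i/U)\,\zeta_i(U)$ and use strict concavity for the ``only if'' direction, which the paper leaves implicit. Your corrections to (ii) are also right: $\boldsymbol{y}$ should read $\boldsymbol{u}$, and equality holds exactly at $U\boldsymbol{e}_j$ with $j\in\arg\min_i\zeta_i(U)$, not at an arbitrary vertex; the later application in the proof of Theorem~\ref{thm:tot-fs-row} only relies on the ``only if'' direction.
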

\begin{proof}
For each $i$, $\zeta_i$ is strictly concave on $[0,\infty)$ since
\[
\zeta_i''(w)\ =\ -\frac{1}{\pi_i}\,\rm{e}^{-\,w/\pi_i}\ <\ 0.
\]
Write $z_i:=u_i/\pi_i$ and $f(z):=1-\rm{e}^{-z}$, which is concave on $[0,\infty)$. Then
\[
F(\boldsymbol{u})\ =\ \sum_{i=1}^k \pi_i\, f(z_i).
\]
The feasible set
\[
\mathcal U\ :=\ \Big\{\,\boldsymbol{u}\in\mathbb R_{\ge 0}^k:\ \sum_{i=1}^k u_i=U\,\Big\}
\]
is a simplex (convex and compact). Since $F$ is concave, its minimum over a convex compact set is attained at an extreme point (Bauer maximum principle \cite{Bauer1958}). The extreme points of $\mathcal U$ are precisely $\boldsymbol{u^{(j)}}=U\boldsymbol{e_j}$, $j=1,\dots,k$, and at such a point
\[
F(u^{(j)})\ =\ \zeta_j(U).
\]
        Therefore $F(\boldsymbol{u})\ge \min_j \zeta_j(U)$, with equality if and only if $\boldsymbol{u}=U\,\boldsymbol{e_j}$ for some $j$.
\end{proof}
\begin{lemma}\label{lem:opt-qjj}
Fix $j$ and $r_j>0$. For each $q_{jj}\in[0,1]$, define
\[
f(q_{jj}) \;:=\; (1-q_{jj})\,t_{q_{jj}r_j}.
\]
Then:
\begin{enumerate}
\item [(i)] If $q_{jj}r_j\le 1$ then $f(q_{jj})\equiv 0$.
\item [(ii)] If $q_{jj}r_j>1$, the function $f$ is maximized at
\[
q_{jj}^\star \;=\; \frac{\log(r_j)}{r_j-1}.
\]
\end{enumerate}
\end{lemma}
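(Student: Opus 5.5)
The plan is to change variables from $q_{jj}$ to the final size $t=t_{q_{jj}r_j}$ itself, so that $f$ becomes an explicit concave function of $t$. Part (i) is immediate. By \eqref{t-alpha}, $t_\alpha=0$ whenever $0\le\alpha\le1$, so $q_{jj}r_j\le 1$ gives $t_{q_{jj}r_j}=0$ and hence $f(q_{jj})=0$.

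For part (ii), write $r=r_j$. The region $\{q_{jj}\in[0,1]:q_{jj}r>1\}=(1/r,1]$ is nonempty only if $r>1$, which we assume. Set $\alpha=q_{jj}r\in(1,r]$. Since $t_\alpha$ is strictly increasing and continuous on $(1,\infty)$, with $t_\alpha\downarrow 0$ as $\alpha\downarrow1$, the map $\alpha\mapsto t=t_\alpha$ is a bijection from $(1,r]$ onto $(0,t_r]$. Its inverse is read off from $1-t=\mathrm{e}^{-\alpha t}$:
\[
\alpha=\frac{-\log(1-t)}{t},\qquad\text{so}\qquad q_{jj}=\frac{-\log(1-t)}{r\,t}.
\]
Substituting into $f$ gives
\[
f=(1-q_{jj})\,t=t+\frac{\log(1-t)}{r}=:h(t),\qquad t\in(0,t_r].
\]
This reduces the problem to maximizing $h$ over $t\in(0,t_r]$.

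The function $h$ is strictly concave, since $h''(t)=-1/(r(1-t)^2)<0$. Its derivative $h'(t)=1-\frac{1}{r(1-t)}$ vanishes only at $\hat t=1-1/r\in(0,1)$. The one point needing care is that $\hat t$ lies in the admissible range, i.e.\ $\hat t\le t_r$. This holds because of the inequality $t_r>1-1/r$ established in the proof of Theorem~\ref{thm:DBk=2} (it is equivalent to $\mathrm{e}^{r-1}>r$), or directly from Lemma~\ref{lemma:fixedpoint}. So $\hat t$ is interior and is the unique maximizer of $h$ on $(0,t_r]$.

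Transforming back, $1-\hat t=1/r$ gives
\[
\alpha^\star=\frac{\log r}{1-1/r}=\frac{r\log r}{r-1},\qquad q_{jj}^\star=\frac{\alpha^\star}{r}=\frac{\log r_j}{r_j-1}.
\]
As a consistency check, $q_{jj}^\star r_j>1$ exactly when $r_j>1$ (again by $\mathrm{e}^{r-1}>r$), and the maximal value is
\[
h(\hat t)=1-\frac{1}{r}-\frac{\log r}{r}=\frac{\kappa(r)}{r}.
\]
This is what the proof of Theorem~\ref{thm:fs-row} needs, since it gives $(1-q_{jj}^\star)\,r_j\,t_{q_{jj}^\star r_j}=\kappa(r_j)$.
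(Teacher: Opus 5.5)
Your proof is correct, and it takes a different route from the paper's. The paper stays in the variable $q_{jj}$. It differentiates the fixed-point relation implicitly to get $dt_{q_{jj}r_j}/dq_{jj}$, sets $df/dq_{jj}=0$ to obtain the condition $t_{q_{jj}r_j}=1-1/r_j$, and substitutes this back into the final-size equation to read off $q_{jj}^\star$. That argument only locates a stationary point. It does not show the point is a maximum, nor that $q_{jj}^\star$ lies in the admissible interval $(1/r_j,1]$.

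Your substitution $t=t_{q_{jj}r_j}$, with explicit inverse $q_{jj}=-\log(1-t)/(r_jt)$, turns $f$ into $h(t)=t+\log(1-t)/r_j$, which is strictly concave. Because a bijective reparametrization preserves maximizers, you get global optimality for free. The range check reduces to $1-1/r_j<t_{r_j}$, i.e.\ $\mathrm{e}^{r_j-1}>r_j$. You also obtain the optimal value $\kappa(r_j)/r_j$, which is what the proof of Theorem~\ref{thm:fs-row} actually uses. Both routes arrive at the same critical condition $t=1-1/r_j$. The paper's computation is shorter, but yours is the more complete proof.

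One small slip in your consistency check: $q_{jj}^\star r_j>1$ is equivalent to $r_j\log r_j>r_j-1$, i.e.\ $\log x<x-1$ at $x=1/r_j$. It is not equivalent to $\mathrm{e}^{r_j-1}>r_j$, which instead gives $q_{jj}^\star<1$. The check is redundant anyway, since $\hat t\in(0,t_{r_j})$ together with your bijection already places $\alpha^\star$ in $(1,r_j)$.
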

\begin{proof}
If $q_{jj}r_j\le 1$, the scalar final-size equation (\ref{eq:t_alpha-gamma}) has the unique solution $t_{q_{jj}r_j}=0$, so $f(q_{jj})=0$. 

\medskip
Now assume $q_{jj}r_j>1$, so that $r_j>1$ and $t_{q_{jj}r_j}>0$. We rewrite \eqref{eq:t_alpha-gamma} as
\[
H(q_{jj}):=F(q_{jj},t_{q_{jj}r_j})
\;:=\;
t_{q_{jj}r_j}-1+\exp(-q_{jj}r_j t_{q_{jj}r_j})
\;=\;0.
\]
Differentiating implicitly with respect to $q_{jj}$ gives
\[
0
= \frac{d}{dq_{jj}}H(q_{jj})
= \frac{\partial F}{\partial t_{q_{jj}r_j}}\frac{dt_{q_{jj}r_j}}{dq_{jj}}
+ \frac{\partial F}{\partial q_{jj}},
\]
so that
\[
\frac{dt_{q_{jj}r_j}}{dq_{jj}}
= -\frac{\partial F/\partial q_{jj}}{\partial F/\partial t_{q_{jj}r_j}}.
\]
Since $1-t_{q_{jj}r_j}=\exp(-q_{jj}r_j t_{q_{jj}r_j})$, we obtain
\[
\frac{\partial F}{\partial q_{jj}}
= -t_{q_{jj}r_j}r_j\,\exp(-q_{jj}r_j t_{q_{jj}r_j})
= -t_{q_{jj}r_j}r_j(1-t_{q_{jj}r_j}),
\]
and
\[
\frac{\partial F}{\partial t_{q_{jj}r_j}}
= 1-q_{jj}r_j\,\exp(-q_{jj}r_j t_{q_{jj}r_j})
= 1-q_{jj}r_j(1-t_{q_{jj}r_j}).
\]
Hence
\begin{equation}\label{eq:dt-dq}
\frac{dt_{q_{jj}r_j}}{dq_{jj}}
= \frac{t_{q_{jj}r_j}r_j(1-t_{q_{jj}r_j})}{1-q_{jj}r_j(1-t_{q_{jj}r_j})}.
\end{equation}
Next, computing the derivative of $f$:
\[
\frac{df(q_{jj})}{dq_{jj}}
= -t_{q_{jj}r_j}+(1-q_{jj})\frac{dt_{q_{jj}r_j}}{dq_{jj}}.
\]
Setting $\displaystyle\frac{df}{dq_{jj}}=0$ and using \eqref{eq:dt-dq} yields
\[
-t_{q_{jj}r_j}
+ (1-q_{jj})\frac{t_{q_{jj}r_j}r_j(1-t_{q_{jj}r_j})}{1-q_{jj}r_j(1-t_{q_{jj}r_j})}
= 0
\quad\text{if and only if}\quad
\frac{-1+r_j-t_{q_{jj}r_j}r_j}{1-q_{jj}r_j(1-t_{q_{jj}r_j})} = 0,
\]
which implies
\[
t_{q_{jj}r_j} = 1-\frac{1}{r_j}.
\]
Substituting this into \eqref{eq:t_alpha-gamma} gives
\[
1-\frac{1}{r_j}
= 1-\exp\left(-q_{jj}r_j\left(1-\frac{1}{r_j}\right)\right)
\,\;\text{if and only if}\;\,
\exp\left(-q_{jj}r_j\left(1-\frac{1}{r_j}\right)\right) = \frac{1}{r_j},
\]
and therefore
\[
q_{jj}^\star
= \frac{\log(r_j)}{r_j-1}>0.
\]
\end{proof}
\begin{lemma}\label{lem:kkt}
Let $\boldsymbol{r}=(r_1,...,r_k)^\top$, $r_i>0$ for all $i$ and 
\[
\psi_i(u_i)=\pi_i\!\left[1-\exp\left(-\displaystyle\frac{u_i}{\pi_i}\right)\right],\qquad 
\Psi(\boldsymbol{u})=\sum_{i=1}^{k}\psi_i(u_i).
\]
Define the affine constraint
\[
h(\boldsymbol{u})=\sum_{i=1}^{k}u_i-\sum_{i=1}^{k}r_i\,\psi_i(u_i)=0,
\]
with \(u_i\ge 0\).
Consider the Lagrangian
\[
\mathcal{L}(\boldsymbol{u},\lambda,\nu)=-\Psi(\boldsymbol{u})+\lambda\,h(\boldsymbol{u})-\sum_{i=1}^k \nu_i u_i.
\]
Then the problem
\[
\max_{\boldsymbol{u}\ge 0}\ \Psi(\boldsymbol{u})\quad\text{s.t.}\quad h(\boldsymbol{u})=0
\]
admits the following solution.

\smallskip
\emph{(a) If \(r_i\le 1\) for all \(i\)}, the unique feasible point is \(u^\star=0\) and \(\max\Psi=0\).

\smallskip
\emph{(b) If there exists \(i\) with \(r_i>1\)}, there exists a unique \(\lambda^\star>0\) such that the optimizer \(u^\star\) is
\[
u_i^\star \;=\; \pi_i\left[\log\!\left(r_i+\displaystyle\frac{1}{\lambda}\right)\right]_+
\quad\quad where\quad[x]_+=\max\{x,0\}.
\]
The scalar \(\lambda^\star\) is the unique solution \(\lambda>0\) of
\[
F(\lambda)=\sum_{i\in\mathcal{A}_\lambda}\!\left[\ \pi_i\,\log\!\left(r_i+\displaystyle\frac{1}{\lambda}\right)\;-\;r_i\,\pi_i\Bigl(1-\frac{\lambda}{1+\lambda r_i}\Bigr)\right]=0.
\]
where $\mathcal{A}_\lambda=\left\{i\in{1,...,k}\quad s.t.\quad r_i+\displaystyle\frac{1}{\lambda}>1 \right\}$, and the optimal value is
\[
\Psi(\boldsymbol{u}^\star)=\sum_{i\in\mathcal{A}_{\lambda^*}} \pi_i\!\left(1-\frac{\lambda^\star}{1+\lambda^\star r_i}\right).
\]
\end{lemma}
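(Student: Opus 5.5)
The plan is to replace the equality-constrained problem by a convex relaxation, solve that via KKT, and then show the relaxation loses nothing. Each $\psi_i$ is increasing and strictly concave with $\psi_i'(w)={\rm e}^{-w/\pi_i}$. Hence $\Psi$ is concave, and $h(\boldsymbol{u})=\sum_i\bigl(u_i-r_i\psi_i(u_i)\bigr)$ is a sum of convex functions, so it is convex (it is not affine, despite the wording of the statement). Consider the relaxed problem
\[
\max_{\boldsymbol{u}\ge 0}\ \Psi(\boldsymbol{u})\quad\text{s.t.}\quad h(\boldsymbol{u})\le 0 .
\]
This maximizes a concave function over a convex set. The feasible set is compact, because $u_i-r_i\psi_i(u_i)\ge u_i-r_i\pi_i\to\infty$, so a maximizer exists. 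The relaxation is exact: if a feasible point had $h<0$, we could increase any coordinate $u_i$. This strictly increases $\Psi$, and by continuity and $h\to\infty$ along that ray we reach $h=0$. So every maximizer of the relaxed problem satisfies $h=0$, and the two problems share their optimum.

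\textbf{Case (a).} If $r_i\le 1$ for all $i$, then each summand of $h$ satisfies
\[
u_i-r_i\psi_i(u_i)\ \ge\ u_i-\pi_i\bigl(1-{\rm e}^{-u_i/\pi_i}\bigr)\ \ge\ 0,
\]
using $1-{\rm e}^{-x}\le x$. The second inequality is an equality only at $u_i=0$. Hence $h(\boldsymbol{u})=0$ forces $\boldsymbol{u}=\boldsymbol{0}$, which is the unique feasible point, and $\max\Psi=0$.

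\textbf{Case (b).} Suppose some $r_i>1$. For small $\varepsilon>0$ we have $h(\varepsilon\boldsymbol{e}_i)\approx(1-r_i)\varepsilon<0$, so Slater's condition holds. KKT is then necessary and sufficient for the relaxed problem, with a multiplier $\lambda\ge0$ and $\nu_i\ge 0$, $\nu_iu_i=0$.

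The stationarity condition reads
\[
-{\rm e}^{-u_i/\pi_i}+\lambda\bigl(1-r_i{\rm e}^{-u_i/\pi_i}\bigr)-\nu_i=0 .
\]
The case $\lambda=0$ is impossible, since it would force ${\rm e}^{-u_i/\pi_i}=-\nu_i\le 0$. So $\lambda>0$.

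If $u_i>0$ then $\nu_i=0$, which gives ${\rm e}^{-u_i/\pi_i}=\lambda/(1+\lambda r_i)$, i.e.
\[
u_i=\pi_i\log\bigl(r_i+1/\lambda\bigr),
\]
and this is positive exactly when $i\in\mathcal A_\lambda$. If $u_i=0$, then $\nu_i=\lambda-(1+\lambda r_i)\ge0$, i.e. $i\notin\mathcal A_\lambda$. Both cases are summarized by the $[\cdot]_+$ formula. Substituting this into $h=0$ gives exactly $F(\lambda)=0$, and substituting into $\Psi$ gives the stated optimal value, since $\psi_i(u_i^\star)=\pi_i\bigl(1-\lambda/(1+\lambda r_i)\bigr)$ on $\mathcal A_\lambda$.

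\textbf{Uniqueness of $\lambda^\star$.} I expect this to be the main technical step. First, $F$ is continuous in $\lambda$. As $r_i+1/\lambda\downarrow1$, both the log term and $1-\lambda/(1+\lambda r_i)$ tend to $0$, so indices enter and leave $\mathcal A_\lambda$ without jumps. Second, on each active term a direct computation gives
\[
\frac{d}{d\lambda}\Bigl[\log\bigl(r_i+\tfrac1\lambda\bigr)-r_i\Bigl(1-\tfrac{\lambda}{1+\lambda r_i}\Bigr)\Bigr]=-\frac{1}{\lambda(1+\lambda r_i)^2}<0,
\]
so $F$ is strictly decreasing wherever $\mathcal A_\lambda\ne\emptyset$. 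Third, the limits have opposite signs:
\begin{itemize}
\item as $\lambda\to0$, the logarithmic terms diverge while the other terms stay bounded, so $F\to+\infty$;
\item as $\lambda\to\infty$, $\mathcal A_\lambda$ shrinks to $\{i: r_i>1\}$ and $F\to\sum_{r_i>1}\pi_i(\log r_i-r_i+1)<0$.
\end{itemize}
Hence $F$ has exactly one positive root $\lambda^\star$. By convexity of the relaxation and the exactness argument above, the corresponding $\boldsymbol{u}^\star$ is the optimizer of the original problem.
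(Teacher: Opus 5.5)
Your proof is correct, and it takes a genuinely different route from the paper's.

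\textbf{The paper's route.} The paper works directly with the equality-constrained problem. It calls $h$ affine and the set $\{h=0,\ \boldsymbol{u}\ge 0\}$ convex, and on that basis invokes KKT as necessary and sufficient. It derives the $[\cdot]_+$ formula, the equation $F(\lambda)=0$, and the monotonicity and limits of $F$ as you do. It then needs a separate case analysis to rule out KKT points with $\lambda\le 0$, because an equality multiplier has no sign a priori; when all $r_i>1$ this uses the convex functions $g_i(z)=-\log z-r_i(1-z)$.

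\textbf{A flaw in that route, which yours avoids.} As you noted, $h$ is in fact strictly convex. So in case (b) its zero set is not convex: $\boldsymbol{0}$ and $\boldsymbol{u}^\star$ both lie on it, while their midpoint has $h<0$. The paper's sufficiency claim is therefore not justified as written. What actually rescues it, though the paper never says so, is that $\lambda^\star>0$ makes the Lagrangian convex in $\boldsymbol{u}$.

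\textbf{What your relaxation buys.} Relaxing to $h\le 0$ and proving the relaxation is tight (monotonicity of $\Psi$, coercivity of $h$) makes the problem genuinely convex. Slater's condition at $\varepsilon\boldsymbol{e}_i$ gives necessity of KKT, and convexity gives sufficiency. The sign $\lambda\ge 0$ comes for free, so only $\lambda=0$ needs excluding. Compactness together with strict concavity gives existence and uniqueness of the maximizer. The paper's extra work adds only that no stationary point of the equality-constrained Lagrangian has $\lambda\le 0$, which the lemma does not need.

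\textbf{One point to state explicitly.} In case (b), the index with $r_i>1$ belongs to $\mathcal{A}_\lambda$ for every $\lambda>0$. Hence $F$ is strictly decreasing on all of $(0,\infty)$, not merely ``wherever'' $\mathcal{A}_\lambda\neq\emptyset$.
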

\begin{proof}
Each \(\psi_i\) is increasing and strictly concave on \([0,\infty)\) since
\[
\psi_i'(u_i)=\exp\left(-\displaystyle\frac{u_i}{\pi_i}\right)\in(0,1],\qquad
\psi_i''(u_i)=-\frac{1}{\pi_i}\exp\left(-\displaystyle\frac{u_i}{\pi_i}\right)<0.
\]
Hence the linear combination \(\Psi=\displaystyle\sum_{i=1}^k \psi_i\) is strictly concave. Therefore, maximizing the concave objective \(\Psi\) over the convex feasible set defined by the affine equality \(h(\boldsymbol{u})=0\) and the inequalities \(u_i\ge 0\) is a convex optimization problem and to solve it we use the Karush-Kuhn-Tucker conditions (\cite{KKT}, Ch 5.5.3). Under these constraints, the KKT conditions are necessary, however by strict concavity of \(\Psi\) they are also sufficient; consequently, any feasible KKT point is the unique global maximizer (\cite{KKT}, Ch 4.2.1).

\smallskip
The KKT conditions
\begin{subequations}\label{eq:KKT}
\begin{align}
u_i &\ge 0, && i=1,\dots,k, \tag{KKT1}\label{KKT1}\\
h(\boldsymbol{u}) &= 0, \tag{KKT2}\label{KKT2}\\
\nu_i &\ge 0, && i=1,\dots,k, \tag{KKT3}\label{KKT3}\\
u_i\,\nu_i &= 0, && i=1,\dots,k, \tag{KKT4}\label{KKT4}\\
\frac{\partial \mathcal{L}}{\partial u_i}(\boldsymbol{u},\lambda,\nu) &= 0, && i=1,\dots,k. \tag{KKT5}\label{KKT5}
\end{align}
\end{subequations}
By the stationarity condition \eqref{KKT5}, for each \(i\),
\begin{align*}
\frac{\partial\mathcal{L}}{\partial u_i}
&=-\,\psi_i'(u_i)+\lambda\!\left(1-r_i\,\psi_i'(u_i)\right)-\nu_i\\
&=-\,\exp\left(-\displaystyle\frac{u_i}{\pi_i}\right)+\lambda\!\left(1-r_i \exp\left(-\displaystyle\frac{u_i}{\pi_i}\right)\right)-\nu_i=0.
\end{align*}
\noindent\emph{Active coordinates (\(u_i>0\)).}\\
By complementary slackness \eqref{KKT4} we have \(\nu_i=0\), hence
\[
\exp\left(-\displaystyle\frac{u_i}{\pi_i}\right)=\frac{\lambda}{1+\lambda r_i}.
\]
\noindent\emph{Inactive coordinates (\(u_i=0\)).}\\
Stationarity reduces to
\[
-1+\lambda(1-r_i)-\nu_i=0,
\]
with \(\nu_i\ge 0\) by \eqref{KKT3}.

\smallskip
\textbf{Case (a): \(r_i\le 1\)} for all \(i\).\\
For every \(x\ge 0\) we have \(1-\exp\left(-x\right)\le x\) with equality iff \(x=0\).
Hence \(\psi_i(u_i)\le u_i\) for all \(i\), and thus
\[
\sum_{i=1}^k r_i\,\psi_i(u_i)\ \le\ \sum_{i=1}^k \psi_i(u_i)\ \le\ \sum_{i=1}^k u_i.
\]
Imposing the equality constraint \eqref{KKT2} (i.e., \(h(\boldsymbol{u})=0\)) gives
\[
\sum_{i=1}^k u_i \ =\ \sum_{i=1}^k r_i\,\psi_i(u_i)\ \le\ \sum_{i=1}^k \psi_i(u_i)\ \le\ \sum_{i=1}^k u_i,
\]
so all inequalities are equalities. Therefore \(\psi_i(u_i)=u_i\) for every \(i\),
which by $1-\exp\left(-x\right)=x$ if and only if $x=0$ forces \(u_i=0\) for all \(i\).
In particular, the only feasible point satisfying \eqref{KKT1}-\eqref{KKT2} is
\(\boldsymbol{u}^\star=0\), whence \(\Psi(\boldsymbol{u}^\star)=0\). By strict concavity of \(\Psi\),
\(\boldsymbol{u}^\star\) is the unique global maximizer and it trivially satisfies
\eqref{KKT1}-\eqref{KKT5}.

\smallskip
\textbf{Case (b):} there exists \(i\) with \(r_i>1\).\\
Assume \(\lambda>0\) and let \(\mathcal{A}=\{i:\,u_i>0\}\) denote the active set.

\medskip
\emph{Inactive coordinates} (\(i\notin\mathcal{A}\), i.e., \(u_i=0\) by \eqref{KKT1}).
From stationarity \eqref{KKT5} we get
\[
\nu_i \;=\; \lambda(1-r_i)-1.
\]
Feasibility of the multiplier \(\nu_i\ge 0\) by \eqref{KKT3} requires
\[
\lambda(1-r_i)\ \ge\ 1\quad\Longleftrightarrow\quad
\lambda \ \ge\ \frac{1}{\,1-r_i\,}.
\]
This inequality can hold only if \(r_i<1\). If \(r_i\ge 1\), the left-hand side is non positive 
\((\le 0)\) for \(\lambda>0\), hence the constraint \(\nu_i\ge 0\) cannot be met.
Therefore, any index with \(r_i\ge 1\) cannot be inactive when \(\lambda>0\).

\medskip
\emph{Active coordinates} (\(i\in\mathcal{A}\), i.e., \(u_i>0\)).
By complementary slackness \eqref{KKT4} we have \(\nu_i=0\), and \eqref{KKT5} yields
\[
\exp\left(-\displaystyle\frac{u_i}{\pi_i}\right)=\frac{\lambda}{1+\lambda r_i}\in(0,1)
\qquad\text{if and only if}\qquad
u_i=\pi_i\log\!\left(\frac{1+\lambda r_i}{\lambda}\right)>0.
\]
For \(r_i\ge 1\) this positivity holds automatically for any \(\lambda>0\).
For \(r_i<1\), the condition \(u_i>0\) is equivalent to
\(\lambda<\displaystyle\frac{1}{1-r_i}\).

\medskip
Summarizing, for each \(i\):
\[
\begin{cases}
r_i\ge 1 \quad\text{then}\quad\quad u_i>0,\\[3pt]
r_i<1 \quad\text{then}\quad 
\begin{cases}
u_i>0\ \text{ if } \lambda<\dfrac{1}{1-r_i},\\[6pt]
u_i=0\ \text{ if } \lambda\ge\dfrac{1}{1-r_i}.
\end{cases}
\end{cases}
\]
The multiplier \(\lambda>0\) is then determined uniquely by the equality constraint
\eqref{KKT2}, i.e., by \(h(\boldsymbol{u})=0\). We define thus
\[
\mathcal{A}=\mathcal{A}_\lambda:=\left\{\,i:\ r_i+\displaystyle\frac{1}{\lambda}>1\,\right\}.
\]

\smallskip
For \(i\in\mathcal{A}_\lambda\) we have
\[
u_i=\pi_i\log\!\left(\frac{1+\lambda r_i}{\lambda}\right)
\quad\text{and}\quad
\psi_i(u_i)=\pi_i\!\left(1-\frac{\lambda}{1+\lambda r_i}\right).
\]
Plugging the explicit \(u_i\) and \(\psi_i(u_i)\) into the constraint \(h(\boldsymbol{u})=0\) gives
\[
F(\lambda):=\sum_{i\in\mathcal{A}_\lambda}
\left[\ \pi_i\log\!\left(r_i+\frac{1}{\lambda}\right)
\;-\;r_i\,\pi_i\left(1-\frac{\lambda}{1+\lambda r_i}\right)\right]=0.
\]
We next show that \(F\) is strictly decreasing on \((0,\infty)\) and crosses zero exactly once.

\smallskip
For fixed \(i\) with \(r_i+1/\lambda>1\),
\[
\displaystyle\frac{d}{d\lambda}\,\log\!\left(r_i+1/\lambda\right)
=-\frac{1}{\lambda(1+\lambda r_i)},
\qquad
\displaystyle\frac{d}{d\lambda}\left[-r_i\left(1-\displaystyle\frac{\lambda}{1+\lambda r_i}\right)\right]
=\displaystyle\frac{r_i}{(1+\lambda r_i)^2}.
\]
Hence
\begin{align*}
    \frac{d}{d\lambda}\left[\pi_i\log\!\left(r_i+1/\lambda\right)
-r_i\pi_i\left(1-\frac{\lambda}{1+\lambda r_i}\right)\right]&=\pi_i\!\left[-\frac{1}{\lambda(1+\lambda r_i)}+\frac{r_i}{(1+\lambda r_i)^2}\right]\\
&=-\frac{\pi_i}{\lambda(1+\lambda r_i)^2}.
\end{align*}
Therefore,
\[
F'(\lambda)=-\sum_{i\in\mathcal{A}_\lambda}\frac{\pi_i}{\lambda(1+\lambda r_i)^2}<0,
\]
so \(F\) is strictly decreasing on \((0,\infty)\).

\smallskip
As $\lambda$ tends to $ 0^+$,
\[
\log\!\left(r_i+\displaystyle\frac{1}{\lambda}\right)
=\log\!\left(\displaystyle\frac{1+\lambda r_i}{\lambda}\right)
=-\log \lambda+\log(1+\lambda r_i)
=-\log \lambda+O(1)
\sim \log\!\left(\tfrac{1}{\lambda}\right)\to +\infty,
\]
while the second term remains \(O(1)\); hence \(F(\lambda)\to +\infty\).
As $\lambda$ tends to $ +\infty$,
\[
\log\!\left(r_i+\tfrac{1}{\lambda}\right)\ \longrightarrow\ \log r_i,
\]
and
\[
1-\frac{\lambda}{1+\lambda r_i}\longrightarrow 1-\frac{1}{r_i}=\frac{r_i-1}{r_i}.
\]
Hence
\[
\lim_{\lambda\to +\infty}F(\lambda)=\sum_{i:\,r_i>1}\pi_i\bigl(\log r_i-(r_i-1)\bigr)<0,
\]
since \(\log x\le x-1\) with strict inequality for \(x\ne1\).
By the intermediate value theorem and strict monotonicity, there exists a unique \(\lambda^\star>0\) with \(F(\lambda^\star)=0\).

\smallskip
Set
\[
u_i^\star=\pi_i\,\left[\log\!\left(r_i+\displaystyle\frac{1}{\lambda}\right)\right]_+,
\]
so that
\[
u_i^\star>0\quad\text{if and only if}\quad r_i+\displaystyle\frac{1}{\lambda^\star}>1.
\]
Define
\[
\nu_i^\star=
\begin{cases}
0, &\text{if } u_i^\star>0,\\[2pt]
\lambda^\star(1-r_i)-1, &\text{if } u_i^\star=0,
\end{cases}
\]
which satisfies \(\nu_i^\star\ge0\) on the inactive indices as argued above. Then all KKT conditions hold and the optimal value is
\[
\Psi(\boldsymbol{u}^\star)=\sum_{i\in\mathcal{A}_{\lambda^\star}} \pi_i\!\left[1-\frac{\lambda^\star}{1+\lambda^\star r_i}\right]=\sum_{i=1}^k \pi_i\!\left[1-\frac{\lambda^\star}{1+\lambda^\star r_i}\right]_+.
\]
\end{proof}
We note that no KKT solution can have \(\lambda\le 0\).

\medskip
The case \(\lambda=0\) is impossible, in fact for any inactive coordinate (i.e. \(u_i=0\)) by \eqref{KKT1}, stationarity \eqref{KKT5} gives
\[
\nu_i=\lambda(1-r_i)-1=-1,
\]
which violates dual feasibility \eqref{KKT3}. For any active coordinate
(i.e. \(u_i>0\)), complementary slackness \eqref{KKT4} yields \(\nu_i=0\) and
\eqref{KKT5} becomes \(\rm{e}^{-u_i/\pi_i}=\lambda/(1+\lambda r_i)=0\), which is
impossible since $\rm{e}^{-u_i/\pi_i}\in(0,1]$. Hence \(\lambda=0\) cannot occur.

\medskip
If there exists at least one \(r_i\le 1\), for an active index (i.e. \(u_i>0\)), \eqref{KKT4} implies \(\nu_i=0\), and
\eqref{KKT5} gives
\[
\exp\left(-\displaystyle\frac{u_i}{\pi_i}\right)=\displaystyle\frac{\lambda}{1+\lambda r_i}\in(0,1].
\]
If \(\lambda<0\), the right-hand side is \(>0\) only when \(1+\lambda r_i<0\),
i.e.\ \(\lambda<-1/r_i\). In particular, if \(r_i\le 1\), then \(1+\lambda r_i>0\) for every \(\lambda<0\), hence
\(\lambda/(1+\lambda r_i)\le 0\), a contradiction. Thus indices with \(r_i\le 1\)
cannot be active when \(\lambda<0\). If \(u_i=0\) (inactive), \eqref{KKT5} gives \(\nu_i=\lambda(1-r_i)-1\).
If \(r_i<1\) and \(\lambda<0\), then \(\lambda(1-r_i)<0\), so \(\nu_i<-1<0\),
violating \eqref{KKT3}.
Therefore, as soon as there exists at least one \(r_i\le 1\), no KKT point solution can
have \(\lambda<0\).

\medskip
If \(r_i>1\ \forall i\) also excludes \(\lambda<0\). In fact, set \(z_i:=\exp(-u_i/\pi_i)\). For active indices,
\eqref{KKT4}-\eqref{KKT5} give
\[
z_i=\frac{\lambda}{1+\lambda r_i},\qquad
\lambda=\frac{z_i}{1-r_i z_i}.
\]
If \(\lambda<0\) and \(r_i>1\), necessarily \(1-r_i z_i<0\), hence
\(z_i>1/r_i\), i.e.\ \(z_i\in(1/r_i,1)\).
Rewriting the equality constraint \eqref{KKT2} in terms of \(z_i\):
\[
0=\sum_{i=1}^k u_i-\sum_{i=1}^k r_i\psi_i(u_i)
=\sum_{i=1}^k \pi_i\bigl[-\log z_i - r_i(1-z_i)\bigr]
=\sum_{i=1}^k \pi_i\,g_i(z_i),
\]
where \(g_i(z):=-\log z - r_i(1-z)\) for \(z\in(0,1]\).
We have \(g_i'(z)=-1/z+r_i\) and \(g_i''(z)=1/z^2>0\), so each \(g_i\) is convex
with a unique minimum at \(z=1/r_i\), where
\[
g_i(1/r_i)=\log r_i-(r_i-1)<0 \qquad (r_i>1).
\]
Moreover, on \((1/r_i,1)\) one has \(g_i'(z)>0\), hence
\(g_i(z)\in\bigl(\log r_i-r_i+1,\,0\bigr)\subset(-\infty,0)\).
Thus, if \(\lambda<0\) forcing every active \(z_i\in(1/r_i,1)\),
\[
\sum_{i\in \mathcal{A}}\pi_i\,g_i(z_i)\;<\;0.
\]
Inactive indices have \(u_i=0\) and contribute \(r_i\psi_i(0)=0\), so they cannot
offset the negativity. Therefore \(\sum_i \pi_i\,\phi_i(z_i)<0\), contradicting
\eqref{KKT2}. Hence no solution with \(\lambda<0\) exists when \(r_i>1\) for all \(i\).

\medskip
In conclusion:
\(\lambda=0\) violates \eqref{KKT3}-\eqref{KKT5}, and \(\lambda<0\) contradicts
\eqref{KKT1}-\eqref{KKT5} either if there exist some \(r_i\le 1\) or if \(r_i>1\ \forall i\). Consequently, the
closing equation \(F(\lambda)=0\) admits a unique solution with \(\lambda^\star>0\).
\begin{lemma}\label{lemma:lmi}
Let $S=\{s_{i,j}\}\in\mathbb{R}^{k\times k}$ a symmetric matrix ($S=S^\top$) and $\lambda_{max}(S)$ its maximum eigenvalue, then:
\begin{enumerate}
\item[(i)] Its epigraph has the linear matrix inequality representation
\[
\bigl\{ (S,t): \lambda_{max}(S)\le t \bigr\}
\ =\
\bigl\{ (S,t):\ tI - S \succeq 0 \bigr\}.
\]
\item[(ii)] $S\mapsto \lambda_{max}(S)$ is convex.
\end{enumerate}
In particular, if additionally $S\geq 0$ entrywise, then $\rho(S)=\lambda_{max}(S)$ and therefore:
\[
\bigl\{ (S,t): \rho(S)\le t \bigr\}
\ =\
\bigl\{ (S,t):\ tI - S \succeq 0 \bigr\}.
\]
is a convex space.
\end{lemma}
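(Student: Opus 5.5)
The plan is to work directly with the spectral characterization of symmetric matrices and the variational (Rayleigh--Ritz) formula already invoked in the proof of Theorem~\ref{thm:low-cols-db}, namely
\[
\lambda_{\max}(S)=\max_{\|\boldsymbol{x}\|_2=1}\boldsymbol{x}^\top S\boldsymbol{x}.
\]
Everything else should follow from this identity together with elementary facts about positive semidefiniteness and Perron--Frobenius.

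\textbf{Part (i).} For symmetric $S$, diagonalize $S=U\Lambda U^\top$ with $U$ orthogonal. Then $tI-S=U(tI-\Lambda)U^\top$, and this matrix is positive semidefinite if and only if $t-\lambda_j\ge 0$ for every eigenvalue $\lambda_j$ of $S$. That condition is exactly $\lambda_{\max}(S)\le t$. Equivalently, one can argue without diagonalization:
\[
tI-S\succeq 0 \iff \boldsymbol{x}^\top(tI-S)\boldsymbol{x}\ge 0 \ \text{for all unit } \boldsymbol{x} \iff t\ge \max_{\|\boldsymbol{x}\|_2=1}\boldsymbol{x}^\top S\boldsymbol{x}=\lambda_{\max}(S).
\]
This gives the set equality.

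\textbf{Part (ii).} Convexity follows from the same formula. The map $S\mapsto \boldsymbol{x}^\top S\boldsymbol{x}$ is linear in $S$ for each fixed $\boldsymbol{x}$, and a pointwise supremum of linear functions is convex. Alternatively, one can show the epigraph from (i) is convex. Take $(S_1,t_1)$ and $(S_2,t_2)$ in it and $\alpha\in[0,1]$. Then
\[
(\alpha t_1+(1-\alpha)t_2)I-(\alpha S_1+(1-\alpha)S_2)=\alpha(t_1I-S_1)+(1-\alpha)(t_2I-S_2),
\]
which is a convex combination of positive semidefinite matrices and hence positive semidefinite. A function whose epigraph is convex is convex.

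\textbf{Final claim.} When $S$ is also entrywise nonnegative, we need $\rho(S)=\lambda_{\max}(S)$. Since $S$ is symmetric, $\rho(S)=\max(\lambda_{\max}(S),-\lambda_{\min}(S))$. By Perron--Frobenius for nonnegative matrices, $\rho(S)$ is itself an eigenvalue, so $\rho(S)=\lambda_{\max}(S)$. A more elementary route is also available. Let $\boldsymbol{v}$ be a unit eigenvector for an eigenvalue $\mu$ with $|\mu|=\rho(S)$, and let $|\boldsymbol{v}|$ denote its entrywise absolute value. Entrywise nonnegativity of $S$ gives
\[
|\boldsymbol{v}|^\top S|\boldsymbol{v}|\ \ge\ |\boldsymbol{v}^\top S\boldsymbol{v}|=|\mu|=\rho(S),
\]
so $\lambda_{\max}(S)\ge\rho(S)$. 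The reverse inequality $\lambda_{\max}(S)\le\rho(S)$ is trivial, so the two are equal. Substituting $\rho(S)$ for $\lambda_{\max}(S)$ in (i) then gives the stated representation.

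The set $\{(S,t):tI-S\succeq 0\}$ is convex, being the preimage of the convex PSD cone under an affine map, as shown in (ii). Intersecting with the convex set $\{S=S^\top,\ S\ge 0\}$ keeps it convex, so the epigraph of $\rho$ restricted to nonnegative symmetric matrices is convex.

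The only point needing care is the identification $\rho(S)=\lambda_{\max}(S)$. For a general symmetric matrix it fails, since the most negative eigenvalue can dominate in absolute value. This is precisely where entrywise nonnegativity enters, through the $|\boldsymbol{v}|$ argument above. The remaining steps are routine.
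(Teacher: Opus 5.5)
Your proof is correct and follows essentially the same route as the paper. Both use the Rayleigh--Ritz characterization (equivalently, the eigenvalues of $tI-S$) for the LMI representation in (i), convexity of $\lambda_{\max}$ as a pointwise maximum of the linear maps $S\mapsto \boldsymbol{x}^\top S\boldsymbol{x}$ for (ii), and Perron--Frobenius to get $\rho(S)=\lambda_{\max}(S)$ for entrywise nonnegative $S$. Your $|\boldsymbol{v}|$ argument is a self-contained substitute for the Perron--Frobenius citation. Intersecting explicitly with the convex set of nonnegative symmetric matrices also makes precise the final convexity claim, which the paper states loosely.
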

\begin{proof}
(i) Suppose $tI - S \succeq 0$. Then for every $\boldsymbol{v}\in\R^k$,
\[
\boldsymbol{v}^\top (tI - S)\boldsymbol{v} \ge 0 \quad\text{if and only if}\quad \boldsymbol{v}^\top S \boldsymbol{v} \le t\|\boldsymbol{v}\|^2.
\]
In particular, for $\|\boldsymbol{v}\|=1$ we obtain $\boldsymbol{v}^\top S \boldsymbol{v} \le t$, hence $\lambda_{max}(S)\le t$.

\medskip
(ii) Since $S$ is symmetric, the spectral radius can be expressed using the Rayleigh quotient (see \cite{RR}),
\[
\lambda_{max}(S) \;=\; \max_{\|\boldsymbol{v}\|=1} \boldsymbol{v}^\top S \boldsymbol{v}.
\]
Moreover, for each vector $\boldsymbol{v}\in\mathbb{R}^k$,
\begin{equation*}
tr(S\boldsymbol{v}\boldsymbol{v}^\top)=\displaystyle\sum_{i=1}^{k}\left(S\boldsymbol{v}\boldsymbol{v}^\top \right)_{ii}=\displaystyle\sum_{i=1}^{k}\displaystyle\sum_{j=1}^{k}s_{ij}(\boldsymbol{v}\boldsymbol{v}^\top)_{ji}=\displaystyle\sum_{i=1}^{k}\displaystyle\sum_{j=1}^{k}s_{ij}v_iv_j=\displaystyle\sum_{i=1}^{k}\displaystyle\sum_{j=1}^{k}v_is_{ij}v_j=\boldsymbol{v}^\top S \boldsymbol{v}.
\end{equation*}
Since the trace is linear, the map $S\mapsto \boldsymbol{v}^\top S \boldsymbol{v} = tr(S \boldsymbol{v}\boldsymbol{v}^\top)$ is linear in $S$ and since the maximum of linear (hence convex) functions is convex, it follows that $\lambda_{max}(S)$ is convex.

\medskip
Conversely, if $\lambda_{max}(S)\le t$, then all eigenvalues of $S$ are at most $t$.  
Therefore all eigenvalues of $tI-S$ are nonnegative, i.e.\ $tI - S\succeq 0$.

\medskip
Furthermore, if $S\geq 0$ entrywise, from Perron-Frobenius's Theorem, the maximum eigenvalue is real, non-negative and it corresponds to the spectral radius of $S$.
This proves the equivalence
\[
\rho(S)\le t \quad\text{if and only if}\quad tI - S\succeq 0,
\]
and from (i) the spectral radius $\rho(S)$ is convex hence also its epigraph. This claimed linear matrix inequality representation of the epigraph.
\end{proof}
\begin{lemma}\label{lem:cont-eps}
Let $\boldsymbol{\tau}(M)\in[0,1]^k$ denote the maximal fixed point of
\begin{equation*}
\boldsymbol{\tau}
=T(M,\boldsymbol{\tau})
=\boldsymbol{1}-\exp\!\Bigl(-D_{\boldsymbol{\pi}}^{-1}M^\top D_{\boldsymbol{\pi}}\,\boldsymbol{\tau}\Bigr),
\end{equation*}
where the exponential is taken componentwise, and define the total final size
$\bar\tau(M):=\boldsymbol{\pi}^\top\boldsymbol{\tau}(M)$.
Let $(M_\varepsilon)_{\varepsilon>0}\subset\mathbb{R}^{k\times k}_{\ge 0}$ be such that
$M_\varepsilon\to M$ entrywise as $\varepsilon\to 0$. Then:
\begin{enumerate}
\item[i)] $\rho(M_\varepsilon)\to \rho(M)$.
\item[ii)] $\boldsymbol{\tau}(M_\varepsilon)\to \boldsymbol{\tau}(M)$ componentwise.
\item[iii)] $\bar\tau(M_\varepsilon)\to \bar\tau(M)$.
\end{enumerate}
\end{lemma}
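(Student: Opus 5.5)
The three parts are proved separately. Part (iii) follows from (ii) by linearity, and (i) is standard. The real content is (ii), which I split into an upper (limsup) and a lower (liminf) estimate, using monotonicity and concavity of $T$.

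For (i), the characteristic polynomial $\det(\lambda I-M_\varepsilon)$ has coefficients that are polynomial in the entries, so they converge to those of $\det(\lambda I-M)$. The roots, counted with multiplicity, depend continuously on the coefficients (Hurwitz/Rouché). Hence the multiset of eigenvalues converges, and so does $\rho(M_\varepsilon)=\max|\lambda|$. Alternatively, by Perron--Frobenius applied to nonnegative matrices, $\rho$ is the largest real eigenvalue and one can argue directly.

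For (ii) I use the following facts. For fixed $M\ge 0$, the map $\boldsymbol{\tau}\mapsto T(M,\boldsymbol{\tau})$ is componentwise nondecreasing and concave on $[0,1]^k$, with $T(M,\boldsymbol 0)=\boldsymbol 0$. The maximal fixed point equals $\lim_n T^n(M,\boldsymbol 1)$, a decreasing limit. It is also the largest $\boldsymbol{w}$ with $T(M,\boldsymbol w)\ge \boldsymbol w$: if $T(M,\boldsymbol w)\ge\boldsymbol w$, then by monotonicity $T^n(M,\boldsymbol 1)\ge \boldsymbol w$ for all $n$.

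\emph{Upper estimate.} Take any sequence $\varepsilon_n\to0$ along which $\boldsymbol{\tau}(M_{\varepsilon_n})$ converges, say to $\boldsymbol{\tau}^\circ$; compactness of $[0,1]^k$ guarantees such subsequences exist. Since $T$ is jointly continuous in $(M,\boldsymbol\tau)$, $\boldsymbol{\tau}^\circ$ is a fixed point of $T(M,\cdot)$. Therefore $\boldsymbol{\tau}^\circ\le\boldsymbol{\tau}(M)$, and so $\limsup\boldsymbol{\tau}(M_\varepsilon)\le\boldsymbol{\tau}(M)$ componentwise.

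\emph{Lower estimate.} This is the main obstacle, since maximal fixed points are in general only upper semicontinuous. The key is a strict sub-solution. Fix $\delta\in(0,1)$ and set $\boldsymbol w_\delta=(1-\delta)\boldsymbol{\tau}(M)$, and let $x_i=(D_{\boldsymbol\pi}^{-1}M^\top D_{\boldsymbol\pi}\boldsymbol{\tau}(M))_i$.
\begin{itemize}
\item If $\tau_i(M)>0$, then $x_i>0$. Strict concavity of $s\mapsto1-e^{-s}$ together with its value $0$ at $s=0$ gives
\[
T(M,\boldsymbol w_\delta)_i=1-e^{-(1-\delta)x_i}>(1-\delta)(1-e^{-x_i})=(w_\delta)_i .
\]
\item If $\tau_i(M)=0$, then $(w_\delta)_i=0\le T(M_\varepsilon,\boldsymbol w_\delta)_i$ trivially.
\end{itemize}
There are finitely many indices of the first kind, each with a strict margin, and $T(\cdot,\boldsymbol w_\delta)$ is continuous in $M$. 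Hence for all sufficiently small $\varepsilon$ we get $T(M_\varepsilon,\boldsymbol w_\delta)\ge\boldsymbol w_\delta$. By the characterization above, $\boldsymbol{\tau}(M_\varepsilon)\ge(1-\delta)\boldsymbol{\tau}(M)$ for such $\varepsilon$. Letting $\delta\downarrow0$ gives $\liminf\boldsymbol{\tau}(M_\varepsilon)\ge\boldsymbol{\tau}(M)$.

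Combining the two estimates proves (ii). Part (iii) then follows because $\bar\tau(M)=\boldsymbol\pi^\top\boldsymbol\tau(M)$ is linear in $\boldsymbol\tau$.
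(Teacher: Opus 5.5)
Your proof is correct, and it takes a different route from the paper's, particularly for the lower estimate. The paper uses the monotone iteration $\boldsymbol{x}^{(n)}_M=T^n(M,\boldsymbol{1})\downarrow\boldsymbol{\tau}(M)$ in both directions.

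\textbf{Upper estimate.} The paper uses $\boldsymbol{\tau}(M_\varepsilon)\le \boldsymbol{x}^{(n_\delta)}_{M_\varepsilon}$ together with continuity of a fixed finite iterate in $M$. You instead use compactness of $[0,1]^k$ and the closedness of the fixed-point relation, which is equally valid.

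\textbf{Lower estimate.} The paper adds the hypothesis $M_\varepsilon\ge M$ entrywise. Monotonicity of $T$ in $M$ then gives $\boldsymbol{\tau}(M_\varepsilon)\ge\boldsymbol{\tau}(M)$ directly. That hypothesis is not in the statement of the lemma. It covers the ``replace zero entries by $\varepsilon$'' use in the paper, but not perturbations that also decrease some entries, e.g.\ to keep the row sums fixed.

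Your strict sub-solution $(1-\delta)\boldsymbol{\tau}(M)$ avoids this restriction. It uses strict concavity of $s\mapsto 1-e^{-s}$ and $T(M,\boldsymbol 0)=\boldsymbol 0$ to get a positive margin in every component with $\tau_i(M)>0$. This gives lower semicontinuity of the maximal fixed point for arbitrary entrywise perturbations, so your argument proves the lemma exactly as stated.

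The paper's argument is slightly more elementary, and in the monotone case it gives convergence from above with an explicit comparison through the iterates. Yours costs one short concavity computation and gains full generality.
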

\begin{proof}
\noindent
i) The continuity of $\rho(\cdot)$ follows from the continuity of eigenvalues with respect to the matrix entries,
see Theorem 2.4.9.2 in \cite{RR}.

\medskip
\noindent
ii) Fix $M$ and define the iterations
\[
\boldsymbol{x}_M^{(0)}:=\boldsymbol{1},
\qquad
\boldsymbol{x}_M^{(n+1)}:=T\!\bigl(M,\boldsymbol{x}_M^{(n)}\bigr),
\qquad n\ge 0.
\]
The map $F(M,\cdot)$ is order-preserving on $[0,1]^k$: if $\boldsymbol{x}\le \boldsymbol{y}$ componentwise, then
$F(M,\boldsymbol{x})\le T(M,\boldsymbol{y})$ componentwise, because
$D_{\boldsymbol{\pi}}^{-1}M^\top D_{\boldsymbol{\pi}}\ge 0$ and the componentwise function $z\mapsto 1-\rm{e}^{-z}$ is increasing.
Since $T(M,\boldsymbol{1})\le \boldsymbol{1}$, the sequence $(\boldsymbol{x}_M^{(n)})_{n\ge 0}$ is componentwise non-increasing,
hence it converges componentwise to some $\boldsymbol{x}_M^\infty\in[0,1]^k$.
By continuity of $T(M,\cdot)$, $\boldsymbol{x}_M^\infty$ is a fixed point. Moreover, if $\boldsymbol{y}\in[0,1]^k$ is any fixed point,
then $\boldsymbol{y}\le \boldsymbol{x}_M^{(0)}$ and monotonicity implies $\boldsymbol{y}\le \boldsymbol{x}_M^{(n)}$ for all $n$,
hence $\boldsymbol{y}\le \boldsymbol{x}_M^\infty$. Therefore $\boldsymbol{x}_M^\infty$ is the maximal fixed point, i.e.\
$\boldsymbol{x}_M^\infty=\boldsymbol{\tau}(M)$.
Assume now that $M_\varepsilon\ge M$ entrywise and $M_\varepsilon\to M$ entrywise as $\varepsilon\to 0$. Then for every $\boldsymbol{x}\in[0,1]^k$ we have
$T(M_\varepsilon,\boldsymbol{x})\ge T(M,\boldsymbol{x})$ componentwise, hence
\[
\boldsymbol{x}_{M_\varepsilon}^{(n)} \ge \boldsymbol{x}_M^{(n)} \qquad\text{for all }n,
\]
and in particular $\boldsymbol{\tau}(M_\varepsilon)\ge \boldsymbol{\tau}(M)$.
Fix $\delta>0$. Since $\boldsymbol{x}_M^{(n)}\to \boldsymbol{\tau}(M)$ componentwise, choose $n_\delta$ such that
\[
\boldsymbol{0}\le \boldsymbol{x}_M^{(n_\delta)}-\boldsymbol{\tau}(M)\le \delta\,\boldsymbol{1}
\qquad\text{componentwise.}
\]
For any fixed $\boldsymbol{x}\in[0,1]^k$, the map $M\mapsto T(M,\boldsymbol{x})$ is continuous (entrywise),
since $M\mapsto D_{\boldsymbol{\pi}}^{-1}M^\top D_{\boldsymbol{\pi}}\,\boldsymbol{x}$ is linear in $M$ and the componentwise
exponential is continuous. Therefore, by induction on $n$, for each fixed $n\ge 0$ we have
\[
\boldsymbol{x}_{M_\varepsilon}^{(n)} \longrightarrow \boldsymbol{x}_M^{(n)}
\qquad\text{componentwise as }\varepsilon\to 0.
\]
In particular, this holds for $n=n_\delta$, hence there exists $\varepsilon_\delta>0$ such that for all $\varepsilon\in(0,\varepsilon_\delta)$,
\[
\bigl|\boldsymbol{x}_{M_\varepsilon}^{(n_\delta)}-\boldsymbol{x}_M^{(n_\delta)}\bigr|\le \delta\,\boldsymbol{1}
\qquad\text{componentwise.}
\]
Since $\boldsymbol{\tau}(M_\varepsilon)\le \boldsymbol{x}_{M_\varepsilon}^{(n_\delta)}$ componentwise, we obtain
\[
\boldsymbol{0}\le \boldsymbol{\tau}(M_\varepsilon)-\boldsymbol{\tau}(M)
\le \boldsymbol{x}_{M_\varepsilon}^{(n_\delta)}-\boldsymbol{\tau}(M)
= \bigl(\boldsymbol{x}_{M_\varepsilon}^{(n_\delta)}-\boldsymbol{x}_M^{(n_\delta)}\bigr)
   +\bigl(\boldsymbol{x}_M^{(n_\delta)}-\boldsymbol{\tau}(M)\bigr),
\]
and therefore
\[
\bigl|\boldsymbol{\tau}(M_\varepsilon)-\boldsymbol{\tau}(M)\bigr|
\le 2\delta\,\boldsymbol{1}
\qquad\text{componentwise,}
\]
for all $\varepsilon$ small enough. \\This proves $\boldsymbol{\tau}(M_\varepsilon)\to \boldsymbol{\tau}(M)$ componentwise.

\medskip
\noindent
iii) This follows immediately from (ii) since $\bar\tau(M)=\boldsymbol{\pi}^\top\boldsymbol{\tau}(M)$ is a linear functional
with fixed coefficients $\boldsymbol{\pi}$.
\end{proof}

\begin{lemma}
\label{lem:familyofQ}
Let $\boldsymbol{s}\in\mathbb{R}_+^k$ and $\boldsymbol{u}\in\mathbb{R}_+^k$ satisfy
\[
\sum_{i=1}^k s_i=\sum_{i=1}^k u_i,
\qquad s_k>0,
\]
and consider the constraints
\begin{equation}\label{eq:constraints_Q_lem}
Q\mathbf 1=\mathbf 1,
\qquad
Q^\top \boldsymbol{s}=\boldsymbol{u}.
\end{equation}
Then every $Q\in\mathbb{R}^{k\times k}$ satisfying \eqref{eq:constraints_Q_lem} is uniquely determined by its
upper-left $(k-1)\times(k-1)$ block $\{q_{ij}\}_{i,j\le k-1}$.
Conversely, for any matrix $E\in\mathbb{R}^{(k-1)\times(k-1)}$ satisfying
\begin{align}
&e_{ij}\ge 0, && i,j\le k-1, \label{eq:cond_lem}\\
&\sum_{j=1}^{k-1}e_{ij}\le 1, && i\le k-1, \nonumber\\
&\sum_{i=1}^{k-1}s_ie_{ij}\le u_j, && j\le k-1, \nonumber\\
&\sum_{i=1}^{k-1}u_i -\sum_{i=1}^{k-1}\sum_{j=1}^{k-1}s_i e_{ij} \le s_k .\nonumber
\end{align} 
define $Q(E)$ by
\begin{align}
q_{ij}&=e_{ij}, && i,j\le k-1, \nonumber\\
q_{ik}&=1-\sum_{j=1}^{k-1}e_{ij}, && i\le k-1, \label{eq:QE_def_compact}\\
q_{kj}&=\frac{u_j-\sum_{i=1}^{k-1}s_ie_{ij}}{s_k}, && j\le k-1, \nonumber\\
q_{kk}&=1-\sum_{j=1}^{k-1}q_{kj}. \nonumber
\end{align}
Then $Q(E)$ always satisfies \eqref{eq:constraints_Q_lem}, and every solution of \eqref{eq:constraints_Q_lem}
is of the form $Q(E)$.
\end{lemma}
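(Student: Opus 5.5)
This is linear algebra on the two constraint families $Q\mathbf 1=\mathbf 1$ and $Q^\top\boldsymbol{s}=\boldsymbol{u}$; we count how many entries they leave free and solve for the rest.

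\textbf{Uniqueness given the upper-left block.} Fix the block $\{q_{ij}\}_{i,j\le k-1}$.
\begin{enumerate}
\item[(a)] For each $i\le k-1$, the row condition $\sum_j q_{ij}=1$ determines $q_{ik}=1-\sum_{j\le k-1}q_{ij}$.
\item[(b)] For each $j\le k-1$, the column condition $(Q^\top\boldsymbol{s})_j=\sum_i s_iq_{ij}=u_j$ isolates the term $s_kq_{kj}$. Since $s_k>0$, this gives
\[
q_{kj}=\Bigl(u_j-\sum_{i\le k-1}s_iq_{ij}\Bigr)/s_k .
\]
\item[(c)] The last row condition gives $q_{kk}=1-\sum_{j\le k-1}q_{kj}$.
\end{enumerate}
Hence any solution of \eqref{eq:constraints_Q_lem} coincides with $Q(E)$, where $E$ is its own upper-left block. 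This gives both the uniqueness claim and the statement that every solution is of the form $Q(E)$.

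\textbf{Any $Q(E)$ satisfies \eqref{eq:constraints_Q_lem}.}
\begin{enumerate}
\item[(a)] The definitions of $q_{ik}$ and $q_{kk}$ give all $k$ row sums equal to $1$ by construction.
\item[(b)] The columns $j\le k-1$ of $Q^\top\boldsymbol{s}=\boldsymbol{u}$ hold by the definition of $q_{kj}$.
\item[(c)] Only column $k$ is not imposed directly. Here the mass balance $\sum_i s_i=\sum_i u_i$ is used. Since $Q\mathbf 1=\mathbf 1$, we have $\mathbf 1^\top Q^\top\boldsymbol{s}=\boldsymbol{s}^\top Q\mathbf 1=\boldsymbol{s}^\top\mathbf 1=\sum_i s_i$. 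Subtracting the $k-1$ verified components gives $(Q^\top\boldsymbol{s})_k=\sum_i s_i-\sum_{j\le k-1}u_j=u_k$.
\end{enumerate}
This redundancy of the $k$-th column equation is the only non-mechanical point, and it is where the hypothesis $\sum s_i=\sum u_i$ is essential.

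\textbf{Role of the inequalities \eqref{eq:cond_lem}.} They are exactly the conditions for $Q(E)$ to be entrywise nonnegative, i.e.\ row-stochastic rather than merely satisfying the linear constraints.
\begin{enumerate}
\item[(a)] $e_{ij}\ge 0$ gives $q_{ij}\ge0$ for $i,j\le k-1$.
\item[(b)] $\sum_j e_{ij}\le1$ gives $q_{ik}\ge0$.
\item[(c)] $\sum_i s_ie_{ij}\le u_j$ gives $q_{kj}\ge0$.
\item[(d)] For $q_{kk}$, substitute the formula for $q_{kj}$:
\[
q_{kk}=1-\frac{1}{s_k}\Bigl(\sum_{j\le k-1}u_j-\sum_{i,j\le k-1}s_ie_{ij}\Bigr).
\]
The last inequality in \eqref{eq:cond_lem} gives $q_{kk}\ge0$.
\end{enumerate}
Conversely, a nonnegative solution has its upper-left block satisfying \eqref{eq:cond_lem}, by reading the same four computations backwards. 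So the correspondence $E\leftrightarrow Q(E)$ is a bijection between the polytope defined by \eqref{eq:cond_lem} and the row-stochastic solutions of \eqref{eq:constraints_Q_lem}.

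There is no genuine obstacle. The only care needed is in checking the $k$-th column via the total-mass identity and in keeping the nonnegativity reading of the conditions consistent, since the lemma as stated only asserts the linear constraints.
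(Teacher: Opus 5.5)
Your proposal is correct and follows essentially the same route as the paper's proof. Both solve the row constraints and the first $k-1$ column constraints for the last column and last row, recover the $k$-th component of $Q^\top\boldsymbol{s}=\boldsymbol{u}$ from the identity $\mathbf 1^\top Q^\top\boldsymbol{s}=\boldsymbol{s}^\top Q\mathbf 1=\sum_i s_i=\sum_i u_i$, and read the inequalities \eqref{eq:cond_lem} as entrywise nonnegativity of $Q(E)$.
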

\begin{proof}
Fix any $E\in\mathbb{R}^{(k-1)\times(k-1)}$ and define $Q(E)$ by \eqref{eq:QE_def_compact}.
By construction it is clear that each row of $Q(E)$ sums to one, hence $Q(E)\mathbf 1=\mathbf 1$.
Moreover, for every $j\le k-1$,
\[
(Q(E)^\top \boldsymbol{s})_j
=\sum_{i=1}^{k-1}s_i q_{ij}+s_k q_{kj}
=\sum_{i=1}^{k-1}s_ie_{ij}+s_k\frac{u_j-\sum_{i=1}^{k-1}s_ie_{ij}}{s_k}
=u_j.
\]
Since the sum of all components $\mathbf 1^\top(Q(E)^\top\boldsymbol{s})=(Q(E)\mathbf 1)^\top\boldsymbol{s}=\mathbf 1^\top\boldsymbol{s}
=\mathbf 1^\top\boldsymbol{u}$, the $k$-th component also matches, hence $Q(E)^\top\boldsymbol{s}=\boldsymbol{u}$ and
\eqref{eq:constraints_Q_lem} holds.

Conversely, let $Q$ satisfy \eqref{eq:constraints_Q_lem} and set $E=(q_{ij})_{i,j\le k-1}$.
Then $q_{ik}=1-\sum_{j=1}^{k-1}e_{ij}$ follows from $Q\mathbf 1=\mathbf 1$ for $i\le k-1$.
For $j\le k-1$, the constraint $(Q^\top\boldsymbol{s})_j=u_j$ yields
\[
q_{kj}=\frac{u_j-\sum_{i=1}^{k-1}s_ie_{ij}}{s_k},
\]
and $q_{kk}=1-\sum_{j=1}^{k-1}q_{kj}$ follows from the $k$-th row-sum constraint.
Therefore $Q=Q(E)$, proving both existence and uniqueness of the parametrization.

Finally, $Q(E)\ge 0$ is equivalent to the nonnegativity of its entries. Using \eqref{eq:QE_def_compact},
this is exactly the system \eqref{eq:cond_lem}.
\end{proof}
Note that in the proof of Theorem \ref{thm:tot-fs-row} has been chosen $Q^\star=Q(E)$ where $e_{ij}=q^\star_j=u_j/\sum_{i=1}^ku_i$.

\section{Computation of detailed balance bounds}
\subsection{Recipe to find sharp lower bound for $R_0$ under Detailed Balance}\label{recipe}
Fix $\boldsymbol{\pi}$ and either the column sums $\boldsymbol{c}$ or the row sums $\boldsymbol{r}$. In principle, to bound the reproduction number $R_0$ one must explore the entire set of nonnegative matrices \(M\) satisfying detailed balance \(D_{\boldsymbol{\pi}} M=(D_{\boldsymbol{\pi}} M)^\top\) and compute their spectral radius \(\rho(M)\). A naive numerical strategy discretizes each free entry on an \(m\)-point grid, producing \(m^{\,k(k-1)}\) candidates already prohibitive for \(k>2\) even with \(m=10\). We instead propose a constructive recipe and prove that this task is exactly equivalent to a convex semidefinite program, for which efficient algorithms and software exist, for more details see e.g. \cite{SDP}.

\medskip\noindent
\textbf{Step 1.}  
Define
\[
S \;:=\; D_{\boldsymbol{\pi}}^{1/2} M D_{\boldsymbol{\pi}}^{-1/2}.
\]
From Lemma \ref{lemma:db1}, $S$ is similar to $M$ hence
\[
\rho(M)=\rho(S).
\]
\textbf{Step 2.}  
For each column $j=1,...,k$,
\begin{equation}
c_j = \sum_{i=1}^k m_{ij}
= \sum_{i=1}^k \left( s_{ij}\,\frac{\sqrt{\pi_j}}{\sqrt{\pi_i}} \right).
\label{eq_c_constrain}
\end{equation}
Define 
\[
w_i:=1/\sqrt{\pi_i}\quad \text{and}\quad \tilde c_j:=c_j/\sqrt{\pi_j}.
\]
Then, recalling that S is symmetric for Lemma \ref{lemma:db1}, the above condition in \eqref{eq_c_constrain} is equivalent to
\[
(S\boldsymbol{w})_j = \tilde c_j, \qquad \text{for all $j=1,..,k$}.
\]
Hence, the column constraints in the vector form become
\begin{equation}
\label{aff.constrain}
S\boldsymbol{w} = \tilde{\boldsymbol{c}}.    
\end{equation}
Nonnegativity of $M$ translates to entrywise nonnegativity of $S$:
\[
m_{ij} = s_{ij}\frac{\sqrt{\pi_j}}{\sqrt{\pi_i}} \ge 0
\quad\Longleftrightarrow\quad s_{ij}\ge 0.
\]
\medskip\noindent
\textbf{Step 3.}  
The feasible set in $S$-elements is
\[
\mathcal{S}_c \;=\; \{\,S\in\R^{k\times k} : S=S^\top,\ S\ge 0,\ S\boldsymbol{w}=\tilde{\boldsymbol{c}} \,\}.
\]
On this set we want to minimize $\rho(S)$.
For Lemma \ref{lemma:lmi} the minimization problem is equivalent to the convex semidefinite program below.
\begin{equation}\label{eq:lsp1}
\begin{aligned}
\min_{S,t}\quad & t \\
\text{s.t.}\quad & S\ge 0, \\
& S=S^\top, \\
& S \boldsymbol{w} = \tilde{\boldsymbol{c}}, \\
& t I - S \geq 0,
\end{aligned}
\end{equation}
Note that all the constraints represent convex spaces and therefore the intersection is a convex space.

\medskip\noindent
\textbf{Step 4.}  
The feasible set $\mathcal{S}_c$ is closed.  
It is also bounded: indeed, from $S\boldsymbol{w}=\tilde{\boldsymbol{c}}$ with $w_i>0$ we have
\[
\sum_{i=1}^k w_i s_{ij} = \tilde c_j \quad\text{implies}\quad
0 \le s_{ij} \le \frac{\tilde c_j}{w_i}.
\]
Thus each entry of $S$ is bounded.  
Hence $\mathcal{S}_c$ is compact.  
Since $\rho(S)$ is continuous, the minimum over $\mathcal{S}_c$ exists and it is attained.  
This proves existence of an optimizer $S^\star$ and hence $M^\star=D_{\boldsymbol{\pi}}^{-1/2}S^\star D_{\boldsymbol{\pi}}^{1/2}$.
\begin{remark}
If instead of the column sums $\boldsymbol{c}=M^\top \mathbf{1}$ we fix the row sums $\boldsymbol{r}=M\mathbf{1}$, the same procedure applies after the substitutions
\[
\boldsymbol{w} \;=\; D_{\boldsymbol{\pi}}^{-1/2}\mathbf{1}\quad \text{with}\quad \boldsymbol{v} \;:=\; D_{\boldsymbol{\pi}}^{1/2}\mathbf{1},
\qquad
\tilde{\boldsymbol{c}} \;=\; D_{\boldsymbol{\pi}}^{-1/2}\boldsymbol{c}\quad \text{with}\quad \tilde{\boldsymbol{r}} \;:=\; D_{\boldsymbol{\pi}}^{1/2}\boldsymbol{r}.
\]
Indeed, since $m_{ij}=s_{ij}\sqrt{\pi_j}/\sqrt{\pi_i}$, one has
\[
r_i=\sum_{j=1}^k m_{ij}
=\frac{1}{\sqrt{\pi_i}}\sum_j s_{ij}\sqrt{\pi_j}
=\frac{(S\boldsymbol{v})_i}{\sqrt{\pi_i}}
\quad \text{if and only if}\quad
S\boldsymbol{v}=\tilde{\boldsymbol{r}}.
\]
Therefore the feasible set becomes
\[
\mathcal{S}_r \;=\; \{\,S\in\R^{k\times k} : S=S^\top,\ S\ge 0,\ S\boldsymbol{v}=\tilde{\boldsymbol{r}} \,\}.
\]
and the semidefinite program is identical to \eqref{eq:lsp1} with the single change $S\boldsymbol{w}=\tilde{\boldsymbol{c}} \,\ \text{with}\,\ S\boldsymbol{v}=\tilde{\boldsymbol{r}}$. 
The compactness proof also adapts: from $S\boldsymbol{v}=\tilde{\boldsymbol{r}}$ with $v_j>0$,
\[
\sum_{j=1}^k v_j s_{ij}=\tilde r_i
\ \ \Rightarrow\ \
0\le s_{ij}\le \frac{\tilde r_i}{v_j},
\]
so that $\mathcal{S}_r$ is bounded. Hence the problem is solved in complete analogy to the column-sum case.
\end{remark}

\subsection{Details of the computations in Belgium study}
\label{secA2}
We refer to children and adults as age groups 1 and 2, respectively, and the types $CL, CS, AL, AS$ as types $(1,L), (1,S), (2, L), (2,S)$, respectively.  The $4 \times 4$ extended contact matrix is given by
\begin{equation}
\label{equ:C4by4}
C=\begin{bmatrix} C_{11} & C_{12} \\
 C_{21} & C_{22} 
\end{bmatrix}, 
\end{equation}
where, for $i,j \in \{1,2\}$, 
\[
C_{ij}=\begin{bmatrix} c_{ij}^{LL}&c_{ij}^{LS}\\ c_{ij}^{SL}& c_{ij}^{SS}\end{bmatrix},
\]
with, for example, $c_{ij}^{LS}$ being the average number of daily contacts made by a typical type-$(i,L)$ individuals with type-$(j,S)$ individuals.  

The elements of $C_{ij}$ cannot be estimated from social contact study data but the row sums $c_{ij}^L=c_{ij}^{LL}+c_{ij}^{LS}$ and 
$c_{ij}^S=c_{ij}^{SL}+c_{ij}^{SS}$ can be estimated.  (For each contact in a survey, the age of the contacted individual is recorded but not the individual contacted, so whether a contacted individual is type $L$ or type $S$ is unavailable.)
Using the Belgian social contact study, we obtain the estimates
\[
C^L=\begin{bmatrix}  c_{11}^L & c_{12}^L \\
 c_{21}^L & c_{22}^L 
\end{bmatrix}
=
\begin{bmatrix} 3.1715 & 6.2589 \\
 0.6287 & 5.7262 
\end{bmatrix}
\quad\text{and}\quad
C^S=\begin{bmatrix}  c_{11}^S & c_{12}^S \\
 c_{21}^S & c_{22}^S
\end{bmatrix}
=
\begin{bmatrix} 18.1405 & 11.8257 \\
 4.3245 & 25.1809 
\end{bmatrix}.
\]
Since each age group is split into two equally-sized sub-groups, the vector of community fractions 
\[
\boldsymbol{\pi}=(\pi_{CL}, \pi_{CS}, \pi_{AL}, \pi_{AS})^{\top}=(0.1075, 0.1075, 0.3925, 0.3925)^{\top}.
\]
Note that the $2 \times 2$ contact matrix $C$ at~\eqref{equ:C2by2} is given by $(C^L+C^S)/2$.  Further, the row sums of $C$ are given by corresponding row sums of $C^L$ and $C^S$, yielding
\begin{equation}
\label{equ:vecrowC}
\boldsymbol{r}^C=(r_{CL}, r_{CS}, r_{AL}, r_{AS})^{\top}=(9.4304, 29.9662, 6.3549, 29.5125)^{\top}.
\end{equation}

To parameterize the contact matrix $C$, given by~\eqref{equ:C4by4}, it is convenient to let 
\[
A=\begin{bmatrix} A_{11} & A_{12} \\
 A_{21} & A_{22} 
\end{bmatrix}
=D_{\boldsymbol{\pi}}C.
\]
Note that $A$ is symmetric, so we need only parameterize the sub-matrices $A_{11}, A_{12}, A_{22}$, and that the row sums $a_{ij}^L$ and $a_{ij}^S$ (defined analogously to $c_{ij}^L$ and $c_{ij}^S$ above) are known, since $\boldsymbol{\pi}$ is known.  
Exploiting symmetry,
\[
A_{ij}=\begin{bmatrix} \theta_{ij}&a_{ij}^L-\theta_{ij}\\ a_{ji}^L-\theta_{ij}& a_{ij}^S-(a_{ji}^L-\theta_{ij})\end{bmatrix},\qquad(i,j)=(1,1), (1,2), (2,2),
\]
where $\theta_{ij}$ satisfies the constraint
\begin{equation}
\label{equ:thetaijbounds}
\max(0,a_{ji}^L-a_{ij}^S) \le \theta_{ij} \le \min(a_{ij}^L, a_{ji}^L).
\end{equation}
Recall that we assume that the mean infectious period is one.  For a given transmission probability $p$, the next-generation 
matrix $M=pC=pD_{\boldsymbol{\pi}}^{-1}A$.  The `DB: upper (restricted)' and `DB: lower (restricted)' plots in Figure~\ref{fig:Belgian-bound} are obtained by maximizing and minimizing $R_0$ and $\bar{\tau}$ for $M$, with respect to $\theta_{11}, \theta_{12}, \theta_{22}$, under the constraints given by~\eqref{equ:thetaijbounds}, for which we use the 
MATLAB function \emph{fmincon}; see below for further details.

Suppose now that only $\boldsymbol{\pi}$, detailed balance and the row sums $\boldsymbol{r}^C$, given by~\eqref{equ:vecrowC} are assumed.  Then the row sums of $A$ are given by $\boldsymbol{r}^A=D_{\boldsymbol{\pi}}\boldsymbol{r}^C$.  Give the types CL, CS, AL, AS the labels $1,2,3,4$, respectively, and write $\boldsymbol{r}^A=(r_1^A, r_2^A, r_3^A, r_4^A)^{\top}$.  Then
\[
A=\begin{bmatrix}
r_1^A-\theta_1-\theta_2-\theta_3 & \theta_1 & \theta_2 & \theta_3 \\ \medskip
\theta_1 & r_2^A- \theta_1-\theta_4-\theta_5& \theta_4 & \theta_5\\ \medskip
\theta_2 & \theta_4 & r_3^A-\theta_2-\theta_4-\theta_6 & \theta_6\\ \medskip
 \theta_3& \theta_5& \theta_6&r_4^A-\theta_3-\theta_5-\theta_6
    \end{bmatrix},
\]
where $\boldsymbol{\theta}=(\theta_1, \theta_2,\dots, \theta_6)$ satisfies 
\begin{equation}
\label{equ:thetalb}
\theta_i \ge 0 \qquad (i=1,2,\dots,6)
\end{equation}
and
\begin{align}
\theta_1+\theta_2+\theta_3&\le r_1^A \label{equ:thetaub1}\\
\theta_1+\theta_4+\theta_5 &\le r_2^A \label{equ:thetaub2}\\
\theta_2+\theta_4+\theta_6 & \le r_3^A \label{equ:thetaub3}\\
\theta_3+\theta_5+\theta_6 &\le r_4^A \label{equ:thetaub4}
\end{align}
Thus, we now need to maximize and minimize $R_0$ and $\bar{\tau}$ with respect to $\boldsymbol{\theta}$ under the constraints~\eqref{equ:thetalb}-\eqref{equ:thetaub4}, for which we use again the 
MATLAB function \emph{fmincon}.

The function \emph{fmincon} uses an iterative procedure to find a local minimum of the objective function from a given initial value. For the `DB (restricted)' plots, we use random initial values for $\theta_{11}, \theta_{12}, \theta_{22}$, chosen independently and uniformly from the corresponding interval given in~\eqref{equ:thetaijbounds}.   For the `DB' plots we choose a random initial value for $\boldsymbol{\theta}$ as follows.  Let $Y_1, Y_2, Y_3, ,X_1, X_2, \dots, X_6$ be independent and identically distributed unit-mean exponential random variables.  We set $\theta_i=X_i r_i^A/(Y_1+X_1+X_2+X_3)$ $(i=1,2,3)$.
Provided, $a_2=r_2^A-\theta_1 > 0$, we set $\theta_i=X_i a_2/(Y_2+X_4+X_5)$ $(i=4,5)$. Then, provided $a_3= r_3^A-\theta_2-\theta_4 > 0$, we set $\theta_6=a_3 X_6/(Y_3+X_6)$. Finally, we accept the resulting $\boldsymbol{\theta}$ if $r_4^A-\theta_3-\theta_5-\theta_6$.  If at any stage an inequality is not satisfied we start the whole process again from scratch, continuing until we obtain a $\boldsymbol{\theta}$ that satisfies all four inequalities.  (Note that if the types are relabelled so that 
$r_1^A< r_2^A<r_3^A<r_4^A$ then the inequality $a_2> 0$ is necessarily satisfied.)

Since \emph{fmincon} tries to find a local minimum of an objective function, we use several independent random initial values, as described above, for each optimization problem, taking the minimum or maximum as appropriate.  Note that the bounds for $R_0$ are linear in the transmission probability $p$ and zero when $p=0$, so the numerical optimization needs only to be performed for a single value of $p$.  The bounds for $\bar{\tau}$ must be computed separately for each value of $p$.

\end{appendices}

%\end{linenumbers}
\end{document}